\documentclass[twoside,11pt]{article}

% Any additional packages needed should be included after jmlr2e.
% Note that jmlr2e.sty includes epsfig, amssymb, natbib and graphicx,
% and defines many common macros, such as 'proof' and 'example'.
%
% It also sets the bibliographystyle to plainnat; for more information on
% natbib citation styles, see the natbib documentation, a copy of which
% is archived at http://www.jmlr.org/format/natbib.pdf

\usepackage{amsmath}
\usepackage{float}
\RequirePackage{bm}
\usepackage{natbib}
\usepackage{multirow}
\usepackage{amsmath}
\usepackage{wrapfig}
\usepackage{dsfont}
\usepackage{tabularx}
\usepackage{subcaption}
\usepackage{enumitem}
 
\bibliographystyle{my-plainnat}
\bibpunct{(}{)}{;}{a}{,}{,}
 
\usepackage{jmlr2e}
\usepackage{mkolar_definitions}

\usepackage{mathtools}
\usepackage{color}
\mathtoolsset{showonlyrefs=true}

\graphicspath{{fig/}}
   
%   \]]]]]]]]]]]]]]]]]]]]]]]]]]]]]]]]]]]]]]]]]]]]]]][=============================---

\usepackage{xargs}
\usepackage[colorinlistoftodos,prependcaption,textsize=tiny]{todonotes}
\newcommandx{\unsure}[2][1=]{\todo[linecolor=red,backgroundcolor=red!25,bordercolor=red,#1]{#2}}
\newcommandx{\change}[2][1=]{\todo[linecolor=blue,backgroundcolor=blue!25,bordercolor=blue,#1]{#2}}
\newcommandx{\info}[2][1=]{\todo[linecolor=OliveGreen,backgroundcolor=OliveGreen!25,bordercolor=OliveGreen,#1]{#2}}
\newcommandx{\improvement}[2][1=]{\todo[linecolor=Plum,backgroundcolor=Plum!25,bordercolor=Plum,#1]{#2}}

% Definitions of handy macros can go here

\let\hat\widehat
\let\tilde\widetilde
\def\T{\top}

\newcommand{\vpx}{\varphi(x)}

\newcommand{\vpax}{\varphi_1(x)}
\newcommand{\vpbx}{\varphi_2(x)}
\newcommand{\vpa}{\varphi_1}
\newcommand{\vpb}{\varphi_2}
\newcommand{\vpaxi}[1][]{\varphi_{1#1}(x_i)}
\newcommand{\vpbxi}[1][]{\varphi_{2#1}(x_i)}

% Heading arguments are {volume}{year}{pages}{submitted}{published}{author-full-names}

\usepackage{lastpage}
\jmlrheading{21}{2020}{1-\pageref{LastPage}}{5/19; Revised
3/20}{5/20}{19-383}{Ming Yu, Varun Gupta, and Mladen Kolar}

% Short headings should be running head and authors last names

\ShortHeadings{Simultaneous Inference for Pairwise Graphical Models}{Yu, Gupta, and Kolar}
\firstpageno{1}

\begin{document}

\title{Simultaneous Inference for Pairwise Graphical Models with Generalized Score Matching}

\author{\name Ming Yu \email mingyu@chicagobooth.edu \\
       \name Varun Gupta \email varun.gupta@chicagobooth.edu  \\
       \name Mladen Kolar \email mladen.kolar@chicagobooth.edu  \\
       \addr Booth School of Business \\
       The University of Chicago\\
       Chicago, IL 60637, USA
}

\editor{Jie Peng}

\maketitle

\begin{abstract}%   <- trailing '%' for backward compatibility of .sty file

Probabilistic graphical models provide a flexible yet parsimonious
framework for modeling dependencies among nodes in networks.  There is
a vast literature on parameter estimation and consistent model
selection for graphical models.  However, in many of the applications,
scientists are also interested in quantifying the uncertainty
associated with the estimated parameters and selected models, which
current literature has not addressed thoroughly.  In this paper, we
propose a novel estimator for statistical inference on edge parameters
in pairwise graphical models based on generalized Hyv\"arinen scoring
rule.  Hyv\"arinen scoring rule is especially useful in cases where
the normalizing constant cannot be obtained efficiently in a closed
form, which is a common problem for graphical models, including Ising
models and truncated Gaussian graphical models.  Our estimator allows
us to perform statistical inference for general graphical models
whereas the existing works mostly focus on statistical inference for
Gaussian graphical models where finding normalizing constant is
computationally tractable.  Under mild conditions that are typically
assumed in the literature for consistent estimation, we prove that our
proposed estimator is $\sqrt{n}$-consistent and asymptotically normal,
which allows us to construct confidence intervals and build hypothesis
tests for edge parameters. Moreover, we show how our proposed method
can be applied to test hypotheses that involve a large number of model
parameters simultaneously.  We illustrate validity of our estimator
through extensive simulation studies on a diverse collection of
data-generating processes.

\end{abstract}

\begin{keywords}
generalized score matching, high-dimensional inference, probabilistic graphical models, simultaneous inference
\end{keywords}

\section{Introduction}
\label{sec:introduction}

Undirected probabilistic graphical models are widely used to explore
and represent dependencies between random variables
\citep{Lauritzen1996Graphical}. They have been used in areas ranging
from computational biology to neuroscience and finance.  An undirected
probabilistic graphical model consists of an undirected graph
$G = (V,E)$, where $V = \{1, \ldots, p\}$ is the vertex set and
$E \subset V \times V$ is the edge set, and a random vector
$X = (X_1, \ldots, X_p) \in \Xcal^p \subseteq \RR^P$. Each coordinate
of the random vector $X$ is associated with a vertex in $V$ and the
graph structure encodes the conditional independence assumptions
underlying the distribution of $X$. In particular, $X_a$ and $X_b$ are
conditionally independent given all the other variables if and only if
$(a,b) \not\in E$, that is, the nodes $a$ and $b$ are not adjacent in
$G$.  One of the fundamental problems in statistics is that of
learning the structure of $G$ from {\it i.i.d.}~samples from $X$ and
quantifying the uncertainty of the estimated structure.
\cite{Drton2016Structure} provides a recent review of algorithms for
learning the structure, while \cite{Jankova2018Inference} provides an
overview of statistical inference in Gaussian graphical models.

Gaussian graphical models are a special case of undirected
probabilistic graphical models and have been widely studied in the
machine learning literature. Suppose that $X \sim \Ncal(\mu, \Sigma)$.
In this case, the conditional independence graph is determined by the
pattern of non-zero elements of the inverse of the covariance matrix
$\Omega = \Sigma^{-1} = (\omega_{ab})$. In particular, $X_a$ and $X_b$
are conditionally independent given all the other variables in $X$ if
and only if $\omega_{ab}$ and $\omega_{ba}$ are both zero. This simple
relationship has been fundamental for the development of rich literature
on Gaussian graphical models and has facilitated the development of fast
algorithms and inferential procedures \citep[see, for
example,][]{Dempster1972Covariance, Drton04model, Meinshausen2006High,
  Yuan2007Model, Friedman2008Sparse, rothman08spice, Yuan2010High,
  Sun2012Sparse, Cai2011Constrained}.

In this paper, we consider a more general, but still tractable, class
of pairwise interaction graphical models with densities belonging to
an exponential family
$\Pcal = \{ p_\theta(x) \mid \theta \in \Theta \}$ with natural
parameter space $\Theta$:
\begin{multline}
  \label{eq:logdensity}
  \log p_\theta(x) = 
  \sum_{a \in V} \sum_{k \in [K]}\theta_{a}^{(k)} t_{a}^{(k)}(x_a) \\
  +
  \sum_{(a,b)\in E} \sum_{l \in [L]}\theta_{ab}^{(l)} t_{ab}^{(l)}(x_a,x_b)
  - \Psi(\theta) + \sum_{a\in V}h_a(x_a), 
  \quad x \in \Xcal \subseteq \RR^p.
\end{multline}
The functions $t_{a}^{(k)}$, $t_{ab}^{(l)}$ are the sufficient statistics
and $\Psi(\theta)$ is the log-partition function. We assume throughout
the paper that the support of the densities is either $\Xcal = \RR^p$
or $\Xcal = \RR^p_+$ and $\Pcal$ is dominated by Lebesgue measure on
$\RR^p$. To simplify the notation, for a log-density of the form given
in \eqref{eq:logdensity} we will write
$$
 \log p_\theta(x) = \theta^\top  t(x) - \Psi(\theta) + h(x),
$$
where $\theta \in \RR^s$ and $t(x) : \RR^{p} \mapsto \RR^s$ with
$s = L \cdot {p \choose 2} + p\cdot K$. The natural parameter space has
the form
$\Theta = \{ \theta \in \RR^s \mid \Psi(x) =
\log\int_{\Xcal}\exp(\theta^\top  t(x)dx)<\infty\}$.  Under the model in
\eqref{eq:logdensity}, there is no edge between $a$ and $b$ in the
corresponding conditional independence graph if and only if
$\theta_{ab}^{(1)}=\cdots=\theta_{ab}^{(L)}=0$. The model in
\eqref{eq:logdensity} encompasses a large number of graphical models
studied in the literature as we discuss in Section~\ref{sec:related}.
\citet{Lin2015High} studied estimation of parameters in model
\eqref{eq:logdensity}, however, the focus of this paper, as we discuss
next, is on performing statistical inference---constructing honest
confidence intervals and statistical tests---for parameters in
\eqref{eq:logdensity}.

The focus of the paper is on the inferential analysis about parameters
in the model given in \eqref{eq:logdensity}, as well as the Markov
dependencies between observed variables. Our inference procedure does
not rely on the oracle support recovery properties of the estimator
and is therefore uniformly valid in a high-dimensional regime and
robust to model selection mistakes, which commonly occur in ultra-high
dimensional setting. Our approach is based on Hyv\"arinen
generalized scoring rule estimate of $\theta$ in
\eqref{eq:logdensity}. The same procedure was used in
\citet{Lin2015High}, however, rather than focusing on consistent model
selection, we use the initial estimator to construct a regular linear
estimator \citep{Vaart1998Asymptotic}. We establish Bahadur type
representation for our final regular estimator that is robust to model
selection mistakes and valid for a big class of data generating
distributions. The purpose of establishing a Bahadur representation is
to approximate an estimate by a sum of independent random variables,
and hence prove the asymptotic normality of the estimator for
\eqref{eq:logdensity}, allowing us to conduct statistical inference on
the model parameters \citep[see][]{Bahadur1966note}.  In particular,
we show how to construct confidence intervals for a parameter in the
model that have nominal coverage and also propose a statistical test
for existence of edges in the graphical model with nominal size.
These results complement existing literature, which is focused on
consistent model selection and parameter recovery, as we review in the
next section. Furthermore, we develop a methodology for constructing
simultaneous confidence intervals for all the parameters in the model
\eqref{eq:logdensity} and apply this methodology for testing the
parameters in the differential network\footnote{We adopt the notion
  used in \citet{Li2007Finding} and \citet{Danaher2011Joint} and
  define the differential network as a difference between parameters
  of two graphical models.}.  The main idea here is to use the
Gaussian multiplier bootstrap to approximate the distribution of the
maximum coordinate of the linear part in the Bahadur representation.
Appropriate quantile obtained from the bootstrap distribution is used
to approximate the width of the simultaneous confidence intervals and
the cutoff values for the tests for the parameters of the differential
network.

\subsection{Main Contribution}

This paper makes two major contributions to the literature on
statistical inference for graphical models.  First, compared to
previous work on high-dimensional inference in graphical models
\citep{Ren2013Asymptotic, Barber2015ROCKET, Wang2016Inference,
  Jankova2014Confidence}, this is the first work on statistical
inference in models where computing the log-partition function is
intractable.  Existing works mostly focus on Gaussian graphical models
with a tractable normalizing constant, whereas our method can be
applied to more general models, as we discuss in
Section~\ref{sec:ExpoGM}.  Second, we apply our proposed method to
simultaneous inference on all edges connected to a specific node.  Our
simultaneous inference procedure can be used to
\begin{enumerate}[topsep=0pt,itemsep=-1ex,partopsep=1ex,parsep=1ex]
\item test whether a node is isolated in a graph; that is, whether
  it is conditionally independent with all the other nodes;
\item estimate the support of the graph by setting an appropriate
  threshold on the proposed estimators; and
\item test for the difference between graphical models where we have
  observations of two graphical models with the same nodes and we
  would like to test whether the local connectivity pattern for a specific
  node is the same in the two graphs.
\end{enumerate}

Once again, the existing approaches cannot deal with simultaneous testing with an intractable normalizing constant.
Moreover, most of the existing work impose a sparsity condition on the inverse of Hessian and focus on $L = 1$ only. 
Here we relax the sparsity condition on the inverse Hessian 
and show how to perform inference for a general $L$.

\subsection{Related Work}
\label{sec:related}

Our work straddles two areas of statistical learning which have
attracted significant research of late: model selection and estimation
in high-dimensional graphical models, and high-dimensional inference.
We briefly review the literature most relevant to our work, and refer
the reader to two recent review articles for a comprehensive overview
\citep{Drton2016Structure,Jankova2018Inference}.
\citet{Drton2016Structure} focuses on structure learning in graphical
models, while \citet{Jankova2018Inference} reviews inference in
Gaussian graphical models.

We start by reviewing the literature on learning structure of
probabilistic graphical models.  Much of the research effort has
focused on learning structure of Gaussian graphical models where the
edge set $E$ of the graph $G$ is encoded by the non-zero elements of
the precision matrix $\Omega = \Sigma^{-1}$. The literature here
roughly splits into two categories: global and local methods.  Global
methods typically estimate the precision matrix by maximizing
regularized Gaussian log-likelihood \citep{Yuan2007Model,
  rothman08spice, Friedman2008Sparse, dAspremont2008First,
  Ravikumar2011High, fan09network, Lam2009Sparsistency}, while 
local methods estimate the graph structure by learning 
the neighborhood or Markov blanket of each node separately
 \citep{Meinshausen2006High, Yuan2010High,
  Cai2011Constrained, Liu2012TIGER, Zhao2014Calibrated}. Extensions to
more general distributions in Gaussian and elliptical families are
possible using copulas, as the graph structure within these families
is again determined by the inverse of the latent correlation matrix
\citep{Liu2009Nonparanormal:, Liu2012High, Xue2012Regularized,
  Liu2012Transelliptical, Fan2014High}.

Once we depart from the Gaussian distribution and related families,
learning the conditional independence structure becomes more
difficult, primarily owing to computational intractability of
evaluating the log-partition function.  A computationally tractable
alternative to regularized maximum likelihood estimation is
regularized pseudo-likelihood which was studied in the context of
learning structure of Ising models in
\citet{Hoefling2009Estimation}, \citet{ravikumar09high}, and
\citet{Xue2012Nonconcave}. Similar methods were developed in the study
of mixed exponential family graphical models, where a node's
conditional distribution is a member of an exponential family
distribution, such as Bernoulli, Gaussian, Poisson or
exponential. See \cite{Guo2011Joint}, \cite{Guo2011Asymptotic},
\cite{Lee2012Learning}, \cite{Cheng2013High},
\cite{Yang2012Graphical}, and \cite{Yang2014Mixed} for more details.

More recently, score matching estimators have been investigated for
learning the structure of graphical models in high-dimensions when the
normalizing constant is not available in a closed-form
\citep{Lin2015High,Yu2018Graphical}.  Score matching was first
proposed in \citet{hyvarinen2005estimation} and subsequently extended
for binary models and models with non-negative data in
\citet{Hyvaerinen2007Some}. It offers a computational advantage when
the normalization constant is not available in a closed-form, making
likelihood based approaches intractable, and is particularly appealing
for estimation in exponential families as the objective function is
quadratic in the parameters of interest. \citet{Sun2015Learning}
develop a method based on score matching for learning conditional
independence graphs underlying structured infinite-dimensional
exponential families. \citet{Forbes2013Linear} investigated the use of
score matching for the inference of Gaussian linear models in
low-dimensional settings.  However, despite its power, there have not
been results on inference in high-dimensional models using score
matching. As one of our contributions in this paper, we build on the
prior work on estimation using generalized score matching and develop
an approach to statistical inference for high-dimensional graphical
models. In particular, we construct a novel $\sqrt{n}$-consistent
estimator of parameters in~\eqref{eq:logdensity}.  This is the first
procedure that can obtain a parametric $\sqrt{n}$ rate of convergence for an edge
parameter in a graphical model where computing the normalizing
constant is intractable.

Next, we review the literature on high-dimensional inference, focusing on
work related to high-dimensional undirected graphical models.
\citet{Liu2013Gaussian} developed a procedure that estimates
conditional independence graph from Gaussian observations and controls
false discovery rates asymptotically. \citet{Wasserman2014Berry}
develop confidence guarantees for undirected graphs under minimal
assumptions by developing Berry-Esseen bounds on the accuracy of Normal
approximation. \citet{Ren2013Asymptotic},
\citet{Jankova2014Confidence}, and \citet{Jankova2017Honest} develop
methods for constructing confidence intervals for edge parameters in
Gaussian graphical models, based on the idea of debiasing the $\ell_1$
regularized estimator developed in \citep{Zhang2011Confidence,
  Geer2013asymptotically, Javanmard2013Confidence}. A related approach
was developed for edge parameters in mixed graphical models whose node
conditional distributions belong to an exponential family in
\citet{Wang2016Inference}. \citet{Wang2014Inference} develop
methodology for performing statistical inference in time-varying and
conditional Gaussian graphical models, while \citet{Barber2015ROCKET}
and \citet{Lu2015Posta} develop methods for semi-parametric copula
models.  We contribute to the literature on high dimensional inference
by demonstrating how to construct regular estimators for probabilistic
Graphical models whose normalizing constant is intractable. Our
estimators are robust to model selection mistakes and allows us to
perform valid statistical inference for edge parameters in a large
family of data generating distributions.

Finally, we contribute to the literature on simultaneous inference in
high-dimensional models. \citet{Zhang2014Simultaneous} and
\citet{Dezeure2017High} develop methods for performing simultaneous
inference on all the coefficients in a high-dimensional linear
regression. 
In the same setting,
\citet{Zhao2014General} use a multiplier bootstrap approach to construct
robust simultaneous confidence intervals.
\citet{chang2018confidence} applies it to the
simultaneous inference of Gaussian graphical models.  These procedures
allow for the dimensionality of the vector to be exponential in the
sample size and rely on bootstrap to approximate the quantile of the
test statistic. We extend these ideas to the high dimensional
graphical model setting and show how we can build simultaneous
hypothesis tests on the neighbors of a specific node.

A conference version of this paper was presented in the Annual
Conference on Neural Information Processing Systems 2016
\citep{Yu2016Statistical}. Compared to the conference version, in this
paper we extend the results in the following ways.  First, we extend
the results to include the generalized score matching method
\citep{Yu2018Graphical, Yu2018Generalized} in place of the original
score matching method.  This generalized form of the score matching
method allows us to improve the estimation accuracy and obtain better
inference results for non-negative data. In the conference
  version, we made an assumption that the inverse of the population
  Hessian matrix, see Section~\ref{sec:asympt-norm-estim}, is
  (approximately) sparse. We relax this sparsity condition and develop
  an inference procedure that is valid even if the sparsity condition
  is violated, but the inverse of the Hessian matrix has bounded
  columns in the $\ell_1$ norm.  Moreover, instead of focusing on a
single edge as in the conference version, in this work we propose a
procedure for simultaneous inference for all edges connected to a
specific node. This allows us to build hypothesis tests for a broad
class of applications, including testing of isolated nodes, support
recovery, and testing the difference between two graphical models.
Furthermore, while the conference version focused on the case
 where $L=1$ in~\eqref{eq:logdensity}, here we extend the results to
 a general choice of $L$.  Lastly, we run additional experiments to
demonstrate the effectiveness of our proposed method.

\subsection{Notation}

We use $[n]$ to denote the set $\{1,\ldots,n\}$. For a vector $a \in \RR^n$,
we let ${\rm supp}(a) = \{j\ :\ a_j \neq 0\}$ be the support set (with
an analogous definition for matrices
$A \in \mathbb{R}^{n_1\times n_2}$), $\|a\|_q$, $q \in [1,\infty)$,
the $\ell_q$-norm defined as
$\|a\|_q = (\sum_{i\in[n]} |a_i|^q)^{1/q}$ with the usual extensions
for $q \in \{0,\infty\}$, that is, $\|a\|_0 = |{\rm supp}(a)|$ and
$\|a\|_\infty = \max_{i\in[n]}|a_i|$.
%\unsure[inline]{are we using boldface for vectors or not? We are inconsistent here.}.  
For a vector $x$, $x_{M}$ is a
sub-vector of $x$ with components corresponding to the set $M$, and
$x_{-ab}$ is the sub-vector with component corresponding to edge $\{a,b\}$ omitted. 
%\unsure[inline]{Don't know what component $\{a,b\}$ means? Do you mean omit the components for nodes $a$ and $b$, or for the edge $ab$, or the entire neighborhood of nodes $a$ and $b$?}
For a matrix $A \in \RR^{m \times n}$, denote $\|A\|_q = \sup\{ \|Ax\|_q: x \in \RR^n, \|x\|_q=1\}$ as the induced $\ell_q$ norm. 
In particular, $\|A\|_{\infty }=\max _{1\leq i\leq m}\sum _{j=1}^{n}|a_{ij}|$. 
We also use $\|A\|_{\max} = \max_{jk} |a_{jk}|$ to denote the maximum component of $A$.
We define $\mathbb{E}_n$ as the empirical mean of $n$ samples:
$\mathbb{E}_n[f(x_i,\theta)] = \frac{1}{n} \sum_{i=1}^n
f(x_i,\theta)$. For two sequences of numbers $\{a_n\}_{n=1}^\infty$
and $\{b_n\}_{n=1}^\infty$, we use $a_n = \Ocal(b_n)$, or $a_n \lesssim b_n$ to denote that
$a_n \leq Cb_n$ for some finite positive constant $C$, and for all $n$
large enough. 
We use $a_n \lesssim_P b_n$ to denote that $a_n \lesssim b_n$ happens with high probability.
The notation $a_n = o(b_n)$ is used to denote that
$a_nb_n^{-1}\xrightarrow{n\rightarrow\infty} 0$.  
We denote $a_n \longrightarrow_D \mathcal A$ as convergence in distribution to a fixed distribution $\Acal$ and $a_n \longrightarrow_P a$ as convergence in probability to a constant $a$.
%\unsure[inline]{Here $a_n$ are random variables, $\mathcal A$ is a distribution, and $a$ is a constant?}
We denote
$a \circ b = (a_1b_1, ..., a_pb_p)$ for $a,b \in \RR^p$.  For any
function $f:\mathbb{R}^p \to \mathbb{R}$, we use
$\nabla f(x) = \cbr{ \partial/(\partial x_j) f(x) }_{j \in [p]}$ to
denote the gradient, and
$\Delta f(x) = \sum_{j\in[p]} \partial^2/(\partial x_j^2) f(x)$ to
denote the Laplacian operator on $\RR^p$.  Note that both the gradient
and the Laplacian are \emph{with respect to $x$}.

\subsection{Organization of the Paper}

The remainder of this paper is structured as follows. We begin in
Section \ref{sec:background} with background on exponential family
pairwise graphical model, score matching method, and a brief review of
statistical inference in high dimensional models.  In Section
\ref{sec:methodology} we describe the construction of our
novel estimator for a single edge parameter based on a three-step
procedure, for $L = 1$.  Section \ref{sec:asympt-norm-estim} provides theoretical
results and Section \ref{sec:relaxation_L1} discusses the relaxation of sparsity condition on the inverse of population Hessian matrix.
Section \ref{sec:simultaneous} extends the procedure to
simultaneous inference for all edges connected to some specific node.
In Section \ref{sec:general_L} we extend our results to general $L$.
We provide experimental results for synthetic datasets and a real
dataset in Sections \ref{sec:experiments_synthetic} and
\ref{sec:experiments_real} respectively.  Section \ref{sec:conclusion}
provides conclusion and discussion.

\section{Background}
\label{sec:background}

We begin with reviewing exponential family pairwise graphical models
in Section \ref{sec:ExpoGM}, and then introduce the score matching and
generalized score matching methods in Section
\ref{sec:score-matching}. Finally we provide a brief overview of
statistical inference for high dimensional models in
Section~\ref{sec:inference}.

\subsection{Exponential Family Pairwise Graphical Models}
\label{sec:ExpoGM}

Throughout the paper we focus on the case where
\[
\Pcal = \{ p_\theta(x) \mid \theta \in \Theta \}  
\]
is an exponential family with log-densities given in
\eqref{eq:logdensity}, which frequently appear in graphical
modeling. There are $K$ sets of sufficient statistics
$\{t_a^{(k)}\}_{k \in [K]}$ for each $a \in V$ that depend on the
individual nodes and $L$ sets of sufficient statistics for each $(a,b) \in {V \choose 2}$ that allow for
pairwise interactions of different types. Conditional independence
graph underlying a distribution $p_\theta \in \Pcal$ has no edge
between vertices $a$ and $b$ if and only if
$\theta_{ab}^{(1)} = \ldots = \theta_{ab}^{(L)} = 0$. A special case of the
model given in \eqref{eq:logdensity} are pairwise interaction models
with log-densities
\begin{equation}
  \label{eq:pairwise}
  \log p_\theta(x) =   
  \sum_{(a,b)\in E} \theta_{ab} t_{ab}(x_a,x_b) - \Psi(\theta) + h(x), 
  \quad x \in \Xcal \subseteq \RR^p,
\end{equation}
where $t_{ab}(x_a, x_b)$ are sufficient statistics that depend only on
$x_a$ and $x_b$.  In what follows, we will consider models that either
has the form given in~\eqref{eq:pairwise} or the more general form
given in \eqref{eq:logdensity}.

A number of well-studied distributions have the above discussed
form. We provide some examples below, including examples where the
normalizing constant $\Psi(\theta)$ cannot be obtained in closed-form.

\paragraph{\emph{Gaussian graphical models.}} The most studied example of a
probabilistic graphical model is the case of the Gaussian graphical
model. Suppose that the random variable $X$ follows the centered
multivariate Gaussian distribution with covariance $\Sigma$ and
precision matrix $\Omega = \Sigma^{-1} = (\omega_{ab})$. The
log-density is given as
\begin{equation}
  \label{eq:gaussian_density}
p(x ; \Omega) \propto \exp \cbr{ -\frac 12 x^\top \Omega x },
\end{equation}
 the support of the density is $\Xcal = \RR^{p}$ and the sufficient
statistics take the form $t_{ab}(x_a,x_b) = x_ax_b$.

\paragraph{\emph{Non-negative Gaussian.}}  Our second example of a distribution
with the log-density of the form in \eqref{eq:pairwise} is that of a
non-negative Gaussian random vector. The probability density function
of a non-negative Gaussian random vector $X$ is proportional to that
of the corresponding Gaussian vector given in
\eqref{eq:gaussian_density}, but restricted to the non-negative orthant.
Here the support of the density is $\Xcal = \RR_+^{p}$.  The
conditional independence graph is determined the same way as in the
Gaussian graphical model case through the non-zero pattern of the
elements in the precision matrix $\Omega$. The normalizing constant in
this family has no closed-form and hence maximum likelihood estimation
of $\Omega$ is intractable.

\paragraph{\emph{Normal conditionals.}} Our third example is taken from
\citet{Lin2015High}. See also \citet{AndrewGelman1991Note} and
\citet{Arnold1999Conditional}. Consider the family of distributions
with densities of the form
\begin{equation*}
  p(x ; \Theta^{(1)}, \Theta^{(2)}, \eta, \beta) \propto
  \exp \cbr{
    \sum_{a \neq b} \Theta_{ab}^{(2)}x_a^2x_b^2 +
    \sum_{a \neq b} \Theta_{ab}^{(1)}x_ax_b +
    \sum_{a \in V} \eta_a x_a^2 + 
    \sum_{a \in V} \beta_ax_a },\, x \in \RR^{p},
\end{equation*}
where the matrices $\Theta^{(1)},\Theta^{(2)} \in \RR^{p \times p}$
are symmetric interaction matrices with a zero diagonal. Members of
this family have Normal conditionals, but the densities themselves
need not be unimodal. The conditional independence graph does not
contain an edge between vertices $a$ and $b$ if and only if both
$\Omega_{ab}^{(1)}$ and $\Omega_{ab}^{(2)}$ are equal to zero. In
contrast to the Gaussian graphical models, the conditional dependence
may also express itself in the variances.

\paragraph{\emph{Conditionally specified mixed graphical models.}} In general,
specifying multivariate distributions is difficult, since in a given
problem it might not be clear what class of graphical models to use.
On the other hand, specifying univariate distributions is an easier
task.  \citet{Chen2013Selection} and \citet{Yang2013Graphical}
explored ways of specifying multivariate joint distributions via
univariate exponential families. Consider a conditional density of
the form
\begin{equation}
  \label{eq:conditional_model}
  p(x_a \mid (x_b, b\neq a) ; \theta_a) =
  \exp\cbr{f_a(x_a) + \sum_{b \neq a} \theta_{ab}B_a(x_a)B_b(x_b) - \Psi_a(\eta_a)},
  \quad x_a \in \Xcal_a,
\end{equation}
where $\eta_a = \eta_a(\theta_a, f_a, (x_b)_{b\neq a})$ and
$B_a(\cdot)$ are known functions for each $a \in V$.  Suppose that for
a random vector $X$, each coordinate $X_a$ follows the conditional
density of the form in \eqref{eq:conditional_model} with
$\theta_{ab} = \theta_{ba}$ for all $a, b \in V$. Then
\citet{Chen2013Selection} and \citet{Yang2013Graphical} showed that
there exists a joint distribution of $X$ compatible with the
conditional densities and that it is of the form
\begin{equation*}
  p(x ; \Theta) \propto \exp \cbr{
    \sum_{a \in V} f_a(x_a) + \frac12 \sum_{a\in V}\sum_{b \neq a} \theta_{ab} B_a(x_a)B_b(x_b)
  }, \quad x \in \Xcal.
\end{equation*}
In particular, the joint density above is of the form given in
\eqref{eq:logdensity}, with pairwise interaction sufficient statistics
given as $t_{ab}(x_a, x_b) = B_a(x_a)B_b(x_b)$. When the support of
the distribution is $\Xcal = \RR^p$ or $\Xcal = \RR_+^p$, the
parameters of the distribution can be efficiently estimated using
score matching. In the case of unknown function $B_a(\cdot)$,
\citet{Suggala2017Expxorcist} explored nonparametric estimation via
basis expansion and fitted parameters using
pseudo-likelihood. Developing a valid statistical inference procedure
for this nonparametric setting is beyond the scope of the current
work.

As an example of a conditionally specified model, that we will return to
later in the paper, consider exponential graphical models where the
node-conditional distributions follow an exponential distribution.
For a random vector $X$ described by an exponential graphical model, the density
function is given by
\begin{equation*}
  p(x ; \Theta) \propto \exp \cbr{ -\sum_{a \in V} \theta_ax_a - \sum_{a \neq b} \theta_{ab}x_ax_b },
  \quad x \in \RR^p_+.
\end{equation*}
Note that the variable takes only non-negative values. To ensure that
the distribution is valid and normalizable, the natural parameter
space $\Theta$ consists of matrices whose elements are
positive. Therefore, one can only model negative dependencies via the
exponential graphical model.

\paragraph{\emph{Exponential square-root graphical model. }}
As our last example, consider the exponential square-root graphical model 
\citep{Inouye2016Square} with density function given by
\begin{equation*}
  p(x ; \eta, K) \propto \exp \cbr{ -\sqrt{x}^\top K \sqrt{x} + 2\eta^\top\sqrt{x} },
  \quad x \in \RR^p_+.
\end{equation*}
This square-root graphical model is a multivariate generalizations of
univariate exponential family distributions that can capture the
positive dependency among nodes.  Specifically, it assumes only a mild
condition on the parameter matrix, but allows for almost arbitrary
negative and positive dependencies. We refer to
\citet{Inouye2016Square} for details on parameter estimation with
nodewise regressions and likelihood approximation methods.

\subsection{Score Matching}
\label{sec:score-matching}

In this section we briefly review the score matching method proposed
in \cite{hyvarinen2005estimation,Hyvaerinen2007Some} and the
generalized score matching for non-negative data proposed in
\cite{Yu2018Graphical}. 

\subsubsection{Score Matching}
\label{sec:score_matching}

A scoring rule $S(x,Q)$ is a real-valued function that quantifies the
accuracy of $Q \in \Pcal$ being the distribution from which an
observed realization $x \in \Xcal$ may have been sampled.  There are a
large number of scoring rules that correspond to different decision
problems \cite{Parry2012Proper}. Given $n$ independent realizations of
$X$, $\{x_i\}_{i\in[n]}$, one finds optimal score estimator
$\hat Q \in \Pcal$ that minimizes the empirical score
\begin{equation}
  \label{eq:score_minimization}
  \hat Q = \arg\min_{Q \in \Pcal} \EE_n\sbr{ S(x_i, Q) }.
\end{equation}

When $\Xcal = \RR^p$ and $\Pcal$ consists of twice differentiable
densities with respect to Lebesgue measure, the Hyv\"arinen scoring
rule \citep{hyvarinen2005estimation} is given as
\begin{equation}
  \label{eq:score_matching}
  S(x, Q) = \frac 12 \big\| \nabla \log q(x) \big\|_2^2 + \Delta \log q(x), 
\end{equation}
where $q$ is the density of $Q$ with respect to Lebesgue measure on
$\Xcal$.  We would like to emphasize that this gradient and Laplacian
are \emph{with respect to $x$}. In this way we get rid of the
normalizing constant which does not depend on $x$.  This scoring rule
is convenient for learning models that are specified in an
unnormalized fashion or whose normalizing constant is difficult to
compute. The score matching rule is proper \citep{Dawid2007geometry},
that is, $\EE_{X \sim P} S(X, Q)$ is minimized over $\Pcal$ at
$Q = P$. Suppose the density $q$ of $Q \in \Pcal$ is twice
continuously differentiable and satisfies
\[
  \EE_{X \sim P} \norm{\nabla \log q(X)}_2^2 < \infty, \qquad
  \text{for all $P, Q \in \Pcal$}
\]
and 
\[
  q(x) \text{ and } \norm{\nabla q(x)}_2 \text{ tend to zero as $x$
    approaches the boundary of $\Xcal$ }.
\]
Then the Fisher divergence between $P, Q \in \Pcal$,
\[
  D(P, Q) = \int p(x) \|\nabla \log q(x) - \nabla \log p(x)\|_2^2 d x,
\]
where $p$ is the density of $P$, is induced by the score matching rule
\citep{hyvarinen2005estimation}.
The gradients in the equation above can be thought of as gradients
with respect to a hypothetical location parameter, evaluated at the
origin \citep{hyvarinen2005estimation}.

For a parametric exponential family
$\Pcal = \{p_\theta \mid \theta \in \Theta\}$ with densities given in
\eqref{eq:logdensity}, minimizing \eqref{eq:score_minimization} with
the scoring rule in \eqref{eq:score_matching} can be done in a closed
form \citep{hyvarinen2005estimation,Forbes2013Linear}.  An estimator
$\hat \theta$ obtained in this way can be shown to be asymptotically
consistent \citep{hyvarinen2005estimation}, however, in general it
will not be efficient \citep{Forbes2013Linear}.

\subsubsection{Generalized Score Matching for Non-Negative Data}

The score matching method in Section \ref{sec:score_matching} does not work for non-negative data, since the assumption that $q(x)$ and $||\nabla q(x)||_2$ tend to 0 at the boundary breaks down. 
To solve this problem, \citet{Hyvaerinen2007Some} proposed a generalization of the score
matching approach to the case of non-negative data.

When
$\Xcal = \RR^p_+$ the non-negative score matching loss (analogous to the Fisher divergence $D(P,Q)$) is defined as
\begin{equation*}
\label{eq:score_matching_nonnegative}
J_+(P,Q) = \int_{\RR^p_+} p(x) \cdot \big\| \nabla\log p(x) \circ x - \nabla\log q(x) \circ x \big\|_2^2 dx.
\end{equation*}
The scoring rule for non-negative data that induces $J_+(P,Q)$ is given as
\begin{equation}
  \label{eq:score_matching_nonnegative:sample}
S_+(x, Q) = {\sum_{a \in V} \left[2x_a \frac{\partial \log q(x)}{\partial x_a} + x_a^2\frac{\partial^2 \log q(x)}{\partial x_a^2} + \frac{1}{2}x_a^2 \left(\frac{\partial \log q(x)}{\partial x_a}\right)^2 \right]}.
\end{equation}
For exponential families, the non-negative score matching loss again
can be obtained in a closed form and the estimator is consistent and
asymptotically normal under suitable conditions \citep{Hyvaerinen2007Some}.

\citet{Yu2018Graphical} proposed the generalized score matching for
non-negative data to improve the estimation efficiency of the
procedure based on the scoring rule in
\eqref{eq:score_matching_nonnegative:sample}.  Let
$\ell_1, ..., \ell_p: \RR_+ \to \RR_+$ be positive and differentiable
functions and set
\[
  \ell(x) = \big( \ell_1(x_1), \ldots, \ell_p(x_p) \big).
\]
The
generalized $\ell$-score matching loss is defined as
\begin{equation*}  
J_\ell(P,Q) = \int_{\RR^p_+} p(x) \cdot \big\| \nabla\log p(x) \circ \ell^{1/2}(x) - \nabla\log q(x) \circ \ell^{1/2}(x) \big\|_2^2 dx,
\end{equation*}
where $\ell^{1/2}(x) = \big( \ell_1^{1/2}(x_1), \ldots, \ell_p^{1/2}(x_p) \big)$. 
Suppose the following regularity conditions are satisfied
\begin{equation}
  \label{eq:condition:partial_integration}
\begin{aligned}
\lim_{x_j \to \infty} p(x) \ell_j(x_j) \nabla_j \log q(x) = 0 ~~~~ \forall x_{-j} \in \RR^{p-1}_{+}, ~\forall p \in \Pcal_{+}, \\
\lim_{x_j \to 0} p(x) \ell_j(x_j) \nabla_j \log q(x) = 0 ~~~~ \forall x_{-j} \in \RR^{p-1}_{+}, ~\forall p \in \Pcal_{+}, \\
\EE_{X \sim \Pcal_{+}} \Big[ \| \nabla\log q(X) \circ \ell^{1/2}(X) \|_2^2 \Big] < +\infty, \\
\EE_{X \sim \Pcal_{+}} \Big[ \| (\nabla\log q(X) \circ \ell(X))' \|_1 \Big] < +\infty.
\end{aligned}
\end{equation}
Under the condition~\eqref{eq:condition:partial_integration}, the
scoring rule corresponding to the generalized $\ell$-score matching
loss is given as
\begin{equation*}
\label{eq:score_matching_nonnegative:sample_l}
  S_\ell(x, Q) = {\sum_{a \in V} \left[ \ell'_a(x_a) \frac{\partial \log q(x)}{\partial x_a} + \ell_a(x_a)\frac{\partial^2 \log q(x)}{\partial x_a^2} + \frac{1}{2}\ell_a(x_a) \left(\frac{\partial \log q(x)}{\partial x_a}\right)^2 \right]}.
\end{equation*}
The regularity condition~\eqref{eq:condition:partial_integration} is
required for applying integration by parts and Fubini-Tonelli theorem in order to show consistency of the score-matching estimator.

Note that by choosing $\ell_j(x) = x^2$, for all $j$, one recovers the
original score matching formulas for non-negative data
in~\eqref{eq:score_matching_nonnegative:sample}.  The advantage of
this generalized score matching rule is that by choosing an
increasing, but slowly growing $\ell(x)$ (for example,
$\ell(x) = \log(x+1)$), one does not need to estimate high moments of
the underlying distribution, which leads to better practical
performance and improved theoretical guarantees. See
\citet{Yu2018Graphical} for details.

\subsubsection{Score matching for probabilistic graphical models} 

Score matching has been successfully applied in the context of
probabilistic graphical models. \citet{Forbes2013Linear} studied score
matching to learn Gaussian graphical models with symmetry
constraints. \citet{Lin2015High} proposed a regularized score matching
procedure to learn conditional independence graph in a
high-dimensional setting by minimizing
\[
  \EE_n\sbr{\overline S(x_i, \theta)} + \lambda \|\theta\|_1,
\]
where the loss function $\overline S(x_i, \theta)$ is either
$S(x_i, Q_\theta)$ defined in \eqref{eq:score_matching} or
$S_+(x_i, Q_\theta)$ defined in
\eqref{eq:score_matching_nonnegative:sample}. For Gaussian models,
$\ell_1$-norm regularized score matching  is a simple, yet efficient
method, which coincides with the method in \cite{Liu2015Fast}.
\citet{Yu2018Graphical} improved on the approach of
\citet{Lin2015High} and studied regularized generalized $\ell$-score
matching of the form
\[
  \EE_n\sbr{S_\ell(x_i, Q_\theta)} + \lambda \|\theta\|_1.
\]
Applied to data generated from a multivariate truncated normal
distribution, the conditional independence graph can be recovered with
the same number of samples that are needed for recovery of the
structure of a Gaussian graphical model.  \citet{Sun2015Learning}
develop a score matching estimator for learning the structure of
nonparametric probabilistic graphical models, extending the work on
estimation of infinite-dimensional exponential families
\citep{Sriperumbudur2013Density}.  In Section \ref{sec:methodology},
we present a new estimator for components of $\theta$
in~\eqref{eq:logdensity} that is consistent and asymptotically normal,
building on \citet{Lin2015High} and \citet{Yu2018Graphical}.

\subsection{Statistical Inference}
\label{sec:inference}

We briefly review how to perform statistical inference for low
dimensional parameters in a high-dimensional model.  In many
statistical problems, the unknown parameter $\beta \in \RR^p$ can be
partitioned as $\beta = (\alpha, \eta)$, where $\alpha$ is a scalar of
interest and $\eta$ is a $(p - 1)$ dimensional nuisance parameter. Let
$\beta^* = (\alpha^*, \eta^*)$ denote the true unknown parameter.  In
a high-dimensional setting, where the sample size $n$ is much smaller
than the dimensionality $p$ of the parameter $\beta$, it is common to
impose structural assumptions on $\beta^*$. For example in several
applications, it is common to assume that the true parameter $\beta^*$
is sparse. Indeed, we will work under this assumption as well.

Let us denote the empirical negative log-likelihood  by
$$
\Lcal(\beta) = \frac 1n \sum_{i=1}^n \Lcal_i(\beta), 
$$
where $\Lcal_i(\beta)$ is the negative log-likelihood for the
$i^{th}$ observation. Let
$I = \EE\sbr{\nabla^2 \Lcal(\beta)}$ denote the information
matrix and denote
the partition of $I$ corresponding to $\beta = (\alpha, \eta)$ as
\begin{equation}
I = 
 \begin{pmatrix}
  I_{\alpha\alpha} & I_{\alpha\eta} \\
  I_{\eta\alpha} & I_{\eta\eta}
 \end{pmatrix}.
\end{equation}
The partial information matrix
of $\alpha$ is denoted as
$I_{\alpha|\eta} = I_{\alpha\alpha} - I_{\alpha\eta} I_{\eta\eta}^{-1}
I_{\eta\alpha}$.

Consider for the moment a low-dimensional setting.  In order to
perform statistical inference about $\alpha^*$, one can use the
{\it profile partial score function} defined as 
\[
U(\alpha) = \nabla_\alpha \Lcal\big( \alpha, \hat\eta(\alpha) \big), 
\]
where $\hat\eta(\alpha) = \arg\min_\eta \Lcal(\alpha, \eta)$
is the maximum partial likelihood estimator for $\eta$ with a fixed
parameter $\alpha$.  Under the null hypothesis that
$\alpha^* = \alpha^0$, we have that \citep{Vaart1998Asymptotic}
\[
\sqrt n U\rbr{\alpha^0} \longrightarrow_D N(0, I_{\alpha|\eta}^*).
\]
Therefore, one can reject the null hypothesis for large values of
$U\rbr{\alpha^0}$. However, in a high-dimensional setting, the
estimator $\hat\eta(\alpha)$ is no longer $\sqrt{n}$-consistent and we
have to modify the approach above. In particular, we will show how to
modify the profile partial score function to allow for valid inference
in a high-dimensional setting based on a sparse estimator of
$\hat\eta(\alpha)$.

Without loss of generality, assume that $\alpha^0 = 0$.  For any
estimator $\tilde \eta$, Taylor's expansion theorem gives
\begin{equation}
\label{eq:Taylor}
\sqrt n \nabla_\alpha \Lcal(0, \tilde \eta) =
\sqrt n \nabla_\alpha \Lcal(0, \eta^*) + \sqrt n \nabla_{\alpha\eta} \Lcal(0, \eta^*) \cdot (\tilde \eta - \eta^*) + \textsf{rem}, 
\end{equation}
where \textsf{rem} is the remainder $o(\tilde \eta-\eta^*)$ term. The first term
$\sqrt n \nabla_\alpha \Lcal(0, \eta^*)$ in \eqref{eq:Taylor}
converges to a normal distribution under suitable assumptions using
the central limit theorem (CLT).  The distribution of the second term,
however, is in general intractable to obtain. This is due to the fact
that the distribution of $\tilde \eta$ depends on the selected model.
Unless we are willing to assume stringent and untestable conditions
under which it is possible to show that the true model can be
selected, the limiting distribution of $\tilde \eta$ cannot be
estimated even asymptotically \citep{Leeb2007Can}.  To overcome this
issue, one needs to modify the profile partial score function, so that
its limiting distribution does not depend on the way the nuisance
parameter is estimated.

\cite{Ning2014General} introduced the following decorrelated score function
\begin{equation*}
  U(\alpha, \eta) =
  \nabla_\alpha \Lcal(\alpha, \eta)
  - w^T \nabla_\eta \Lcal(\alpha, \eta), 
\end{equation*}
where $w = I_{\alpha\eta}I_{\eta\eta}^{-1}$.  The decorrelated score
function $U(\alpha, \eta)$ is uncorrelated with the nuisance score
functions $\nabla_\eta \Lcal(\alpha, \eta)$ and, therefore,
its limiting distribution will not depend on the model selection
mistakes incurred while estimating $\eta^*$. In particular,
$U(\alpha^0, \tilde \eta)$ is indeed asymptotically normally
distributed under the null hypothesis, as long as $\tilde \eta$ is a
good enough estimator of $\eta^*$, but not necessarily
$\sqrt{n}$-consistent estimator. Based on the asymptotic normality of
the decorrelated score function, we can then build confidence
intervals for $\alpha^*$ and perform hypothesis testing.

In practice, the vector $w$ is unknown and needs to be estimated.  A
number of methods have been proposed for its estimation in the
literature.  For example, \citet{Ning2014General} use a Dantzig
selector-like method, \citet{Belloni2012Inference} proposed the double
selection method, while \citet{Zhang2011Confidence},
\citet{Geer2013asymptotically}, and \citet{Javanmard2013Confidence}
use a lasso based estimator. 
See also \citet{Dezeure2017High}, \citet{Zhang2014Simultaneous} for simultaneous inference, \citet{Taylor2014Exact}, \citet{yang2016selective} for post selective inference, \citet{li2019statistical}, \citet{cao2019estimation}, and \citet{cao2019synthetic} for for synthetic control, etc.
In this paper, we adopt the double
selection procedure of \citet{Belloni2012Inference}. Details will be
given in Section \ref{sec:methodology}.

\section{Methodology}
\label{sec:methodology}

In this section, we present a new procedure that constructs a
$\sqrt{n}$-consistent estimator of an element $\theta_{ab}$ of
$\theta$. Our procedure involves three steps that we detail below.  We
start by introducing some additional notation and then describe the
procedure for the case where $\Xcal = \RR^p$. Extension to
non-negative data is given at the end of the section.
Throughout this section we consider $L = 1$ only, so that the
  parameter of interest $\theta_{ab}$ is a scalar. Extensions to
  general $L$ is discussed later in Section \ref{sec:general_L}.

For fixed indices $a, b \in [p]$, let
\[
  q^{ab}_\theta(x) := q^{ab}_\theta(x_a, x_b \mid x_{-ab})
\]
be the conditional density of $(X_a, X_b)$ given $X_{-ab} =
x_{-ab}$. In particular,
\begin{equation}
\label{eq:conditional_log_density}
  \log q^{ab}_\theta(x) = \dotp{\theta^{ab}}{\vpx} 
  - \Psi^{ab}(\theta, x_{-ab})  
  + h^{ab}(x),
\end{equation}
where $\theta^{ab} \in \RR^{s'}$, with $s' = 2K+2p-3$, is the part
of the vector $\theta$ corresponding to
$\left\{ \theta_{a}^{(k)}, \theta_b^{(k)} \right\}_{k\in[K]}$,
$\cbr{\theta_{ac},\theta_{bc}}_{c\in-ab}$, and $\theta_{ab}$; and
$\vpx = \varphi^{ab}(x) \in \RR^{s'}$ is the corresponding vector of
sufficient statistics $\left\{t_a^{(k)}(x_a), t_b^{(k)}(x_b)\right\}_{k\in[K]}$, $\left\{ t_{ac}(x_a,x_c), t_{bc}(x_b,x_c) \right\}_{c\in-ab}$, and $t_{ab}(x_a,x_b)$.  Here $\Psi^{ab}(\theta, x_{-ab})$ is the
log-partition function of the conditional distribution and
$h^{ab}(x) = h_a(x_a) + h_b(x_b)$.  
Let $\nabla_{ab}$ and $\Delta_{ab}$ be the gradient and Laplacian operators,
respectively, with respect to $x_a$ and $x_b$ defined as:
\begin{align*}
\nabla_{ab} f(x) & = \Big( (\partial/\partial x_a) f(x),
  (\partial/\partial x_b) f(x) \Big)^\top \in \RR^2, \\
 \Delta_{ab} f(x) & = \Big( (\partial^2/ \partial x_a^2) +
  (\partial^2/ \partial x_b^2) \Big) f(x).  
\end{align*}

With this notation, we introduce the following scoring rule
\begin{equation}
\begin{aligned}
  \label{eq:conditional_score}
  S^{ab}(x, \theta) 
  =  
    \frac 12    \big\|\nabla_{ab} \log q^{ab}_{\theta}(x) \big\|_2^2 
    + \Delta_{ab} \log q^{ab}_{\theta}(x) 
  = \frac 12 \theta^\top \Gamma(x) \theta + \theta^\top g(x) + c(x),
\end{aligned}
\end{equation}
where the constant term $c(x) = \frac{1}{2} \|\nabla h^{ab}(x) \|^2 + \Delta h^{ab}(x) $, and
\begin{align*}
  \Gamma(x) = \vpax\vpax^\top + \vpbx\vpbx^\top \quad\text{ and } \quad
  g(x) = \vpax h_1^{ab}(x) + \vpbx h_2^{ab}(x) + \Delta_{ab}\vpx
\end{align*}
with $\vpa = (\partial/\partial x_a) \varphi$,
$\vpb = (\partial/\partial x_b) \varphi$,
$h_1^{ab} = (\partial/\partial x_a) h^{ab}$, and
$h_2^{ab} = (\partial/\partial x_b) h^{ab}$.

This scoring rule is related to the one in \eqref{eq:score_matching},
however, rather than using the density $q_\theta$ in evaluating the
parameter vector, we only consider the conditional density
$q_\theta^{ab}$.  We will use this conditional scoring rule to create
an asymptotically normal estimator of an element $\theta_{ab}$. Our
motivation for using this estimator comes from the fact that the
parameter $\theta_{ab}$ can be identified from the conditional
distribution of $(X_a, X_b) \mid X_{M_{ab}}$ where
\[
  M_{ab} := \{ c \mid (a,c) \in E \text{ or } (b,c) \in E \}
\]
is the Markov blanket of $(X_a, X_b)$. Furthermore, the optimization
problems arising in steps 1-3 below can be solved much more
efficiently, as the scoring rule in~\eqref{eq:conditional_score}
involves fewer parameters.

We are now ready to describe our procedure for estimating $\theta_{ab}$,
which proceeds in three steps. 

\paragraph{\emph{Step 1:}} We find a pilot estimator of $\theta^{ab}$ by solving the 
following program 
\begin{equation}
  \label{eq:estimation}
\begin{aligned}
  \hat \theta^{ab} 
  & =  \arg\min_{\theta \in \RR^{s'}} \ \EE_n\sbr{S^{ab}(x_i, \theta)} +  \lambda_{1} \norm{\theta}_1,
\end{aligned}  
\end{equation}
where $\lambda_1$ is a tuning parameter. Let
$\hat M_1 = {\rm supp}(\hat \theta^{ab}) := \{ (c,d) \mid \hat \theta^{ab}_{cd}
\neq 0 \} $.

Since we are after an asymptotically normal estimator of
$\theta_{ab}$, one may think that it is sufficient to find
$\tilde \theta^{ab} = \arg\min\{\EE_n\sbr{S^{ab}(x_i, \theta)} \mid
{\rm supp}(\theta) \subseteq \hat M_1 \}$ and appeal to results of
\citet{Portnoy1988Asymptotic}, who has established asymptotic normality 
for $M$-estimators with increasing number of parameters. Unfortunately, this is not the
case. Since $\tilde \theta$ is obtained via a model selection
procedure, it is irregular and its asymptotic distribution cannot be
estimated \citep{Leeb2007Can,Poetscher2009Confidence}. Therefore, we
proceed to create a regular estimator of $\theta_{ab}$ in steps 2 and
3. The idea is to create an estimator $\tilde \theta_{ab}$ that is
insensitive to first-order perturbations of other components of
$\tilde \theta^{ab}$, which we consider as nuisance components. The
idea of creating an estimator that is robust to perturbations of
nuisance has been recently used in \citet{Belloni2012Inference},
however, the approach goes back to the work of
\citet{Neym1959Optimal}.

\paragraph{\emph{Step 2:}} Let $\hat \gamma^{ab}$ be a minimizer of 
\begin{equation}
  \label{eq:estimation:step2}
  \begin{aligned} 
     \frac 12 \EE_n[
      (\vpaxi[,ab]-\vpaxi[,-ab]^\top \gamma)^2 + 
      (\vpbxi[,ab]-\vpbxi[,-ab]^\top \gamma)^2 
    ]
    + \lambda_{2} \norm{\gamma}_1,
  \end{aligned}
\end{equation}
where $\lambda_2$ is a tuning parameter. Let
$\hat M_2 = {\rm supp}(\hat \gamma^{ab}) := \{ (c,d) \mid \hat
\gamma^{ab}_{cd} \neq 0 \} $.  The intuition here is that the vector
$(1, -\hat \gamma^{ab,\top})^\top $ approximately computes a row, up
to a constant, of the inverse of the Hessian in \eqref{eq:estimation}.

\paragraph{\emph{Step 3:}}
Let $\tilde M = \{(a,b)\} \cup \hat M_1 \cup \hat M_2$.
We obtain our estimator as a solution to the following 
program
\begin{equation}
  \label{eq:estimation:step3}
\begin{aligned}
  \tilde \theta^{ab} 
  & = \arg\min_\theta \ \EE_n\sbr{ S^{ab}(x_i, \theta) }
  \qquad \text{s.t.}\quad {\rm supp}(\theta) \subseteq \tilde M.
\end{aligned}  
\end{equation}
Our estimator of $\theta_{ab}$ is the coordinate $ab$ of
$\tilde \theta^{ab}$---which we denote as $\tilde \theta_{ab}$.  Motivation for this
procedure will be clear from the proof of Theorem~\ref{thm:main} given
in the next section.

\paragraph{\emph{Extension to non-negative data.}}

For non-negative data, the procedure is similar. In place of the score
rule in \eqref{eq:conditional_score}, we will use a conditional score
rule based on the generalized $\ell$-score rule. We define the
following scoring rule
\begin{equation}
\label{eq:S_l_nonnegative}
S^{ab}_\ell(x,\theta) = \frac{1}{2} \theta^\top  \Gamma_\ell(x) \theta +
\theta^\top  g_\ell(x)
\end{equation}
with
$$ 
\Gamma_\ell(x) = \ell_a(x_a) \cdot \vpax\vpax^\top  + \ell_b(x_b) \cdot \vpbx\vpbx^\top  
$$
and
\begin{equation*}
\begin{aligned}
 g_\ell(x) = 
 \ell_a(x_a) \vpax h_1^{ab}(x) + \ell_b(x_b) \vpbx h_2^{ab}(x) &+ \ell_a(x_a) \varphi_{11}(x) + \ell_b(x_b) \varphi_{22}(x) \\
 &\qquad + \ell'_a(x_a) \vpax + \ell'_b(x_b) \vpbx.
\end{aligned}
\end{equation*}
Here $\varphi_{11} = (\partial^2/\partial x_a^2) \varphi$, and
$\varphi_{22} = (\partial^2/\partial x_b^2) \varphi$. Now we can
define 
\begin{equation}
\label{eq:tilde_phi_nonnegative}
\tilde \varphi_1 = \ell^{1/2}_a(x_a) \vpa \quad \text{and}\quad \tilde \varphi_2 = \ell^{1/2}_b(x_b) \vpb.
\end{equation}
Then
$\Gamma_\ell(x) = \tilde \varphi_1(x)\tilde \varphi_1(x)^\top + \tilde
\varphi_2(x)\tilde \varphi_2(x)^\top $, which is of the same form as
\eqref{eq:conditional_score} with $\tilde \varphi_1$ and
$\tilde \varphi_2$ replacing $\vpa$ and $\vpb$, respectively.  Thus
our three-step procedure for non-negative data can be written as
follows.  For notation consistency, we omit the subscript $\ell$ on
the estimator $\theta$ and support $M$.

\paragraph{\emph{Step 1:}} We find a pilot estimator of $\theta^{ab}$ by solving 
\begin{equation}
  \label{eq:estimation_nonnegative}
\begin{aligned}
  \hat \theta^{ab} 
  & =  \arg\min_{\theta \in \RR^{s'}} \ \EE_n\sbr{S_\ell^{ab}(x_i, \theta)} +  \lambda_{1} \norm{\theta}_1,
\end{aligned}  
\end{equation}
where $\lambda_1$ is a tuning parameter and $S_\ell^{ab}$ is defined in \eqref{eq:S_l_nonnegative}. Let
$\hat M_1 = {\rm supp}(\hat \theta^{ab})$.

\paragraph{\emph{Step 2:}} Let $\hat \gamma^{ab}$ be a minimizer of 
\begin{equation}
  \label{eq:estimation:step2_nonnegative}
  \begin{aligned} 
     \frac 12 \EE_n \big[
      (\tilde \varphi_{1,ab}(x_i) - \tilde \varphi_{1,-ab}(x_i)^\top \gamma)^2 + 
      (\tilde \varphi_{2,ab}(x_i) - \tilde \varphi_{2,-ab}(x_i)^\top \gamma)^2 
    \big]
    + \lambda_{2} \norm{\gamma}_1,
  \end{aligned}
\end{equation}
where $\lambda_2$ is a tuning parameter and $\tilde\varphi_1, \tilde\varphi_2$ are defined in \eqref{eq:tilde_phi_nonnegative}. Let
$\hat M_2 = {\rm supp}(\hat \gamma^{ab})$.

\paragraph{\emph{Step 3:}}
Let $\tilde M = \{(a,b)\} \cup \hat M_1 \cup \hat M_2$.
We obtain our estimator as a solution to the following 
program
\begin{equation}
  \label{eq:estimation:step3_nonnegative}
\begin{aligned}
  \tilde \theta^{ab} 
  & = \arg\min_\theta \ \EE_n\sbr{ S_\ell^{ab}(x_i, \theta) }
  \qquad \text{s.t.}\quad {\rm supp}(\theta) \subseteq \tilde M.
\end{aligned}  
\end{equation}
Our estimator of $\theta_{ab}$ is the coordinate $ab$ of
$\tilde \theta^{ab}$---which we denote as $\tilde \theta_{ab}$.

\section{Asymptotic Normality of the Estimator}
\label{sec:asympt-norm-estim}

In this section, we outline the main theoretical properties of our
estimator.  We start by providing high-level conditions that allow us
to establish properties of each step in the procedure.

\paragraph{\emph{Assumption {\bf M}.}} We are given $n$ i.i.d. samples
$\{x_i\}_{i\in[n]}$ from $p_{\theta^*}$ of the form in
\eqref{eq:logdensity}.  Let
\begin{equation}
  \label{eq:estimation:step2:true}
  \begin{aligned}
    \gamma^{ab,*} &= \arg\min_{\gamma} \
    \EE[
      (\vpaxi[,ab]-\vpaxi[,-ab]^\top \gamma)^2 + 
      (\vpbxi[,ab]-\vpbxi[,-ab]^\top \gamma)^2 
    ]
  \end{aligned}
\end{equation}
and 
$$
\eta_{1i} = \vpaxi[,ab]-\vpaxi[,-ab]^\top \gamma^{ab,*} ~~~\text{and}~~~ \eta_{2i} = \vpbxi[,ab]-\vpbxi[,-ab]^\top \gamma^{ab,*} ~~~\text{for}~ i \in [n].
$$
We assume that the parameter vector $\theta^*$ is sparse
with $|{\rm supp}(\theta^{ab,*})| \ll n$; and the vector $\gamma^{ab,*}$ is sparse with
$|{\rm supp}(\gamma^{ab,*})| \ll n$.  

Let $m = |{\rm supp}(\theta^{ab,*})| \vee |{\rm supp}(\gamma^{ab,*})|$.
The assumption {\bf M} supposes that the parameter to be estimated is
sparse, which makes estimation in the high-dimensional setting
feasible. An extension to the approximately sparse parameter is
possible but technically cumbersome, and does not provide additional
insights into the problem. One of the benefits of using the
conditional score to learn parameters of the model is that the sample
size will only depend on the size of ${\rm supp}(\theta^{ab,*})$ and
not on the sparsity of the whole vector $\theta^*$ as in
\cite{Lin2015High}. The second part of the assumption states that the
inverse of the population Hessian is approximately sparse, which is a
reasonable assumption for a number of models, since the Markov
blanket of $(X_a, X_b)$ is small under the sparsity assumption on
$\theta^{ab,*}$. We relax the sparsity assumption in Section~\ref{sec:relaxation_L1}.

The vector $\gamma^{ab,*}$ is determined by the model
  \eqref{eq:conditional_log_density} and parameter $\theta^*$, and is
  therefore not a free parameter.  For the Gaussian graphical model,
  it can be shown that the sparsity of $\theta^{ab,*}$ implies the
  sparsity of $\gamma^{ab,*}$. That is, assumption {\bf M} holds when
  the columns of the precision matrix are sparse.  For a general
  model, it may not be easy to explicitly verify the exact sparsity of
  $\gamma^{ab,*}$, since the calculation of $\gamma^{ab,*}$ involves
  calculation of possibly intractable moments, especially when using
  generalized score matching with $\ell(x) = \log(x+1)$ for
  non-negative data.  For normal conditionals and exponential
  graphical model, we verify numerically (in Section
  \ref{sec:experiments_synthetic}) that the sample version of
  $\gamma^{ab,*}$ behaves approximately like a sparse vector when $n$
  is large enough.  These indicate that assumption {\bf M} is
  reasonable, at least in an approximately sparse version.  For
  general models, the sparsity condition on $\gamma^{ab,*}$ could be
  violated and, therefore, we discuss how to relax it in Section
  \ref{sec:relaxation_L1}.

\vspace{1mm}
Our next condition assumes that the Hessian in \eqref{eq:estimation}
and \eqref{eq:estimation:step2} is well conditioned.

\paragraph{\emph{Assumption {\bf SE}.}} Let
\[
\phi_{-}(s, A) = \inf\cbr{ \delta^\top  A \delta / \norm{\delta}_2^2
  \mid 1 \leq \norm{\delta}_0 \leq s}
\]
and
\[
\phi_{+}(s, A) = \sup\cbr{ \delta^\top  A \delta / \norm{\delta}_2^2
  \mid 1 \leq \norm{\delta}_0 \leq s}
\]
denote the minimal and maximal $s$-sparse eigenvalues of a
semi-definite matrix $A$, respectively. We assume
\[
\phi_{\min} 
  \leq \phi_{-}(m \cdot \log n, \EE\sbr{\Gamma(x_i)}) 
  \leq \phi_{+}(m \cdot \log n, \EE\sbr{\Gamma(x_i)})
  \leq \phi_{\max},
\]
where $0 < \phi_{\min} \leq \phi_{\max} < \infty$.

Assumption {\bf SE} imposes the sparse eigenvalue condition on
  the population quantity.  A lower bound on the population Hessian is
  required even in a low dimensional setting in order to prove
  asymptotic normality of an estimator.  See, for example,
  \citet[][]{Forbes2013Linear} where the population Hessian is assumed
  to be invertible. An upper bound on the Hessian matrix is also
  commonly assumed in the literature on graphical models and
  high-dimensional inference \citep[see, for
  example,][]{Yang2013Graphical,Belloni2013Least}.  We use the upper
  bound on the Hessian to control the size of the estimated support in
  steps 1 and 2 of the procedure.

For Gaussian graphical model, assumption {\bf SE} is satisfied
  with non-degenerate covariance matrix.  For general models,
  assumption {\bf SE} puts restrictions on the model parameter in a
  way that is hard to handle explicitly.  Note that related work
  imposes stronger assumption on the sample Fisher information matrix
  directly. See, for example, conditions (C1) and (C2) in \citet{Yang2013Graphical}. 

For the upper bound of the sparse eigenvalue, we remark that
  the mean of $\varphi(x)$ could be non-zero. For the Gaussian
  graphical model, if there is a non-zero mean $\mu$, then the
  components of $\varphi_1(x)$ and $\varphi_2(x)$ would instead be
  $x - \mu$. Therefore the sparse eigenvalue would not explode.  In
  practice, we subtract the empirical mean and only need to consider the
  centered case.  For other models, existing works assume boundedness
  of the first and second order moments of all the components of $x$.
  See Condition (C3) in \citet{Yang2013Graphical}.

 With assumption {\bf SE} on the population quantity, the
  following lemma, adopted from Corollary 4 in
  \cite{Belloni2013Least}, quantifies the sparse eigenvalues of the
  sample quantity $\EE_n\sbr{\Gamma(x_i)}$.

\begin{lemma}
\label{lemma:sample_sparse_eigenvalue}
Suppose assumption {\bf SE} is satisfied. Suppose there exist $K_n$ such that $\varphi_1(x_i)$ and $\varphi_2(x_i)$ are bounded: $\sup_i \| \varphi_1(x_i) \|_\infty \leq K_n$ and $\sup_i \| \varphi_2(x_i) \|_\infty \leq K_n$ a.s. If the sample size satisfies 
\begin{equation*}
K_n^2 \cdot m\log{p} \cdot \log^2(m\log{p}) \cdot \log{n} \cdot \log{(p \vee n)} = o(n \phi_{\min}^2 / \phi_{\max}),
\end{equation*}
then the event 
\begin{equation*}
\Ecal_{\rm SE} = \cbr{
\frac{\phi_{\min}}{2}
  \leq \phi_{-}\big(m \cdot \log n, \EE_n\sbr{\Gamma(x_i)}\big) 
  \leq \phi_{+}\big(m \cdot \log n, \EE_n\sbr{\Gamma(x_i)}\big)
  \leq 2 \phi_{\max} }
\end{equation*}
holds with probability at least $1-o(1)$.
\end{lemma}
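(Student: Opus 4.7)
The plan is to derive the conclusion as a direct consequence of Corollary 4 in \citet{Belloni2013Least}, after verifying its hypotheses in our setting. Concretely, $\Gamma(x_i) = \varphi_1(x_i)\varphi_1(x_i)^\top + \varphi_2(x_i)\varphi_2(x_i)^\top$ is an i.i.d.\ sum of rank-two positive semidefinite matrices whose entries are uniformly bounded by $K_n^2$, and the population mean $\EE[\Gamma(x_i)]$ has all $(m\log n)$-sparse eigenvalues lying in $[\phi_{\min}, \phi_{\max}]$ by Assumption~{\bf SE}. The task thus reduces to a uniform deviation bound for $s$-sparse principal sub-matrices of $\EE_n[\Gamma(x_i)] - \EE[\Gamma(x_i)]$ with $s = m \log n$: if this supremum is at most $\phi_{\min}/2$, then Weyl's inequality applied on each $s$-sparse principal submatrix yields the claimed sandwich $\phi_{\min}/2 \leq \phi_-(s, \EE_n[\Gamma]) \leq \phi_+(s, \EE_n[\Gamma]) \leq 2\phi_{\max}$.

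To obtain this uniform deviation, I would first fix a support $S$ with $|S| = s$ and apply matrix Bernstein to $\EE_n[\Gamma_{SS}] - \EE[\Gamma_{SS}]$. Each summand has operator norm at most $2 s K_n^2$, and the matrix variance is controlled by $2 s K_n^2 \phi_{\max}$ (using $X_i \preceq 2 s K_n^2 I$ to dominate $\EE[X_i^2]$ by $2sK_n^2 \,\EE[X_i]$), yielding a deviation of order $K_n \sqrt{s \phi_{\max} \log s / n} + s K_n^2 \log s / n$ for that fixed $S$. A union bound over all $\binom{p}{s}$ supports, using $\log \binom{p}{s} \leq s \log(ep/s)$, then forces a sample size of order $K_n^2 \cdot s \log p \cdot \phi_{\max} / \phi_{\min}^2$, recovering the leading $K_n^2 \cdot m \log p \cdot \log n$ term in the stated hypothesis. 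Combined with Assumption~{\bf SE}, this gives the event $\Ecal_{\rm SE}$ with probability $1-o(1)$.

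The main obstacle is matching the precise poly-logarithmic factors $\log^2(m \log p) \cdot \log(p \vee n)$ that appear in the hypothesis, which are sharper than what the naive union-bound route above produces. These refined factors come from a generic-chaining / Rudelson-type argument for the supremum of the empirical process indexed by sparse unit vectors, as carried out in \citet{Belloni2013Least}. Consequently, the cleanest execution is simply to invoke their Corollary 4 after verifying boundedness of $\varphi_j(x_i)$ and the sparse-eigenvalue bounds on $\EE[\Gamma(x_i)]$---both of which are already in hand---rather than re-derive the chaining argument from scratch.
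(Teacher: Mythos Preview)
Your proposal is correct and matches the paper's approach exactly: the paper states the lemma as ``adopted from Corollary 4 in \cite{Belloni2013Least}'' and provides no further argument, so directly invoking that corollary after checking the boundedness and population sparse-eigenvalue hypotheses is precisely what is intended. Your additional discussion of the matrix-Bernstein-plus-union-bound route and its logarithmic mismatch is accurate context, but the paper itself does not re-derive anything and simply cites the result.
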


 Lemma \ref{lemma:sample_sparse_eigenvalue} ensures that the
  sparse eigenvalues of the sample quantity $\EE_n\sbr{\Gamma(x_i)}$
  are well-behaved provided that $\varphi_1(x_i)$ and $\varphi_2(x_i)$
  can be upper bounded, and the sample size is reasonably large. The
  scale of the upper bound $K_n$ depends on the sufficient statistics
  $\varphi(x)$, and can be verified for concrete models.  For example,
  for the Gaussian graphical model, a standard result on the Gaussian
  tail bound gives $K_n = C\cdot(\log n + \log p)^{1/2}$ with high
  probability.  As another example, Proposition 4 in
  \cite{Yang2013Graphical} shows that, under mild conditions,
  $K_n = C\cdot(\log n + \log p)$ with high probability when the
  sufficient statistics of the conditional density are given by
  $x_a, x_b$ and $x_{a}x_{b}$, which includes a wide range of
  applications, such as exponential graphical model, and Poisson
  graphical model.  For models with more general sufficient
  statistics, we can modify the proof of Proposition 4 in
  \cite{Yang2013Graphical} to obtain the corresponding rate on $K_n$,
  under suitable assumptions.

\vspace{2mm}
Let $r_{j\theta} = \norm{\hat \theta^{ab} - \theta^{ab,*}}_j$ and
$r_{j\gamma} = \norm{\hat \gamma^{ab} - \gamma^{ab,*}}_j$, for
$j\in\{1,2\}$, be the rates of estimation in steps 1 and 2,
respectively. Under the assumption {\bf SE}, on the event
$$
 \Ecal_{\theta} = \cbr{ \norm{\EE_n\sbr{\Gamma(x_i) \theta^{ab,*} + g(x_i)}
  }_\infty \leq \frac{\lambda_1}{2} },
$$
we have that $r_{1\theta} \lesssim m\lambda_1/\phi_{\min}$ and
$r_{2\theta} \lesssim c_2\sqrt{m}\lambda_1/\phi_{\min}$. 
Similarly, on the
event
$$
 \Ecal_{\gamma} = \cbr{ \norm{\EE_n\sbr{
      \eta_{1i}\vpaxi[,-ab]+\eta_{2i}\vpbxi[,-ab] }}_\infty \leq
  \frac{\lambda_2}{2} },
$$
we have that $r_{1\gamma} \lesssim m\lambda_2/\phi_{\min}$ and
$r_{2\gamma} \lesssim \sqrt{m}\lambda_2/\phi_{\min}$, using results of
\cite{negahban2010unified}.   In order to ensure that
  $\Ecal_{\theta}$ and $\Ecal_{\gamma}$ hold with high-probability,
  one needs to choose appropriate $\lambda_1$ and $\lambda_2$.  This
  calculation is specific to the model at hand.  For example, if the
  vectors
  \begin{equation}
    \label{eq:sub_gaussian:example}
    \Gamma(x_i) \theta^{ab,*} + g(x_i)
    \quad\text{and}\quad
    \eta_{1i}\vpaxi[,-ab]+\eta_{2i}\vpbxi[,-ab]
  \end{equation}
  have sub-Gaussian components, then by taking
  $\lambda_1, \lambda_2 \propto \sqrt{\log p/n}$, the events
  $\Ecal_{\theta}$ and $\Ecal_{\gamma}$ hold with probability at least
  $1 - c_1p^{-c_2}$ \citep{Yang2013Graphical, negahban2010unified}.
  For other distributions, we may need to choose larger $\lambda_1$
  and $\lambda_2$.  See also Lemma 9 in \citet{Yang2013Graphical}. 

\vspace{3mm}
The following result establishes a Bahadur representation for
$\tilde \theta_{ab}$.
 
\begin{theorem}
  \label{thm:main}
  Suppose that assumptions {\bf M} and {\bf SE} hold. Define $w^*$ with $w^{*}_{ab} = 1$ and
$w^{*}_{-ab} = -\gamma^{ab,*}$, where $\gamma^{ab,*}$ is given in the
assumption {\bf M}. On the event
  $\Ecal_\gamma \cap \Ecal_\theta$, we have that 
  \begin{equation}
    \label{eq:bahadur}
    \begin{aligned}
\sqrt{n}\cdot \rbr{\tilde \theta_{ab} - \theta_{ab}^{*}} 
& = - \sigma_n^{-1} 
\cdot\sqrt{n} \EE_n \sbr{w^{* \top}\rbr{\Gamma(x_i)\theta^{ab,*} + g(x_i)} } 
 + \Ocal\rbr{\phi_{\max}^2\phi_{\min}^{-4} \cdot \sqrt{n}\lambda_1\lambda_2  m },    
\end{aligned}
\end{equation}
  where $\sigma_n = \EE_n\sbr{\eta_{1i}\vpaxi[,ab] + \eta_{2i}\vpbxi[,ab]}$.
\end{theorem}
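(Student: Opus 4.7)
The plan is to turn the KKT conditions of Step~3 into a linear equation for $v := \tilde\theta^{ab} - \theta^{ab,*}$ and then project out the $(a,b)$-coordinate using the population direction $w^* = (1, -\gamma^{ab,*})$. Since $\tilde\theta^{ab}$ is a free minimizer of the \emph{quadratic} scoring loss over vectors supported on $\tilde M$, its first-order condition is the exact linear system $[\EE_n\Gamma(x_i)]_{\tilde M,\tilde M}\tilde\theta^{ab}_{\tilde M} + [\EE_n g(x_i)]_{\tilde M} = 0$; no Taylor remainder appears. On $\Ecal_\theta \cap \Ecal_\gamma$, Lemma~\ref{lemma:sample_sparse_eigenvalue} together with standard lasso support-size control yield $|\tilde M|\lesssim m$, and because $\tilde M = \{(a,b)\}\cup\hat M_1 \cup\hat M_2$ absorbs the estimated supports of both $\theta^{ab,*}$ and $w^*$, I would work on the sub-event where ${\rm supp}(\theta^{ab,*}) \cup {\rm supp}(w^*) \subseteq \tilde M$. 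Subtracting $[\EE_n\Gamma]_{\tilde M,\tilde M}\theta^{ab,*}_{\tilde M}$ from both sides then gives
\[
  [\EE_n\Gamma(x_i)]_{\tilde M,\tilde M}\,v_{\tilde M} \;=\; -\EE_n[\Gamma(x_i)\theta^{ab,*} + g(x_i)]_{\tilde M}.
\]

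The next step is to left-multiply by $(w^*_{\tilde M})^\top$ in order to single out $v_{ab}$. The key algebraic fact is the population identity $w^{*\top}\EE[\Gamma(x_i)] = \sigma^* e_{(a,b)}^\top$, which is the defining first-order condition~\eqref{eq:estimation:step2:true} for $\gamma^{ab,*}$. Its sample counterpart is tailor-made for the analysis: by definition $(w^{*\top}\EE_n\Gamma)_{ab} = \sigma_n$, whereas for every $j\neq(a,b)$ a direct calculation (using $w^{*\top}\varphi_1 = \eta_1$ and $w^{*\top}\varphi_2 = \eta_2$) gives $(w^{*\top}\EE_n\Gamma)_j = \EE_n[\eta_{1i}\vpaxi[,j] + \eta_{2i}\vpbxi[,j]]$, and the $\ell_\infty$ norm of this vector is bounded by $\lambda_2/2$ on $\Ecal_\gamma$. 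Using $w^*_{\tilde M^c}=0$, the left-multiplication yields
\[
  \sigma_n v_{ab} + \sum_{j \in \tilde M,\,j\neq(a,b)} (w^{*\top}\EE_n\Gamma)_j\,v_j \;=\; -w^{*\top}\EE_n[\Gamma(x_i)\theta^{ab,*} + g(x_i)],
\]
which, after dividing by $\sigma_n$ and multiplying by $\sqrt n$, is exactly the Bahadur representation~\eqref{eq:bahadur} with the remainder being $\sqrt n\,\sigma_n^{-1}\sum_{j\neq(a,b)}(w^{*\top}\EE_n\Gamma)_j v_j$.

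Bounding this remainder reduces to three estimates: (i)~$|(w^{*\top}\EE_n\Gamma)_j|\leq\lambda_2/2$ for $j\neq(a,b)$ from $\Ecal_\gamma$; (ii)~the refit $\ell_1$ bound $\|v\|_1\lesssim m\lambda_1/\phi_{\min}$, obtained by applying the lower sparse-eigenvalue bound of Lemma~\ref{lemma:sample_sparse_eigenvalue} to the linear system above together with $\|\EE_n[\Gamma\theta^{ab,*}+g]\|_\infty\leq\lambda_1/2$ on $\Ecal_\theta$; and (iii)~the lower bound $\sigma_n\gtrsim\phi_{\min}$ coming from the Schur-complement identity $\sigma^*=1/([\EE\Gamma]^{-1})_{ab,ab}$ and the sparse-eigenvalue bounds. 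H\"older's inequality then gives $\big|\sum_{j\neq(a,b)}(\cdots)_j v_j\big|\lesssim m\lambda_1\lambda_2/\phi_{\min}$, and once the sparse-eigenvalue ratios $\phi_{\max}/\phi_{\min}$ that enter $|\tilde M|$ (via the lasso support-size control) and the refit analysis are propagated carefully, the remainder meets the advertised rate $\phi_{\max}^2\phi_{\min}^{-4}\sqrt n\,\lambda_1\lambda_2 m$.

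The main obstacle, I expect, is justifying the support inclusion ${\rm supp}(\theta^{ab,*})\cup{\rm supp}(w^*)\subseteq\tilde M$ under only the sparsity stated in Assumption~\textbf{M}, since strict lasso support recovery normally requires a supplementary beta-min or irrepresentability hypothesis. If the inclusion fails, one is left with ``omitted-variable'' contributions $[\EE_n\Gamma]_{\tilde M,\tilde M^c}\theta^{ab,*}_{\tilde M^c}$ and residual terms from $w^*_{\tilde M^c}$ that must be absorbed into the remainder using the $\ell_1$ rates $r_{1\theta},r_{1\gamma}$ and maximum-norm concentration of $\EE_n\Gamma - \EE\Gamma$; an alternative route is to carry out the projection with $\hat w=(1,-\hat\gamma^{ab})$ (which is supported in $\tilde M$ by construction) and then transfer from $\hat w$ to $w^*$ using $r_{1\gamma}$ and $\|\EE_n[\Gamma\theta^{ab,*}+g]\|_\infty\leq\lambda_1/2$. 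A secondary, more clerical difficulty is tracking the sparse-eigenvalue constants carefully enough to land on the precise exponent $\phi_{\max}^2\phi_{\min}^{-4}$ of the remainder rather than some looser power.
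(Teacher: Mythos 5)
Your main line of attack---turn the Step~3 KKT system into an exact linear equation for $v=\tilde\theta^{ab}-\theta^{ab,*}$ and project with $w^*$, using the fact that $(w^{*\top}\EE_n[\Gamma])_{ab}=\sigma_n$ while the off-$(a,b)$ coordinates are controlled by $\lambda_2/2$ on $\Ecal_\gamma$---is the same mechanism as the paper's proof, and your error budget (Hölder with $\|v\|_1\lesssim m\lambda_1/\phi_{\min}$ against $\lambda_2/2$, plus sparse-eigenvalue bookkeeping) matches the paper's Lemmas~\ref{lem:L1}--\ref{lem:L3}. However, the version you write down first does not close under Assumption~\textbf{M} alone, for exactly the reason you flag: nothing guarantees ${\rm supp}(\theta^{ab,*})\cup{\rm supp}(w^*)\subseteq\tilde M$, since the lasso steps only control estimation error, not support recovery, and no beta-min or irrepresentability condition is assumed. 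Without that inclusion, $(w^*_{\tilde M})^\top[\cdot]_{\tilde M}\neq w^{*\top}[\cdot]$ and the omitted-variable terms are not obviously negligible, so the "sub-event" you propose to condition on cannot be shown to have high probability.

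The "alternative route" you sketch at the end is in fact the paper's actual proof: it multiplies the KKT identity by $\tilde w=(1,-\tilde\gamma^{ab}_{\tilde M\setminus(a,b)})$, which is supported in $\tilde M$ by construction so that $\tilde w^\top(\EE_n[\Gamma]\tilde\theta^{ab}+\EE_n[g])=0$ holds exactly, and then transfers from $\tilde w$ to $w^*$ via the decomposition $L_1+L_2+L_3+L_4=0$, with $L_1,L_2$ (the transfer cost, controlled by the $\ell_1/\ell_2$ rates of $\tilde w-w^*$ and $\tilde\theta^{ab}-\theta^{ab,*}$ together with the sparse spectral norm) supplying the dominant $\phi_{\max}^2\phi_{\min}^{-4}\lambda_1\lambda_2 m$ remainder, and $L_3$ producing $\sigma_n\,(\tilde\theta_{ab}-\theta^*_{ab})$ plus a term bounded by $\tilde r_{1\theta}\lambda_2/2$ on $\Ecal_\gamma$. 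So the proposal is essentially correct provided you commit to that second route rather than the support-inclusion one; your explicit lower bound $\sigma_n\gtrsim\phi_{\min}$ is a point the paper glosses over when dividing through by $\sigma_n$, and is a worthwhile addition.
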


Theorem~\ref{thm:main} is deterministic in nature. It establishes a
representation that holds on the event
$\Ecal_\gamma\cap\Ecal_\theta\cap\Ecal_{\rm SE}$, which in many cases
holds with overwhelming probability. We will show that under suitable
conditions the first term converges to a normal distribution. The
following assumption is a regularity condition needed even in a low
dimensional setting for asymptotic normality of the score matching
estimator \citep{Forbes2013Linear}.

\paragraph{Assumption {\bf R}.}
$\EE_{q^{ab}}\sbr{\norm{\Gamma(X_a,X_b,x_{-ab})\theta^{ab,*}}^2}$ and
$\EE_{q^{ab}}\sbr{\norm{g(X_a,X_b,x_{-ab})}^2}$ are finite for all
values of $x_{-ab}$ in the domain.

\vspace{3mm} Theorem~\ref{thm:main} and Lemma~\ref{lem:L4}
(Appendix~\ref{sec:technical_proofs}) together give the following
corollary:

\begin{corollary}
\label{corollary:normality}
  Suppose that the conditions of Theorem~\ref{thm:main} hold. In
  addition, suppose the assumption {\bf R} holds,
  %$(m\log{p})^2/n = o(1)$ and
  $\sqrt{n} \lambda_1\lambda_2 m = o(1)$ and 
  $\PP\rbr{\Ecal_\gamma\cap\Ecal_\theta\cap\Ecal_{\rm SE}} \rightarrow 1$.
  Then we have 
  $$
  \sqrt{n} (\tilde \theta_{ab} - \theta_{ab}^*)
  \longrightarrow_D N(0, V_{ab}) , 
  $$
  where
  $V_{ab} = \rbr{\EE\sbr{\sigma_n}}^{-2}\cdot\Var\rbr{w^{* \top}\rbr{\Gamma(x_i)\theta^{ab, *} + g(x_i)}}$ and
  $\sigma_n$ is as in Theorem~\ref{thm:main}.
\end{corollary}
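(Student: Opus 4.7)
The plan is to combine Theorem~\ref{thm:main} with the classical central limit theorem and Slutsky's lemma. The Bahadur representation in \eqref{eq:bahadur} cleanly decomposes $\sqrt{n}(\tilde\theta_{ab} - \theta_{ab}^*)$ into a linear statistic plus a remainder; my task is to show (i) the remainder is $o_P(1)$, (ii) the linear statistic is asymptotically normal, and (iii) the normalizer $\sigma_n$ converges in probability to a deterministic limit so that Slutsky applies.

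First I would handle the remainder. Since by assumption $\PP(\Ecal_\gamma \cap \Ecal_\theta \cap \Ecal_{\rm SE}) \to 1$, I may restrict attention to this event. On this event, the sparse-eigenvalue constants $\phi_{\max}$ and $\phi_{\min}$ are bounded and bounded away from zero, and the hypothesis $\sqrt{n}\lambda_1 \lambda_2 m = o(1)$ turns the explicit $\Ocal\bigl(\phi_{\max}^2 \phi_{\min}^{-4} \sqrt{n}\lambda_1\lambda_2 m\bigr)$ error into $o_P(1)$.

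Next I would verify asymptotic normality of the linear term. The key observation is that, because $\theta^{*}$ is the minimizer of the population score-matching loss (of which $S^{ab}$ is the conditional analogue), the first-order condition yields
\begin{equation*}
\EE\bigl[\Gamma(x_i)\theta^{ab,*} + g(x_i)\bigr] = 0,
\end{equation*}
so the random variables $Z_i := w^{*\top}\bigl(\Gamma(x_i)\theta^{ab,*} + g(x_i)\bigr)$ are i.i.d.\ with mean zero. Assumption {\bf R} combined with Lemma~\ref{lem:L4} (invoked in the statement) gives finite variance of $Z_i$, using that $w^*$ has only $1 + |{\rm supp}(\gamma^{ab,*})|$ nonzero coordinates with controlled magnitudes so the dot product is square-integrable. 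The Lindeberg--L\'evy CLT then yields
\begin{equation*}
\sqrt{n}\,\EE_n[Z_i] \longrightarrow_D N\bigl(0, \Var(Z_1)\bigr).
\end{equation*}

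Finally I would deal with the scaling factor $\sigma_n = \EE_n[\eta_{1i}\vpaxi[,ab] + \eta_{2i}\vpbxi[,ab]]$. Under assumption {\bf R} the summands are integrable, so the weak law of large numbers gives $\sigma_n \to_P \EE[\sigma_n]$. Applying Slutsky's lemma to the identity \eqref{eq:bahadur} (after discarding the $o_P(1)$ remainder) produces
\begin{equation*}
\sqrt{n}(\tilde\theta_{ab} - \theta_{ab}^*) \longrightarrow_D N\bigl(0, (\EE[\sigma_n])^{-2} \Var(Z_1)\bigr),
\end{equation*}
which is exactly the claimed $V_{ab}$. The main obstacle I anticipate is not any single one of these steps but the bookkeeping required to justify that the sparsity in $w^*$ is enough to guarantee finite variance of $Z_1$ uniformly in the dimension; this is precisely what Lemma~\ref{lem:L4} is designed to supply, so invoking it cleanly, together with the probabilistic control of $\Ecal_\gamma \cap \Ecal_\theta \cap \Ecal_{\rm SE}$, closes the argument.
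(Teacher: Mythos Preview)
Your proposal is correct and follows essentially the same route as the paper: the paper simply states that Theorem~\ref{thm:main} and Lemma~\ref{lem:L4} together yield the corollary, and your three-step outline (remainder $o_P(1)$ on the good event, CLT for the centered i.i.d.\ sum $Z_i$, then WLLN for $\sigma_n$ plus Slutsky) is exactly the implicit argument behind that sentence. If anything, you are more explicit than the paper about the $\sigma_n \to_P \EE[\sigma_n]$ step and the invocation of Slutsky, which the paper leaves to the reader.
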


When the vectors in \eqref{eq:sub_gaussian:example} are
  sub-Gaussian, we choose
  $\lambda_1, \lambda_2 \propto \sqrt{\log p/n}$, so that the sample
  complexity is given by $(m\log{p})^2/n = o(1)$.  For other
  distributions, we may need a larger sample size to bound the error
  term in \eqref{eq:bahadur}. We see that the variance $V_{ab}$
depends on the true $\theta^{ab,*}$ and $\gamma^{ab,*}$, which are
unknown. In practice, we estimate $V_{ab}$ using the following
consistent estimator $\hat V_{ab}$,
\begin{equation}
\label{eq:variance_est}
\hat V_{ab} = 
e_{ab}^\top
\rbr{\EE_n\sbr{\Gamma(x_i)}_{\tilde M}}^{-1}
\cdot Z \cdot 
\rbr{\EE_n\sbr{\Gamma(x_i)}_{\tilde M}}^{-1}
e_{ab},
\end{equation}
with
\begin{equation*}
Z={\EE_n\sbr{\rbr{\Gamma(x_i)\tilde \theta^{ab} + g(x_i)}_{\tilde M}\rbr{\Gamma(x_i)\tilde \theta^{ab} + g(x_i)}_{\tilde M}^\top}},
\end{equation*}
and $e_{ab}$ being a canonical vector with $1$ in the position of
element $ab$ and $0$ elsewhere. 
The consistency of this variance estimator is provided in the appendix. 
Using this estimate, we can construct
a confidence interval with asymptotically nominal coverage. In
particular,
\[
\lim_{n \rightarrow \infty}
\sup_{\theta^*\in\Theta} \PP_{\theta^*}\rbr{ \theta_{ab}^* \in \tilde \theta_{ab} \pm z_{\kappa/2} \cdot \sqrt{\hat V_{ab} / n} } = \kappa.
\]
In the next section, we outline the proof of
Theorem~\ref{thm:main}. Proofs of other technical results are
relegated to appendix.

\subsection{Proof  of Theorem~\ref{thm:main}}
\label{sec:proof-sketch-theorem}

We first introduce some auxiliary estimates.
Let $\tilde \gamma^{ab}$ be a minimizer of the following constrained 
problem
\begin{equation}
  \label{eq:estimation:step2:S}
  \begin{aligned}
    &\min_{\gamma} \ \,
    \EE_n\sbr{
      \rbr{\vpaxi[,ab]-\vpaxi[,-ab]^\top \gamma}^2 + 
      \rbr{\vpbxi[,ab]-\vpbxi[,-ab]^\top \gamma}^2 
    } \ \\
    & \text{ s.t. }\ \, {\rm supp}(\gamma) \subseteq \tilde M \backslash (a,b),
  \end{aligned}
\end{equation}
where $\tilde M$ is defined in the step 3 of the procedure.  Essentially,
$\tilde \gamma^{ab}$ is the refitted estimator from step 2 constrained
to have the support on $\tilde M\backslash (a,b)$.  Let $\tilde w \in \RR^{s'}$ with
$\tilde w_{ab} = 1$,
$\tilde w_{\tilde M \backslash (a,b) } = - \tilde{\gamma}_{\tilde{M} \backslash (a,b)}$ and zero
elsewhere. The solution $\tilde \theta^{ab}$ satisfies the first order optimality
condition
$
  \rbr{ \EE_n\sbr{\Gamma(x_i)} \tilde \theta^{ab}  +  \EE_n[g(x_i)] }_{\tilde M} = 0$.
Multiplying by $\tilde w$, it follows that 
\begin{equation}
  \label{eq:kkt3:2}
\begin{aligned}
  & \tilde w^{\top}\rbr{ \EE_n\sbr{\Gamma(x_i)} \tilde \theta^{ab}  +  \EE_n[g(x_i)] } \\
  = &
  \rbr{\tilde w - w^*}^\top \EE_n\sbr{\Gamma(x_i)} \rbr{\tilde \theta^{ab} - \theta^{ab,*}}  +  \rbr{\tilde w - w^*}^\top \rbr{\EE_n\sbr{\Gamma(x_i)\theta^{ab,*} + g(x_i) }}  \\
  & \qquad  +
  w^{* \top} \EE_n\sbr{\Gamma(x_i)} \rbr{\tilde \theta^{ab} - \theta^{ab,*}}  + w^{* \top} \rbr{\EE_n\sbr{\Gamma(x_i)\theta^{ab,*} + g(x_i)} } \\
  \triangleq & L_1 + L_2+L_3+L_4 = 0.
\end{aligned}
\end{equation}
From Lemma~\ref{lem:L1} and Lemma~\ref{lem:L2} (Appendix~\ref{sec:technical_proofs}), we have that
$$
 \abr{L_1 + L_2} \lesssim \phi_{\max}^2\phi_{\min}^{-4} \cdot
\lambda_1\lambda_2 m.
$$
Using Lemma~\ref{lem:L3}, the term $L_3$ can be written as
$$
 L_3 = \EE_n\sbr{\eta_{1i}\vpaxi[,ab] + \eta_{2i}\vpbxi[,ab]} \rbr{\tilde
  \theta_{ab} - \theta_{ab}^{ab,*}} +
\Ocal\rbr{\phi_{\max}^{1/2}\phi_{\min}^{-2}\cdot\lambda_1\lambda_2 m}.
$$ 
Putting all the pieces together, we can rewrite \eqref{eq:kkt3:2} as
$$
\sigma_n \rbr{\tilde \theta_{ab} - \theta_{ab}^{ab,*}} = - w^{* \top} \rbr{\EE_n\sbr{\Gamma(x_i)\theta^{ab,*} + g(x_i)} } + \Ocal\rbr{\lambda_1\lambda_2 m}.
$$
with $\sigma_n = \EE_n\sbr{\eta_{1i}\vpaxi[,ab] + \eta_{2i}\vpbxi[,ab]}$. This completes the proof.

%while the term $\sqrt{n}\cdot L_4$ will converge to a normal distribution. 

\subsection{Theoretical Results for Non-negative Data}
\label{sec:theoretical_nonnegative}

In this section we provide the theoretical results for non-negative data obtained by modifying the assumptions according to the scoring rule for non-negative data.

\paragraph{\emph{Assumption {\bf M}'.}} The parameter vector $\theta^*$ is sparse,
with
$|{\rm supp}(\theta^{ab,*})| \ll n$.  Let
  \begin{equation}
  \label{eq:estimation:step2:true_nonnegative}
  \begin{aligned}
    \gamma^{ab,*} &= \arg\min_\gamma \
    \EE\sbr{
      (\tilde \varphi_{1,ab}(x_i) - \tilde \varphi_{1,-ab}(x_i)^\top \gamma)^2 + 
      (\tilde \varphi_{2,ab}(x_i) - \tilde \varphi_{2,-ab}(x_i)^\top \gamma)^2 
    },
  \end{aligned}
\end{equation}
with $\tilde\varphi_1, \tilde\varphi_2$ defined in \eqref{eq:tilde_phi_nonnegative}. 
Let $\eta_{1i} = \tilde \varphi_{1,ab}(x_i) - \tilde \varphi_{1,-ab}(x_i)^\top \gamma^{ab,*}$ and
$\eta_{2i} = \tilde \varphi_{2,ab}(x_i) - \tilde \varphi_{2,-ab}(x_i)^\top \gamma^{ab,*}$, for
$i \in [n]$.  The vector $\gamma^{ab,*}$ is sparse with
$|{\rm supp}(\gamma^{ab,*})| \ll n$.  Let
$m = |{\rm supp}(\theta^{ab,*})| \vee |{\rm supp}(\gamma^{ab,*})|$.

\paragraph{\emph{Assumption {\bf SE}'.}} We have
\[
\phi_{\min} 
  \leq \phi_{-}(m \cdot \log n, {\EE\sbr{\Gamma_\ell(x_i)}}) 
  \leq \phi_{+}(m \cdot \log n, {\EE\sbr{\Gamma_\ell(x_i)}})
  \leq \phi_{\max},
\]
where $0 < \phi_{\min} \leq \phi_{\max} < \infty$.

\paragraph{\emph{Assumption {\bf R}'.}}
$\EE_{q^{ab}}\sbr{\norm{\Gamma_\ell(X_a,X_b,x_{-ab})\theta^{ab,*}}^2}$ and
$\EE_{q^{ab}}\sbr{\norm{g_\ell(X_a,X_b,x_{-ab})}^2}$ are finite for all
values of $x_{-ab}$ in the domain.

%\todo[inline]{Lemma on $\Ecal_{SE}$}

Denote the modified events as
$$
 \Ecal_{\theta} = \cbr{ \norm{\EE_n\sbr{\Gamma_\ell(x_i) \theta + g_\ell(x_i)}
  }_\infty \leq \frac{\lambda_1}{2} }
$$
and
$$
 \Ecal_{\gamma} = \cbr{ \norm{\EE_n\sbr{
      \eta_{1i}\tilde \varphi_{1,-ab}(x_i)+\eta_{2i}\tilde \varphi_{2,-ab}(x_i) }}_\infty \leq
  \frac{\lambda_2}{2} }.
$$
We have the asymptotic normality for the estimator on non-negative data.

\nocite{hahn2018regularization}
\nocite{hahn2019efficient}
\nocite{he2018xbart}
\nocite{he2020stochastic}

\begin{corollary}
  Suppose that assumptions {\bf M'}, {\bf SE}', and {\bf R'} hold. Define $w^*$ with $w^{*}_{ab} = 1$ and
$w^{*}_{-ab} = -\gamma^{ab,*}$, where $\gamma^{ab,*}$ is given in the
assumption {\bf M}'. 
  In
  addition, suppose 
  $\sqrt{n} \lambda_1\lambda_2 m = o(1)$ and $\PP\rbr{\Ecal_\gamma\cap\Ecal_\theta\cap\Ecal_{\rm SE}} \rightarrow 1$ where
  \begin{equation*}
\Ecal_{\rm SE} = \cbr{
\frac{\phi_{\min}}{2}
  \leq \phi_{-}\big(m \cdot \log n, \EE_n\sbr{\Gamma_\ell(x_i)}\big) 
  \leq \phi_{+}\big(m \cdot \log n, \EE_n\sbr{\Gamma_\ell(x_i)}\big)
  \leq 2 \phi_{\max} }.
\end{equation*}
  Then we have 
  $$
  \sqrt{n} (\tilde \theta_{ab} - \theta_{ab}^*)
  \longrightarrow_D N(0, V_{ab}) , 
  $$
  with the variance term
  $$
  V_{ab} = \rbr{\EE\sbr{\sigma_n}}^{-2}\cdot\Var\rbr{w^{* \top}\rbr{\Gamma_\ell(x_i)\theta^{ab} + g_\ell(x_i)}}
  $$ 
  where 
  $\sigma_n = \EE_n\sbr{ \eta_{1i}\tilde \varphi_{1,ab}(x_i)+\eta_{2i}\tilde \varphi_{2,ab}(x_i) }$.
\end{corollary}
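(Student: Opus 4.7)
The plan is to mirror the proof of Theorem~\ref{thm:main} and Corollary~\ref{corollary:normality} essentially verbatim, observing that the only structural change in the non-negative setting is the substitution of the scoring rule $S^{ab}$ by $S_\ell^{ab}$, and of the matrices $\Gamma, g$ by $\Gamma_\ell, g_\ell$. As noted in the methodology section, $\Gamma_\ell(x) = \tilde\varphi_1(x)\tilde\varphi_1(x)^\top + \tilde\varphi_2(x)\tilde\varphi_2(x)^\top$ has the same rank-two quadratic structure as $\Gamma(x)$ in~\eqref{eq:conditional_score}, with $\tilde\varphi_j$ defined in~\eqref{eq:tilde_phi_nonnegative} playing the role of $\varphi_j$. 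Consequently, all the algebraic identities driving the Bahadur expansion remain valid after this substitution.

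First I would establish the non-negative analogue of~\eqref{eq:bahadur}. Starting from the first-order optimality condition for~\eqref{eq:estimation:step3_nonnegative}, namely $\rbr{\EE_n\sbr{\Gamma_\ell(x_i)}\tilde\theta^{ab} + \EE_n\sbr{g_\ell(x_i)}}_{\tilde M} = 0$, I would multiply through by the same vector $\tilde w$ used in the proof of Theorem~\ref{thm:main}---constructed from the refitted step-2 estimator on the support $\tilde M \setminus (a,b)$ of the population target~\eqref{eq:estimation:step2:true_nonnegative}---and decompose into four terms $L_1 + L_2 + L_3 + L_4 = 0$ exactly as in~\eqref{eq:kkt3:2}. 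The bounds
\[
\abr{L_1 + L_2} \lesssim \phi_{\max}^2 \phi_{\min}^{-4} \lambda_1\lambda_2 m,
\qquad
L_3 = \sigma_n \rbr{\tilde\theta_{ab} - \theta_{ab}^{*}} + \Ocal\rbr{\phi_{\max}^{1/2}\phi_{\min}^{-2}\lambda_1\lambda_2 m}
\]
follow from non-negative versions of the appendix Lemmas L1, L2, L3, whose proofs transfer without change: they use only the sparse-eigenvalue condition (assumption~\textbf{SE}') and the $\ell_\infty$ control supplied by the modified $\Ecal_\theta, \Ecal_\gamma$. Rearranging yields the non-negative Bahadur representation on $\Ecal_\gamma \cap \Ecal_\theta \cap \Ecal_{\rm SE}$.

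Second, under $\sqrt{n}\lambda_1\lambda_2 m = o(1)$ and on the high-probability event $\Ecal_\gamma \cap \Ecal_\theta \cap \Ecal_{\rm SE}$, the Bahadur remainder is $o_P(1)$. The leading term $-\sigma_n^{-1}\sqrt{n}\EE_n\sbr{w^{*\top}\rbr{\Gamma_\ell(x_i)\theta^{ab,*} + g_\ell(x_i)}}$ is $\sqrt{n}$ times an average of i.i.d.\ mean-zero random variables, the mean-zero property holding because properness of the generalized $\ell$-score rule forces $\EE\sbr{\Gamma_\ell(x_i)\theta^{*} + g_\ell(x_i)} = 0$ at $\theta = \theta^{*}$. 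Assumption~\textbf{R}' supplies the finite second moment required for the CLT on the numerator, while the law of large numbers together with the non-negative analogue of Lemma~\ref{lem:L4} gives $\sigma_n \to_P \EE\sbr{\sigma_n}$, with the denominator nonzero by~\textbf{SE}'. Slutsky's theorem then delivers $\sqrt{n}(\tilde\theta_{ab} - \theta^*_{ab}) \longrightarrow_D N(0, V_{ab})$ with the stated variance.

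The main obstacle is bookkeeping: confirming that each appendix lemma and its proof survive the substitution $\varphi_j \mapsto \tilde\varphi_j = \ell_j^{1/2}(x_j)\varphi_j$. Most arguments are cosmetic, but two points deserve care. First, Lemma~\ref{lemma:sample_sparse_eigenvalue} is now applied to $\EE_n\sbr{\Gamma_\ell(x_i)}$, which requires a bound on $\sup_i \|\tilde\varphi_j(x_i)\|_\infty$ depending on both the tail of $x_a$ and the growth of $\ell_j$; for the recommended choice $\ell_j(x)=\log(x+1)$ this is mild, and is in any case absorbed into the hypothesis that the modified $\Ecal_{\rm SE}$ holds with probability tending to one. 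Second, the mean-zero property of the score and the consistency of $\sigma_n$ both rely directly on~\textbf{R}'. Once these are verified, the rest is a mechanical translation of the proof of Corollary~\ref{corollary:normality}.
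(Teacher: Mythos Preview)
Your proposal is correct and matches the paper's approach: the paper does not give a separate proof of this corollary, presenting it as the immediate analogue of Theorem~\ref{thm:main} and Corollary~\ref{corollary:normality} once the assumptions \textbf{M}, \textbf{SE}, \textbf{R} and quantities $\Gamma, g, \varphi_j$ are replaced by their non-negative counterparts \textbf{M}$'$, \textbf{SE}$'$, \textbf{R}$'$ and $\Gamma_\ell, g_\ell, \tilde\varphi_j$. Your observation that $\Gamma_\ell(x) = \tilde\varphi_1(x)\tilde\varphi_1(x)^\top + \tilde\varphi_2(x)\tilde\varphi_2(x)^\top$ preserves the rank-two quadratic structure is exactly the reason the appendix lemmas transfer verbatim.
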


\section{Relaxing the Sparsity Assumption on the Inverse of Hessian}
\label{sec:relaxation_L1}

 For general models, the sparsity condition on $\gamma^{ab,*}$
  could be violated. For example, for the non-negative Gaussian
  graphical model with $\Sigma = \Omega = I_p$, by direct calculation
  we obtain that almost all the components of $\gamma^{ab,*}$ take the
  same value, which is approximately $1/p$. Therefore $\gamma^{ab,*}$
  is neither sparse, nor approximately sparse (see Section
  \ref{sec:experiments_synthetic} for details).  Instead, it only
  satisfies a weaker condition $\|\gamma^{ab,*}\|_1 \leq 2$.  This
  constant $L_1$ norm condition is studied in \cite{ma2017inter}.
  Since $\gamma^{ab,*}$ is dense, we cannot select sparse support in
  Step 2; and therefore Step 3 is no longer valid when $p > n$.

We relax the sparsity condition on $\gamma^{ab,*}$ to a
  constant $L_1$ condition, and modify our procedure.  We apply the
  debias method in \cite{ma2017inter}. Specifically, recall that the
  scoring rule is
  \begin{equation}
    S^{ab}(x, \theta)   = \frac 12 \theta^\top \Gamma(x) \theta + \theta^\top g(x) + c(x),
  \end{equation}
  and the gradient with respect to $\theta$ is
  \begin{equation}
    \nabla  S^{ab}(x, \theta)  =  \Gamma(x) \theta + g(x). 
  \end{equation}
  We obtain an estimator $\hat\theta^{ab}$ using Step 1, which satisfies
  \begin{equation}
    \nabla S^{ab} \big( x, \hat\theta^{ab} \big) - \nabla S^{ab} \big( x, \theta^{ab, *} \big)  = \Gamma(x) \big( \hat\theta^{ab} - \theta^{ab, *} \big).
  \end{equation}
  Multiplying by some matrix $M$ on both sides and rearranging terms, we obtain
  \begin{equation}
    \label{eq:split_general_debias}
    \hat\theta^{ab} - M \cdot \nabla S^{ab} \big( x, \hat\theta^{ab} \big) = \theta^{ab, *} - M \cdot \nabla S^{ab} \big( x, \theta^{ab, *} \big) + \big( I - M \cdot \Gamma(x) \big) \big( \hat\theta^{ab} - \theta^{ab, *} \big).
  \end{equation}
  The empirical version of \eqref{eq:split_general_debias} is 
  \begin{multline}
    \label{eq:split_general_debias_empirical}
    \hat\theta^{ab} - M \cdot \EE_n[ \nabla S^{ab} \big( x_i, \hat\theta^{ab} \big)] \\
    = \theta^{ab, *} - M \cdot \EE_n[ \nabla S^{ab} \big( x_i, \theta^{ab, *} \big) ] + \big( I - M \cdot \EE_n[\Gamma(x_i)] \big) \big( \hat\theta^{ab} - \theta^{ab, *} \big).
  \end{multline}
  Rather than using Step 3 in the procedure described in Section~\ref{sec:methodology},
  we define the left hand side as the proposed estimator:
\begin{equation}
\label{eq:estimator_L1_relax}
\tilde \theta^{ab} = \hat\theta^{ab} - M \cdot \EE_n[ \nabla S^{ab} \big( x_i, \hat\theta^{ab} \big) ]
= \hat\theta^{ab} - M \cdot \frac{1}{n} \sum_{i=1}^n \Gamma(x_i) \hat\theta^{ab} + g(x_i).
\end{equation}
Notice that the first term in the right hand side of
\eqref{eq:split_general_debias_empirical} is the true value.  Suppose
$M$ is an approximate inverse of $\EE_n[ \Gamma(x) ]$, then the third
term in the right hand side of
\eqref{eq:split_general_debias_empirical} would be negligible.  For
the second term, we see that
$\EE_n[ \nabla S^{ab} \big( x_i, \theta^{ab, *} \big) ]$ is an average
of $n$ i.i.d. samples. If it is independent of $M$, then this second
term is asymptotically normal, and the coordinate $ab$ of
$\tilde \theta^{ab}$ is the desired estimator, similar to the
three-step procedure described in Section~\ref{sec:methodology}.  We
construct $M$ following the procedure in \cite{ma2017inter}.  We first
split the data into two parts and estimate $\hat\theta^{ab}$ on the
first part, while $M$ is estimated on the second part.  For notation
simplicity, let $\{x_i\}_{i=1}^n$ denote observations on the first
part and $\{x_i'\}_{i=1}^n$ on the second part.  We estimate $M$ by
solving the following convex program:
\begin{equation}
\begin{aligned}
& \text{minimize} \quad \|M\|_\infty \\
& \text{subject to} \,\,\,  \left\| \, I - M \cdot \EE_n[ \Gamma(x_i') ] \, \right\|_{\max} \leq \lambda_2.
\end{aligned}
\end{equation}
By selecting appropriate $\lambda_2$, the solution $M$ will be an
approximate inverse of $\EE_n[ \Gamma(x_i') ]$ and, hence, an
approximate inverse of $\EE_n[ \Gamma(x_i) ]$.  On the other hand,
since we estimate $M$ based on second part of the data,
$\{x_i'\}_{i=1}^n$, it is independent of
$\EE_n[ \nabla S^{ab} \big( x_i, \theta^{ab, *} \big) ]$.  Let $M^*$
be the population version of $M$. We see that the column $ab$ of $M^*$
(denoted as $M^*_{ab}$) corresponds to $w^*$ up to a constant, where
$w^*$ is defined in Theorem \ref{thm:main} with $w^{*}_{ab} = 1$ and
$w^{*}_{-ab} = -\gamma^{ab,*}$.  For non-negative Gaussian graphical
model with $\Sigma = \Omega = I_p$, a simple calculation shows that for large
$p$, we have $\| M^*_{ab} \|_1 \leq 1.5/(1-\frac{2}{\pi}) < 5$. We
then see that the bounded $L_1$ norm condition on $M^*_{ab}$ is
satisfied.

  To establish asymptotic normality of the modified procedure, we define the
  following event
  $$
  \Ecal_{\gamma}' = \cbr{
    \left\|\, I - M^* \cdot \EE_n[ \Gamma(x_i) ] \, \right\|_{\max}
    \leq {\lambda_2} }.
  $$
  For example, when $\varphi_1(x)$ and $\varphi_2(x)$ are sub-Gaussian
  vectors, modification of Lemma D.1 in \cite{ma2017inter} gives us
  that if $\lambda_2 \asymp \sqrt{\frac{\log p}{n}}$, then
  $\PP\rbr{ \Ecal_{\gamma} } \rightarrow 1$.  By the proof of Lemma
  \ref{lem:refit}, we have that
  $\| \hat\theta^{ab} - \theta^{ab, *} \|_1 \lesssim \lambda_1 m$.
  This shows that the third term of
  \eqref{eq:split_general_debias_empirical} is of order
  $m \cdot \log p / n$.  Suppose $(m\log{p})^2/n = o(1)$, we then
  obtain a similar result as in Corollary \ref{corollary:normality}.
  It is also straightforward to see that the variance given by
  \eqref{eq:split_general_debias_empirical} is asymptotically the same
  as $V_{ab}$ in Corollary \ref{corollary:normality}. We conclude
  with the following Corollary for sub-Gaussian distribution.

\begin{corollary}
\label{corollary:L_1_relaxation}
Suppose that assumptions {\bf SE} and {\bf R} hold.  Furthermore,
suppose $\| M^*_{ab} \|_1 \leq C$. If $(m\log{p})^2/n = o(1)$ and
$\PP\rbr{\Ecal_\gamma'\cap\Ecal_\theta\cap\Ecal_{\rm SE}} \rightarrow
1$, then the estimator $\tilde \theta^{ab}$ in
\eqref{eq:estimator_L1_relax} satisfies
  $$
  \sqrt{n} (\tilde \theta_{ab} - \theta_{ab}^*)
  \longrightarrow_D N(0, V_{ab}) , 
  $$
  where
  $V_{ab} = \Var\big( {M_{ab}^*}^\top \big(\Gamma(x_i) \theta^{ab, *} + g(x_i) \big) \big) $. 
\end{corollary}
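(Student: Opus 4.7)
The plan is to substitute the key identity \eqref{eq:split_general_debias_empirical} into the definition \eqref{eq:estimator_L1_relax} of $\tilde\theta^{ab}$, extract the $ab$-th coordinate, and multiply by $\sqrt n$. This yields the decomposition
\begin{equation*}
\sqrt n(\tilde\theta_{ab}-\theta_{ab}^*)
= \underbrace{-\sqrt n\,(e_{ab}^\top M)\,\EE_n\bigl[\Gamma(x_i)\theta^{ab,*}+g(x_i)\bigr]}_{T_1}
+ \underbrace{\sqrt n\,e_{ab}^\top\bigl(I-M\,\EE_n[\Gamma(x_i)]\bigr)\bigl(\hat\theta^{ab}-\theta^{ab,*}\bigr)}_{T_2}.
\end{equation*}
I would then show $T_2 = o_P(1)$ and $T_1 \longrightarrow_D N(0,V_{ab})$ separately.

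For $T_2$, I would first bridge the sample-split event: on $\Ecal'_\gamma$ combined with sub-Gaussian concentration of $\EE_n[\Gamma(x_i)]-\EE_n[\Gamma(x_i')]$ in the $\|\cdot\|_{\max}$ norm and the fact that $\|M\|_\infty\le\|M^*\|_\infty$ (by feasibility of $M^*$ in the LP for $M$, which holds w.h.p.), one obtains $\|e_{ab}^\top(I - M\,\EE_n[\Gamma(x_i)])\|_\infty \lesssim \lambda_2$. Combined with the Lasso rate $\|\hat\theta^{ab}-\theta^{ab,*}\|_1 = O_P(m\lambda_1)$ from Step 1 (identical to the bound used in Lemma \ref{lem:refit} and in the proof of Theorem \ref{thm:main}), Hölder's inequality gives $|T_2| \lesssim \sqrt n\cdot\lambda_2\cdot m\lambda_1 \asymp m\log p/\sqrt n = o(1)$, using the scaling $(m\log p)^2/n=o(1)$.

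For $T_1$, the sample-splitting makes $M$ (a function of $\{x_i'\}$) independent of the first-half average. I would further split
\begin{equation*}
T_1 = -\sqrt n\,M^{*\top}_{ab}\,\EE_n\bigl[\Gamma(x_i)\theta^{ab,*}+g(x_i)\bigr]
- \sqrt n\,\bigl(e_{ab}^\top M - M^{*\top}_{ab}\bigr)\EE_n\bigl[\Gamma(x_i)\theta^{ab,*}+g(x_i)\bigr].
\end{equation*}
The second piece is bounded by $\|e_{ab}^\top M - M^{*\top}_{ab}\|_1 \cdot \|\sqrt n\,\EE_n[\Gamma(x_i)\theta^{ab,*}+g(x_i)]\|_\infty$, whose second factor is $O_P(\sqrt{\log p})$ by a sub-Gaussian maximal inequality applied coordinate-wise (exploiting $\EE[\Gamma(x)\theta^{ab,*}+g(x)]=0$ from the population first-order optimality of the score loss). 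The first factor is controlled by adapting the LP-feasibility arguments of \cite{ma2017inter}: the bounded $\ell_1$-norm hypothesis $\|M^*_{ab}\|_1\le C$ combined with the LP construction of $M$ gives $\|e_{ab}^\top M - M^{*\top}_{ab}\|_1 = o_P(1/\sqrt{\log p})$, making this piece $o_P(1)$. The first piece is then a standard average of i.i.d.\ mean-zero random variables with finite variance (by assumption \textbf{R}), so the ordinary CLT delivers convergence to $N(0,V_{ab})$ with $V_{ab}=\Var\bigl(M^{*\top}_{ab}(\Gamma(x_1)\theta^{ab,*}+g(x_1))\bigr)$, as stated.

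The main obstacle is the $\ell_1$-approximation $\|e_{ab}^\top M - M^{*\top}_{ab}\|_1 = o_P(1/\sqrt{\log p})$: this is where the relaxation from the sparsity assumption to the bounded $\ell_1$-norm condition is essential, and one must invoke LP feasibility plus duality (as in \cite{ma2017inter}) rather than sparse-recovery guarantees. A secondary technical point is the transfer of the LP constraint from $\EE_n[\Gamma(x_i')]$ to $\EE_n[\Gamma(x_i)]$ needed for $T_2$; this is a standard concentration/triangle-inequality step but must be carried out explicitly to justify applying $\Ecal'_\gamma$ to the first-half Hessian.
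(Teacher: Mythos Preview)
Your decomposition into $T_1+T_2$ and your treatment of $T_2$ (H\"older plus the LP constraint plus the Lasso $\ell_1$ rate $\|\hat\theta^{ab}-\theta^{ab,*}\|_1=O_P(m\lambda_1)$) coincide with the paper's argument in Section~\ref{sec:relaxation_L1}. The difference, and the gap, is in your handling of $T_1$.

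The paper does \emph{not} replace $M$ by $M^*$ in the leading term. It exploits the sample-splitting independence directly: since $M$ is a function of the second half $\{x_i'\}$ only, conditionally on $M$ the first-half score average $\EE_n[\Gamma(x_i)\theta^{ab,*}+g(x_i)]$ is an i.i.d.\ mean-zero sum, so a conditional CLT gives $T_1\mid M\Rightarrow N\bigl(0,\,M_{ab}^\top\Sigma M_{ab}\bigr)$ with $\Sigma=\Var(\Gamma(x)\theta^{ab,*}+g(x))$. The paper then argues that this (scalar) conditional variance converges to $V_{ab}$.

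Your route instead splits off $M^*$ and bounds the remainder by $\|e_{ab}^\top M - M^{*\top}_{ab}\|_1\cdot O_P(\sqrt{\log p})$, which requires $\ell_1$-consistency of the LP row at rate $o_P(1/\sqrt{\log p})$. This is precisely what the relaxed hypothesis of Section~\ref{sec:relaxation_L1} does \emph{not} give you: the LP guarantees only feasibility $\|I-M\EE_n[\Gamma(x_i')]\|_{\max}\le\lambda_2$ and, via feasibility of $M^*$ on $\Ecal_\gamma'$, the bound $\|M\|_\infty\le\|M^*\|_\infty$. Without sparsity of $M^*_{ab}$---which is exactly the assumption being dropped---the usual cone/restricted-eigenvalue machinery for $\ell_1$-recovery is unavailable, and neither the paper nor \citet{ma2017inter} supply an $\ell_1$ rate of the kind you invoke. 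So your claimed bound $\|e_{ab}^\top M - M^{*\top}_{ab}\|_1=o_P(1/\sqrt{\log p})$ is unsubstantiated in this setting and is the point where your argument breaks.

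The fix is to follow the paper: use independence of $M$ from $\{x_i\}$ to apply the CLT to $T_1$ \emph{as written}, with $M$ rather than $M^*$; the only boundedness needed is $\|M_{ab}\|_1\le\|M\|_\infty\lesssim C$, which does follow from the LP and $\Ecal_\gamma'$. (The paper's final step---that the conditional variance $M_{ab}^\top\Sigma M_{ab}$ stabilizes at $V_{ab}$---is itself stated without detail, but that is a single scalar convergence, strictly weaker than the $\ell_1$ rate your decomposition would require.)
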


\section{Simultaneous Inference}
\label{sec:simultaneous}

In the last two sections, we have developed a procedure for
constructing a consistent and asymptotically normal estimate of a
single edge parameter. In this section, we develop a procedure for
simultaneous hypothesis testing of all edges connected to a specific
node. We adopt the Gaussian multiplier bootstrap
\citep{Chernozhukov2013Gaussian} to our setting.
In this section we focus on the case where $\Xcal = \RR^p$. The analysis can be straightforwardly extended to non-negative data. 

For a fixed node $a \in V$, we would like to test the null hypothesis
\begin{equation}
\label{eq:H0_simultaneous}
H_0: \theta_{ab}^* = \breve\theta_{ab} \quad \text{for all } b \in V_a =  \{1, \ldots, p\} \backslash \{a\},
\end{equation}
for some values $\breve\theta_{ab}$ versus the alternative 
\begin{equation}
\label{eq:H1_simultaneous}
H_1: \theta_{ab}^* \neq \breve\theta_{ab} \quad \text{for some } b \in V_a = \{1, \ldots, p\} \backslash \{a\}.
\end{equation}
We propose the following test statistic 
\begin{equation}
\label{eq:def_test_statistic_simultaneous}
\max_{b \in V_a} \sqrt{n} \abr{ \tilde \theta_{ab} - \breve \theta_{ab} },
\end{equation}
where $\tilde \theta_{ab}$ is obtained by the three step procedure
described in Section \ref{sec:methodology}. The null hypothesis will
be rejected for large values of the test statistic. Using the
$\ell_\infty$ statistics will allow us to have power against
alternatives that change few of the coordinates of
$\breve\theta_{ab}$. In order to use the test statistic in practice,
we need to be able to accurately compute the critical value of the
test statistic in a high-dimensional setting. To that end, we describe
a multiplier bootstrap method that will allow us to obtain an
accurate critical value to the test statistic in~\eqref{eq:def_test_statistic_simultaneous}.

For each $b \in V_a$ and $i \in \{1, \ldots, n\}$, denote
\begin{equation}
\label{eq:def_xib_tilde}
\tilde z_{iab} = -\sigma_{n,ab}^{-1} \cdot \tilde w_{ab}^{\top} \Big(\Gamma_{ab}(x_i) \breve \theta^{ab} + g_{ab}(x_i) \Big),
\end{equation}
where
$\sigma_{n,ab} = \EE_n\sbr{\eta_{1iab}\vpaxi[,ab] + \eta_{2iab}\vpbxi[,ab]}$ as
defined in Theorem~\ref{thm:main}. 
We use the subscript $ab$ to highlight that all of these terms depend on the node $a$ and $b$.
Let $e_i$, $i = 1,\ldots, n$, be a
sequence of independent standard Gaussian random variables and
independent of data. We define the multiplier bootstrap statistic as
\begin{equation}
\label{eq:def_W_tilde}
\tilde W = \max_{b \in V_a} \frac{1}{\sqrt n} \sum_{i=1}^n \tilde z_{iab} e_{i}
\end{equation}
and compute the bootstrap critical value as the $(1-\alpha)$ quantile
of $\tilde W$
\[
  c_{\tilde W}(\alpha) = \inf\{t \in \RR: \PP(\tilde W \leq t) \geq
  1-\alpha\}.
\]
Importantly, note that the quantile of the multiplier bootstrap
statistic can be estimated using a Monte-Carlo method. We will show
that the quantiles of $\tilde W$ approximate the quantiles of our
test statistic.

Define
\begin{equation}
\label{eq:def_xib}
z_{iab} = -\sigma_{ab}^{-1} \cdot w_{ab}^{*\top}\Big(\Gamma(x_i)\theta^{ab,*} + g(x_i) \Big),
\end{equation} 
as the counterpart to $\tilde z_{iab}$, where $\sigma_{ab} = \EE[\sigma_{n,ab}]$.
In order to establish our main theoretical result on simultaneous
inference, we need the following regularity condition.

\paragraph{\emph{Assumption RR.}} Define
$\gamma_{abc}(x_i) = z_{iab}z_{iac} - \EE(z_{iab}z_{iac})$.  There exist
$\eta_n$ and $\tau_n^2$, such that for any $b,c \in V_a$, we have
$\|\gamma_{abc}(x_i)\|_{\infty} \leq \eta_n$ and
$\frac 1n \sum_{i=1}^n \EE \gamma^2_{abc}(z_i) \leq \tau_n^2$ with
probability at least $1-n^{-c_1}$. Moreover, uniformly for $b \in V_a$,
we have $c_0 \leq \frac 1n \sum_{i=1}^n \EE z_{iab}^2 \leq C_0$ for
some $0 < c_0 < C_0$.

The assumption RR imposes very mild technical conditions and is standard
for a large number of models
when the sample size is large enough.
Part of the conditions are adopted from
\cite{Chernozhukov2013Gaussian} in order to apply the theoretical
results on the Gaussian multiplier bootstrap.

\begin{theorem}
  \label{thm:simultaneous}
    
  Suppose the assumptions {\bf M}, {\bf SE}, {\bf R} and {\bf RR} are satisfied, 
  and the events $\Ecal_\gamma\cap\Ecal_\theta\cap\Ecal_{\rm SE}$ hold for each $b \in V_a$.
  Furthermore, suppose there exists a constant $\epsilon > 0$,
  such that
\begin{equation}
\label{eq:asmp_regime_simultaneous}
\frac 1n \Big[ (\tau_n^2 + \eta_n) \log p + (m\log{p})^2 + \log(pn)^7 \Big] = o(n^{-\epsilon}).
\end{equation}
Then, under the null hypothesis, we have
\begin{equation}
\sup_{\alpha \in (0,1)} \bigg| \PP\Big(\max_{b \in V_a} \sqrt{n} ( \tilde \theta_{ab} - \breve \theta_{ab} ) \geq c_{\tilde W}(\alpha) \Big) - \alpha  \bigg| = o(1).
\end{equation}
\end{theorem}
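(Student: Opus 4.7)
The plan is to reduce the simultaneous testing problem to an application of the high-dimensional central limit theorem and multiplier bootstrap consistency results of \citet{Chernozhukov2013Gaussian} (hereafter CCK) to the $(p-1)$-dimensional process indexed by $b \in V_a$. Throughout, I work on the event $\Ecal_\gamma\cap\Ecal_\theta\cap\Ecal_{\rm SE}$, which holds with probability tending to one.

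\textbf{Step 1 (uniform Bahadur expansion).} For each fixed $b\in V_a$, Theorem~\ref{thm:main} yields
\[
\sqrt n\rbr{\tilde\theta_{ab} - \theta_{ab}^*} \;=\; \frac{1}{\sqrt n}\sum_{i=1}^n z_{iab} \;+\; R_{ab},
\qquad \max_{b\in V_a}\abr{R_{ab}} = \Ocal\rbr{\sqrt n\,\lambda_1\lambda_2 m},
\]
where $z_{iab}$ is defined in \eqref{eq:def_xib}. Under the null, $\theta_{ab}^{*}=\breve\theta_{ab}$, so the test statistic becomes $T := \max_{b\in V_a}\frac{1}{\sqrt n}\sum_{i=1}^n z_{iab} + \Ocal(\max_b|R_{ab}|)$. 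The rate condition \eqref{eq:asmp_regime_simultaneous} forces $\sqrt n\,\lambda_1\lambda_2 m = o\rbr{(\log p)^{-1/2}}$, which is precisely the scale at which CCK's anti-concentration bound for the Gaussian maximum can absorb the remainder without affecting Kolmogorov distances.

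\textbf{Step 2 (Gaussian approximation and oracle bootstrap).} Under Assumption~\textbf{RR}, the centered summands $z_{iab}$ have the moment envelopes $\tau_n^2$ and $\eta_n$ needed for CCK's high-dimensional CLT for maxima of sums. The rate in \eqref{eq:asmp_regime_simultaneous} matches the required $\rbr{(\tau_n^2+\eta_n)\log p + \log^7(pn)}/n = o(n^{-\epsilon})$, so
\[
\sup_{t\in\RR}\,\abr{\PP\rbr{\tfrac{1}{\sqrt n}\max_{b\in V_a}\sum_{i=1}^n z_{iab} \leq t} - \PP\rbr{\max_{b\in V_a} G_b \leq t}} = o(1),
\]
where $G\sim\Ncal(0,\Sigma_z)$ with $\Sigma_{z,bc} = n^{-1}\sum_i\EE\sbr{z_{iab}z_{iac}}$. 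Applying the same CCK machinery to the oracle multiplier statistic $W^o := \max_{b\in V_a}\frac{1}{\sqrt n}\sum_i z_{iab}e_i$ (Gaussian conditionally on the data with sample covariance $n^{-1}\sum_i z_{iab}z_{iac}$) shows that the conditional quantiles of $W^o$ approximate those of $\max_b G_b$ uniformly in $\alpha$.

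\textbf{Step 3 (feasibility: oracle to plug-in bootstrap).} This is the main obstacle. I need to show $c_{\tilde W}(\alpha) - c_{W^o}(\alpha) = o_P(1)$ uniformly in $\alpha$. The difference $\tilde z_{iab}-z_{iab}$ arises from three plug-ins: $(\tilde w_{ab},\sigma_{n,ab}^{-1},\breve\theta^{ab})$ in place of $(w_{ab}^{*},\sigma_{ab}^{-1},\theta^{ab,*})$. Using the Step 2 rate $\|\hat\gamma^{ab}-\gamma^{ab,*}\|_1 \lesssim m\lambda_2/\phi_{\min}$, the analogous Step 1 rate, and a standard concentration argument for $\sigma_{n,ab}$, together with a union bound over the $p-1$ coordinates, I can bound $\max_{b,c}\abr{n^{-1}\sum_i\tilde z_{iab}\tilde z_{iac} - n^{-1}\sum_i z_{iab}z_{iac}}$. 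Feeding this covariance perturbation into Lemma 3.1 of CCK (Gaussian comparison) upgrades it to a uniform quantile comparison between $\tilde W$ and $W^o$, with the slack absorbed by~\eqref{eq:asmp_regime_simultaneous}.

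\textbf{Step 4 (chaining).} Finally, combine Steps 1--3 via the triangle inequality in Kolmogorov distance: the distribution of the test statistic is $o(1)$-close to that of $\max_b G_b$ (Steps 1--2), whose quantiles are $o_P(1)$-close to those of $\tilde W$ (Steps 2--3). CCK's anti-concentration inequality for Gaussian maxima converts these additive approximations on the distribution function into uniform approximations of the quantile mapping, yielding $\sup_{\alpha\in(0,1)}\abr{\PP(T\geq c_{\tilde W}(\alpha)) - \alpha} = o(1)$, as claimed. The technical heart is Step 3, where the plug-in errors must be shown to have a uniformly small $\ell_\infty$ effect on the bootstrap covariance operator despite their non-trivial dependence on the data used to form $\tilde\theta^{ab}$ and $\hat\gamma^{ab}$.
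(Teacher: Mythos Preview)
Your proposal is correct and uses the same Chernozhukov--Chetverikov--Kato machinery as the paper, but the packaging differs in one place worth noting. The paper does not decompose into ``CLT $\to$ oracle bootstrap $\to$ Gaussian comparison'' as you do; instead it applies Theorem~3.2 of CCK as a single black box by verifying three conditions: (i) concentration of the oracle empirical covariance $\Delta = \max_{b,c}\abr{n^{-1}\sum_i \gamma_{abc}(x_i)}$, (ii) closeness $\abr{T_0 - \tilde T}$ of the oracle and actual test statistics, and (iii) conditional closeness $\abr{W_0 - \tilde W}$ of the oracle and plug-in multiplier statistics. Your Steps~1--2 correspond to (ii) and (i), but in Step~3 you go through the \emph{covariance perturbation} $\max_{b,c}\abr{n^{-1}\sum_i \tilde z_{iab}\tilde z_{iac} - n^{-1}\sum_i z_{iab}z_{iac}}$ and then invoke a Gaussian comparison lemma, whereas the paper bounds $\abr{W_0 - \tilde W}$ \emph{directly} via a maximal inequality for the Gaussian-weighted sum $n^{-1/2}\sum_i(z_{iab}-\tilde z_{iab})e_i$ (Lemma~A.1 of CCK). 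The paper's route is a bit cleaner because the expression stays linear in the plug-in errors $\tilde w_{ab}-w_{ab}^*$ and $\sigma_{n,ab}^{-1}-\sigma_{ab}^{-1}$, avoiding the product terms your covariance-difference argument would have to control; on the other hand, your route makes the dependence on the covariance structure more transparent and would generalize more readily if one wanted to swap in a different comparison inequality. Both are valid and lead to the same conclusion under~\eqref{eq:asmp_regime_simultaneous}.
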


The proof of Theorem \ref{thm:simultaneous} is provided in the appendix. 
Since
\[
  | \tilde \theta_{ab} - \breve \theta_{ab} | = \max\{\tilde
\theta_{ab} - \breve \theta_{ab} , \breve \theta_{ab} - \tilde
\theta_{ab} \},
\]
it is straightforward to obtain the following
corollary for the test statistic
in~\eqref{eq:def_test_statistic_simultaneous}.

\begin{corollary}
\label{corollary_simultaneous}
Suppose the conditions in Theorem \ref{thm:simultaneous} are
satisfied.  Then, under the null hypothesis, we have
\begin{equation}
\sup_{\alpha \in (0,1)} \bigg| \PP\Big(\max_{b \in V_a} \sqrt{n} | \tilde \theta_{ab} - \breve \theta_{ab} | \geq c_{\overline W}(\alpha) \Big) - \alpha  \bigg| = o(1),
\end{equation}
where
\begin{equation}
\label{eq:def_W_overline}
\overline W = \max_{b \in V_a} \frac{1}{\sqrt n} \bigg| \sum_{i=1}^n \tilde z_{iab} e_{i} \bigg|
\end{equation}
and the bootstrap critical value is defined as
\[
  c_{\overline W}(\alpha) = \inf\{t \in \RR: \PP(\overline W \leq t) \geq 1-\alpha\}.
\]
\end{corollary}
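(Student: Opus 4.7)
My plan is to reduce the two-sided statement to the one-sided statement in Theorem~\ref{thm:simultaneous} by doubling the index set. Observe that for any real number $u$ we have $|u|=\max\{u,-u\}$, so
\[
\max_{b\in V_a}\sqrt{n}\,|\tilde\theta_{ab}-\breve\theta_{ab}|
=\max_{b\in V_a}\max\bigl\{\sqrt{n}(\tilde\theta_{ab}-\breve\theta_{ab}),\,\sqrt{n}(\breve\theta_{ab}-\tilde\theta_{ab})\bigr\}.
\]
Formally, I would introduce an augmented index set $\widetilde V_a=V_a\times\{+1,-1\}$ and, for each $(b,s)\in\widetilde V_a$, define $\tilde z_{iabs}=s\cdot\tilde z_{iab}$ and $z_{iabs}=s\cdot z_{iab}$, where $\tilde z_{iab}$ and $z_{iab}$ are given in~\eqref{eq:def_xib_tilde} and~\eqref{eq:def_xib}. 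With this notation, the Bahadur representation from Theorem~\ref{thm:main} yields
\[
\max_{b\in V_a}\sqrt{n}\,|\tilde\theta_{ab}-\breve\theta_{ab}|
=\max_{(b,s)\in\widetilde V_a}\frac{1}{\sqrt n}\sum_{i=1}^n z_{iabs}+o_P(1),
\]
so the two-sided test statistic is, up to negligible remainder, a one-sided maximum over an index set of size $2(p-1)$.

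Next I would verify that Theorem~\ref{thm:simultaneous} applies to the augmented collection $\{z_{iabs}\}$. The assumption {\bf RR} concerns the covariance structure $\gamma_{abc}(x_i)=z_{iab}z_{iac}-\EE(z_{iab}z_{iac})$; the corresponding quantities for the signed collection satisfy $\gamma_{a(b,s)(c,s')}(x_i)=ss'\gamma_{abc}(x_i)$, so the $\ell_\infty$ bound $\eta_n$ and the variance bound $\tau_n^2$ carry over without change, and likewise $c_0\le\tfrac1n\sum_i\EE z_{iabs}^2\le C_0$ holds uniformly. The rate condition~\eqref{eq:asmp_regime_simultaneous} is unaffected since $\log(2p)\asymp\log p$. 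Hence the one-sided Gaussian multiplier bootstrap result from Theorem~\ref{thm:simultaneous} applies and gives, for the bootstrap counterpart
\[
\overline W=\max_{(b,s)\in\widetilde V_a}\frac{1}{\sqrt n}\sum_{i=1}^n s\,\tilde z_{iab}e_i
=\max_{b\in V_a}\frac{1}{\sqrt n}\left|\sum_{i=1}^n\tilde z_{iab}e_i\right|,
\]
the uniform Kolmogorov distance bound
\[
\sup_{\alpha\in(0,1)}\Bigl|\PP\bigl(\max_{(b,s)\in\widetilde V_a}\sqrt n(s(\tilde\theta_{ab}-\breve\theta_{ab}))\ge c_{\overline W}(\alpha)\bigr)-\alpha\Bigr|=o(1).
\]
Identifying the left-hand event with $\{\max_{b\in V_a}\sqrt n|\tilde\theta_{ab}-\breve\theta_{ab}|\ge c_{\overline W}(\alpha)\}$ yields the claim.

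The only non-routine step is checking that the symmetrized collection still meets condition~{\bf RR}; this is effectively immediate since sign flips preserve the relevant absolute-value bounds and variances. Everything else is a direct invocation of Theorem~\ref{thm:simultaneous}, and the proof is essentially a two-line reformulation after that observation.
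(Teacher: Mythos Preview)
Your proposal is correct and matches the paper's approach exactly: the paper merely states that since $|\tilde\theta_{ab}-\breve\theta_{ab}|=\max\{\tilde\theta_{ab}-\breve\theta_{ab},\breve\theta_{ab}-\tilde\theta_{ab}\}$ the corollary is ``straightforward,'' and you have spelled out precisely that index-doubling reduction (including the check that assumption~{\bf RR} is preserved under sign flips). Nothing further is needed.
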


 We remark that we are not aiming for a tight bound on the
  sample complexity.  For commonly used models, we always have that
  $\gamma_{abc}(x_i)$ in Assumption {\bf RR} converges to 0 at a model
  specific rate.  Theorem \ref{thm:simultaneous} is valid as long as
  the sample size is large enough, so that the sample complexity
  condition in \eqref{eq:asmp_regime_simultaneous} is satisfied.  

Based on Corollary~\ref{corollary_simultaneous}, we reject the null
hypothesis if the test
statistic~\eqref{eq:def_test_statistic_simultaneous} is greater than
$c_{\overline W}(\alpha)$. This gives us a valid simultaneous test for
all the edges connected to some node $a \in V$ with asymptotic Type I
error equal to $\alpha$.

\subsection{Applications of Simultaneous Testing}
\label{sec:application}

In this section, we show three concrete applications of our proposed procedure. Specifically, we consider
\begin{enumerate}[topsep=0pt,itemsep=-1ex,partopsep=1ex,parsep=1ex]
\item testing for isolated node; 
\item support recovery;
\item testing for difference between graphical models.
\end{enumerate}

\paragraph{\emph{Testing for isolated node.}}
For a specific node $a \in V$, we would like to test whether it is
isolated in the graph. This specific structural question translates
into whether the variable $X_a$ is conditionally independent with all the other
nodes. In this case, we would like to test the null hypothesis
\begin{equation}
\label{eq:H0_isolated}
H_0: \theta_{ab}^* = 0 \quad \text{for all } b \in V_a =  \{1, \ldots, p\} \backslash \{a\},
\end{equation}
versus the alternative 
\begin{equation}
\label{eq:H1_isolated}
H_1: \theta_{ab}^* \neq 0 \quad \text{for some } b \in V_a = \{1, \ldots, p\} \backslash \{a\}.
\end{equation}
We can directly apply our simultaneous inference procedure with $\breve\theta_{ab} = 0$.

\paragraph{\emph{Support recovery.}}
For a specific node $a \in V$, we would like to estimate the support
of $a$ defined as
$\text{supp}(a) = \{b \in V_a, \theta_{ab}^* \neq 0\}$. Let $S^*$ be
the true support and we focus on distributions with sub-Gaussian components.  For each node $b \in V_a$, let $\tau_{ab}$ be a
threshold that we set as 
\[
  \tau_{ab} = \sqrt{2\hat V_{ab} \log p/n},
\]
where $\hat V_{ab}$ is the variance estimator defined in \eqref{eq:variance_est}.
We can estimate the support $S^*$ by thresholding the values
$\tilde \theta_{ab}$ that are smaller than $\tau_{ab}$. In particular,
the support recovery procedure return the following support set
\begin{equation}
\label{eq:support_set}
\hat S(\tau_{ab}) = \{ b \in V_a, | \tilde \theta_{ab} | > \tau_{ab} \}.
\end{equation}

We have the following result on the support recovery.
\begin{corollary}
Suppose that the values $\theta_{ab}^{*}$ on the true support are
bounded from below as 
\begin{equation*}
|\theta_{ab}^*| > \sqrt{\frac{8\hat V_{ab} \log p}{n}}, \qquad \text{for all } b \in S^*.
\end{equation*}
Then 
\begin{equation}
\inf \PP \big( \hat S(\tau_{ab}) = S^* \big) \xrightarrow{ n \to \infty } 1,
\end{equation}
where the infimum is taken over all data generating procedures that satisfy the minimum signal strength condition.
\end{corollary}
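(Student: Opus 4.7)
The plan is to show that with probability tending to one, the estimator $\tilde\theta_{ab}$ sits on the correct side of the threshold $\tau_{ab}=\sqrt{2\hat V_{ab}\log p/n}$ for every $b\in V_a$ simultaneously. Write $\hat S=\hat S(\tau_{ab})$. Recovery $\hat S=S^*$ is equivalent to the two events
\[
\mathrm{(A)}\ \ |\tilde\theta_{ab}|\le \tau_{ab}\ \text{for all }b\notin S^*,\qquad
\mathrm{(B)}\ \ |\tilde\theta_{ab}|>\tau_{ab}\ \text{for all }b\in S^*,
\]
so the whole argument reduces to controlling $\max_{b\in V_a}\sqrt{n}\,|\tilde\theta_{ab}-\theta_{ab}^*|$.

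First I would invoke the Bahadur representation of Theorem \ref{thm:main} for every $b\in V_a$, writing
\[
\sqrt{n}(\tilde\theta_{ab}-\theta_{ab}^*)=\tfrac{1}{\sqrt{n}}\sum_{i=1}^n z_{iab}+R_{ab},
\]
where $z_{iab}$ is as in \eqref{eq:def_xib} and $\max_{b\in V_a}|R_{ab}|=o_P(1)$ under the stated scaling $\sqrt{n}\lambda_1\lambda_2 m=o(1)$. Since each $z_{iab}$ has variance $V_{ab}$ bounded above and below by constants (assumption \textbf{RR}), the Gaussian approximation underlying Theorem \ref{thm:simultaneous} yields
\[
\PP\!\Bigl(\max_{b\in V_a}\tfrac{1}{\sqrt{V_{ab}}}\bigl|\tfrac{1}{\sqrt{n}}\sum_i z_{iab}\bigr|\le\sqrt{2\log p}\,(1+o(1))\Bigr)\to 1,
\]
which is the standard $\sqrt{2\log p}$ Gaussian maximum bound applied to the bootstrap-approximable statistic. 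Combined with the consistency of $\hat V_{ab}$ (the estimator in \eqref{eq:variance_est}, whose consistency is established in the appendix), this gives, with probability $\to 1$, the uniform bound
\[
\max_{b\in V_a}\bigl|\tilde\theta_{ab}-\theta_{ab}^*\bigr|\le\sqrt{\tfrac{2\hat V_{ab}\log p}{n}}\,(1+o(1))=\tau_{ab}(1+o(1)).
\]

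Now handle (A) and (B). For $b\notin S^*$, $\theta_{ab}^*=0$ so $|\tilde\theta_{ab}|\le\tau_{ab}(1+o(1))$, which is $\le \tau_{ab}$ (absorbing the $o(1)$ into a slightly shrunk threshold—or, equivalently, the strict inequality in the minimum-signal assumption gives us room to spare). For $b\in S^*$, the minimum-signal hypothesis $|\theta_{ab}^*|>\sqrt{8\hat V_{ab}\log p/n}=2\tau_{ab}$ combined with the triangle inequality gives
\[
|\tilde\theta_{ab}|\ge|\theta_{ab}^*|-|\tilde\theta_{ab}-\theta_{ab}^*|>2\tau_{ab}-\tau_{ab}(1+o(1))>\tau_{ab},
\]
so the true edges are detected. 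A final union bound over the two events yields $\PP(\hat S=S^*)\to 1$, and since the bounds depend on $\theta^*$ only through quantities controlled by the standing assumptions, the infimum over the parameter class still converges to one.

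The main obstacle is the uniform-in-$b$ deviation bound $\max_{b\in V_a}\sqrt{n}|\tilde\theta_{ab}-\theta_{ab}^*|\lesssim\sqrt{2\hat V_{ab}\log p}$; a naive union bound over $p-1$ coordinates combined with Corollary \ref{corollary:normality} is too crude when the Bahadur remainders $R_{ab}$ are not controlled uniformly, so the real work is leveraging the Gaussian-multiplier-bootstrap machinery of Theorem \ref{thm:simultaneous} (together with uniform control of $R_{ab}$ under \eqref{eq:asmp_regime_simultaneous}) to obtain the sharp $\sqrt{2\log p}$ factor that exactly matches the threshold $\tau_{ab}$.
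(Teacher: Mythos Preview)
Your proposal is correct and follows exactly the approach the paper has in mind: the paper omits the proof entirely and simply refers to Proposition~3.1 of \citet{Zhang2014Simultaneous}, whose argument is precisely the two-event decomposition (A)/(B), the uniform Bahadur-type linearization, and the Gaussian-maximum bound $\max_b \sqrt{n}\,|\tilde\theta_{ab}-\theta_{ab}^*|/\sqrt{\hat V_{ab}}\le \sqrt{2\log p}$ with probability tending to one. One small tightening: in (A) you want the clean statement $\PP\bigl(\max_b \sqrt{n}\,|\tilde\theta_{ab}-\theta_{ab}^*|/\sqrt{\hat V_{ab}}>\sqrt{2\log p}\bigr)\to 0$ (which follows from the Gaussian tail $p\cdot \PP(|Z|>\sqrt{2\log p})\to 0$ plus the bootstrap/Gaussian approximation), rather than the weaker $(1+o(1))$ bound you wrote, since the latter does not immediately give $|\tilde\theta_{ab}|\le \tau_{ab}$ without extra slack.
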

The proof follows in a similar way to the proof of Proposition 3.1 in
\citet{Zhang2014Simultaneous} and is omitted here. The result shows
that we are able to consistently recover the support of any node with
overwhelming probability.

\paragraph{\emph{Testing the difference between graphical models.}} We
consider a two-sample problem in which we wish to test whether the
parameters of two graphical models, with the same set of nodes and
belonging to the same exponential family of the form
in~\eqref{eq:pairwise}, are the same. For example, we may have the
data for the same set of nodes collected in different time periods, and
we want to test whether the graph structure changes over time.  As
another example, consider functional brain connectivity. It is of
interest to test whether brain connectivity is the same for
the healthy subjects and people with a certain disorder. 

Formally, suppose there are two densities $p_{\theta_{ab,1}^*}$ and
$p_{\theta_{ab,2}^*}$ of the form in~\eqref{eq:pairwise}, indexed by
parameter vectors $\theta_{ab,1}^*$ and $\theta_{ab,2}^*$.  Given
$n_1$ i.i.d.~samples $\{x_{i,1}\}_{i\in[n_1]}$ from
$p_{\theta_{ab,1}^*}$ and $n_2$ i.i.d.~samples
$\{x_{i,2}\}_{i\in[n_2]}$ from $p_{\theta_{ab,2}^*}$, we would like to
test the null hypothesis
\begin{equation}
\label{eq:H0_difference}
H_0: \theta_{ab,1}^* = \theta_{ab,2}^* \quad \text{for all }  a, b \in V \times V,
\end{equation}
versus the alternative 
\begin{equation}
\label{eq:H1_difference}
H_1: \theta_{ab,1}^* \neq \theta_{ab,2}^* \quad \text{for some } a, b \in V \times V.
\end{equation}

In order to create a test statistic for the difference, we first apply
the three step procedure on each group of observations. That is, we
obtain the estimators $\tilde \theta_{ab,1}$, $\tilde \theta_{ab,2}$
and estimates of their variances $\hat V_{ab,1}, \hat
V_{ab,2}$. According to the Bahadur representation \eqref{eq:bahadur}
in Theorem~\ref{thm:main}, we have
\begin{equation}
\label{eq:bahadur1}
\sqrt{n_1}\cdot \rbr{\tilde \theta_{ab,1} - \theta_{ab,1}^{*}} = - \hat \sigma_{n,ab,1}^{-1} 
\cdot\sqrt{n_1} \EE_{n_1} \sbr{w_{ab,1}^{*\top}\rbr{\Gamma_{ab}(x_{i,1})\theta_1^{ab,*} + g_{ab}(x_{i,1})} } + o_\PP(1),
\end{equation}
and
\begin{equation}
\label{eq:bahadur2}
\sqrt{n_2}\cdot \rbr{\tilde \theta_{ab,2} - \theta_{ab,2}^{*}} = - \hat \sigma_{n,ab,2}^{-1} 
\cdot\sqrt{n_2} \EE_{n_2} \sbr{w_{ab,2}^{*\top}\rbr{\Gamma_{ab}(x_{i,2})\theta_2^{ab,*} + g_{ab}(x_{i,2})} } + o_\PP(1).
\end{equation}
We propose to use the following test statistic 
\begin{equation}
\sqrt{n_1 + n_2} \cdot \max_{a, b \in V \times V} |\tilde \theta_{ab,1} - \tilde \theta_{ab,2}|,
\end{equation}
which will allow us to identify sparse changes in parameter values.
We reject the null hypothesis for large values of the test statistic
above. Next, we describe how to estimate the quantiles of the test
statistic using the multiplier bootstrap.

Denote 
\begin{equation}
\label{eq:def_xib_tilde1}
\tilde z_{iab,1} = -\sigma_{n,ab,1}^{-1} \cdot \tilde w_{ab,1}^{\top} \Big(\Gamma_{ab}(x_{i,1}) \tilde \theta^{ab}_1 + g_{ab}(x_{i,1}) \Big),
\end{equation}
and
\begin{equation}
\label{eq:def_xib_tilde2}
\tilde z_{iab,2} = -\sigma_{n,ab,2}^{-1} \cdot \tilde w_{ab,2}^{\top} \Big(\Gamma_{ab}(x_{i,2}) \tilde \theta^{ab}_2 + g_{ab}(x_{i,2}) \Big).
\end{equation}
We generate two sequences of independent standard Gaussian random
variables
\[
  e_{i,j} \sim N(0,1) \qquad \text{for } i = 1,\ldots, n_j, \text{ and } j=1,2,
\]
that are independent
of data as well.  The multiplier bootstrap statistic is defined as
\begin{equation}
\label{eq:def_W_overline_diff}
\overline W = \frac{1}{\sqrt{n_1 + n_2}} \cdot \max_{a,b \in V \times V} \abr{  \rbr{1+\frac{n_2}{n_1}}\sum_{i=1}^{n_1} \tilde z_{iab,1} e_{i,1} - \rbr{1+\frac{n_1}{n_2}}\sum_{i=1}^{n_2} \tilde z_{iab,2} e_{i,2} }
\end{equation}
and
\[
  c_{\overline W}(\alpha) = \inf\{t \in \RR: \PP(\overline W \leq t) \geq 1-\alpha\}
\]
is the bootstrap critical value.

Similar to Corollary \ref{corollary_simultaneous}, under the null hypothesis, we have
\begin{equation}
\sup_{\alpha \in (0,1)} \abr{ \PP\rbr{ \sqrt{n_1 + n_2} \cdot \max_{a,b \in V \times V} \abr{\tilde \theta_{ab,1} - \tilde \theta_{ab,2}}
 \geq c_{\overline W}(\alpha) } - \alpha  } = o(1).
\end{equation}
This gives us a valid procedure for testing whether the parameters of
two graphical models are the same or not.

A recent paper \citep{kim2019two} proposed a different inference procedure that directly estimates the parameters of the differential network. 
\citet{Xia2015Testing} studied the two sample problem in the context
of Gaussian graphical models and proposed the following
test statistic
\begin{equation}
T = \max_{a, b \in V \times V} \, \frac{\rbr{\tilde \theta_{ab,1} - \tilde \theta_{ab,2}}^2}{\hat V_{ab,1} + \hat V_{ab,2}} 
\end{equation}
and showed that under the null hypothesis
the limiting distribution of the test statistic satisfies
\begin{equation}
\label{eq:test_diff_limit}
\PP \big(T - 2\log p + \log\log p \leq t \big) \to \exp \big\{ (-2 \pi)^{-\frac 12} \exp(-t/2) \big\},
\qquad \text{as }n \rightarrow \infty.
\end{equation}
Unfortunately, the convergence to the extreme value distribution is
rather slow and, as a result, the critical values based on the
limiting approximation are not accurate for finite samples.  In
comparison, our multiplier bootstrap procedure provides non-asymptotic
approximation to quantiles of the test statistic.  Furthermore, the
approximation quality improves polynomially with the sample size and,
as a result, provides a good performance for small and moderate sample
sizes. 

Extending the above described inferential procedure to differential 
networks with latent variables \citep{Na2019Estimating} and
differential functional graphical models 
\citep{Zhao2019Direct, Zhao2020FuDGE} is 
left for future work.

\section{Extension to General $L$}
\label{sec:general_L}

 So far we have assumed that the number of parameters
  corresponding to an edge is $L = 1$. In this section we extend our
  results to general $L$. Throughout the section, we treat $L$ as a
  fixed quantity. Recall that $t_{ab}^{(l)}$, $l\in[L]$, represent
  sufficient statistics.

\paragraph{\emph{Inference for a fixed edge.}}
For a fixed index $(a, b)$,
the parameter of interest is the $L$ dimensional vector,
$\theta_{ab}^{[L]} = \big[\theta_{ab}^{(1)}, \ldots, \theta_{ab}^{(L)}\big]$.
There is no edge between $a$ and $b$ in the corresponding conditional
independence graph if and only if $\theta_{ab}^{(1)}=\cdots=\theta_{ab}^{(L)}=0$. 
Following the same procedure as in Section \ref{sec:methodology}, we have the logarithm of conditional density as
\begin{align*}
  \log q^{ab}_\theta(x) = \dotp{\theta^{ab}}{\vpx} 
  - \Psi^{ab}(\theta, x_{-ab})  
  + h^{ab}(x),
\end{align*}
where $\theta^{ab} \in \RR^{s'}$, with $s' = 2K+2(p-2)L+L$, is the part
of the vector $\theta$ corresponding to
$\left\{ \theta_{a}^{(k)}, \theta_b^{(k)} \right\}_{k\in[K]}$,
$\left\{\theta_{ac}^{(l)},\theta_{bc}^{(l)}\right\}_{l\in[L],c\in-ab}$, and $\left\{\theta_{ab}^{(l)}\right\}_{l\in[L]}$; 
 and
$\vpx = \varphi^{ab}(x) \in \RR^{s'}$ is the corresponding vector of
sufficient statistics
\[
  \left\{t_a^{(k)}(x_a), t_b^{(k)}(x_b)\right\}_{k\in[K]},\
  \left\{ t_{ac}^{(l)}(x_a,x_c), t_{bc}^{(l)}(x_b,x_c) \right\}_{l\in[L], c\in-ab},
  \text{ and }
  t_{ab}^{(l)}(x_a,x_b)_{l\in[L]}.
\]

For notation simplicity, for a given node $c \in -ab$, denote
  $\theta^{ac} \in \RR^{L}$ as the stack of
  $\left\{\theta_{ac}^{(l)}\right\}$ for $l \in [L]$; similarly,
  denote $\theta^{bc} \in \RR^{L}$ as the stack of
  $\left\{\theta_{bc}^{(l)}\right\}$. Let
  $\theta^{ab, -{\rm group}}$ denote the stack of
  $\left\{ \theta_{a}^{(k)}, \theta_b^{(k)} \right\}_{k\in[K]}$ and
  $\left\{\theta_{ab}^{(l)}\right\}_{l\in[L]}$, which are the parameters
  in $\theta^{ab}$ without group structure.  We define $\gamma^{ac}$,
  $\gamma^{bc}$, and $\gamma^{ab, -{\rm group}}$ similarly.  Let
  $E(a,b)$ denote the index set of the parameters
  corresponding to the edge $(a,b)$.
  Figure \ref{illustration} presents an illustrative
  example with $L=K=2$, $p=6$, and $(a,b) = (1,2)$.

\begin{figure}[t]
\begin{center}
\includegraphics[width=13cm]{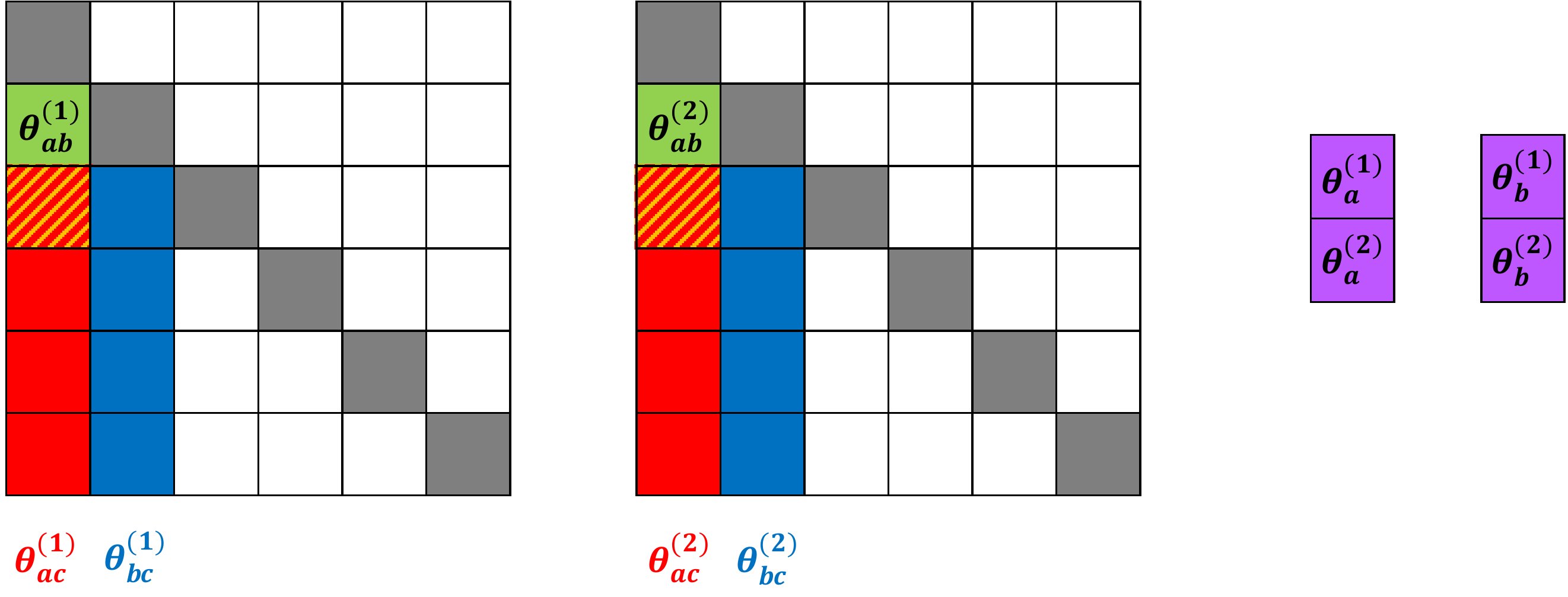}
\caption{An illustrative example with $L=K=2$, $p=6$, and $(a,b) = (1,2)$.
  The green cells are the parameters of interest: $\left\{\theta_{ab}^{(l)}\right\}_{l\in[L]}$;
  the red cells represent $\left\{\theta_{ac}^{(l)}\right\}_{l\in[2],c\in-ab}$;
  the blue cells represent $\left\{\theta_{bc}^{(l)}\right\}_{l\in[2],c\in-ab}$;
  the purple cells represent $\left\{ \theta_{a}^{(k)}, \theta_b^{(k)} \right\}_{k\in[2]}$. 
  These parameters constitute $\theta^{ab} \in \RR^{s'}$.
  The green and purple cells correspond to $\theta^{ab, -{\rm group}}$.
  The striped red cells correspond to $\theta^{ac} = \left\{\theta_{ac}^{(l)}\right\}_{l \in [2]}$ with $c = 3$. 
  Finally, the white cells are parameters not used in the estimation,
  while the gray cells are zero diagonal values. }
\label{illustration}
\end{center}
\end{figure}

   We modify the
   three step procedure in Section \ref{sec:methodology} as follow.

\paragraph{\emph{Step 1:}} We find a pilot estimator of $\theta^{ab}$ by solving the 
following program 
\begin{equation}
  \label{eq:estimation_general_L}
\begin{aligned}
  \hat \theta^{ab} 
  & =  \arg\min_{\theta \in \RR^{s'}} \ \EE_n\sbr{S^{ab}(x_i, \theta)} +
  \lambda_{1} \bigg( \| \theta^{ab, -{\rm group}} \|_1 +
  \sum_{c \in -ab} \Big( \| \theta^{ac} \|_2 +  \| \theta^{bc} \|_2 \Big)  \bigg), 
\end{aligned}  
\end{equation}
where
\begin{equation}
\| \theta^{ab, -{\rm group}} \|_1 = \sum_{l=1}^L | \theta_{ab}^{(l)} |   + \sum_{k=1}^K  |\theta_{a}^{(k)}| +  |\theta_{b}^{(k)}|
\end{equation}
and $\lambda_1$ is a tuning parameter.
Since $L > 1$, we use the group Lasso penalty to estimate $\hat \theta^{ab}$. 
Let $\hat M_1$ be the support of $\hat \theta^{ab}$:
\begin{equation}
\hat M_1 = {\rm supp} ( \hat\theta^{ab, -{\rm group}} ) \bigcup   \{ E(a,c) \mid \| \hat\theta^{ac} \|_2 \neq 0 \} \bigcup \{ E(b,c) \mid \| \hat\theta^{bc} \|_2 \neq 0 \}.
\end{equation}

\paragraph{\emph{Step 2:}} For $l \in [L]$, let $\hat \gamma^{abl} \in \RR^{s'-1} $ be a minimizer of 
\begin{multline}
  \label{eq:estimation:step2_general_L}
     \sum_{l \in [L]} \frac 12 \EE_n\Big[
      (\vpaxi[,abl]-\vpaxi[,-abl]^\top \gamma^{abl})^2 + 
      (\vpbxi[,abl]-\vpbxi[,-abl]^\top \gamma^{abl})^2 
    \Big]  \\
     + \lambda_{2} \bigg(  \sum_{l \in [L]} \| \gamma^{abl, -\rm{group}} \|_1  + \sum_{c \in -ab} \Big( \| \gamma^{ac} \|_2 +  \| \gamma^{bc} \|_2 \Big)   \bigg),
   \end{multline}
   where $\lambda_2$ is a tuning parameter. Let $\hat M_2$ be the union of the support of $\hat \gamma^{abl}$:
\begin{equation}
\hat M_2 = \bigcup_{l \in [L]}{\rm supp} ( \hat\gamma^{abl, -{\rm group}} ) \bigcup  \{ E(a,c) \mid \| \hat\gamma^{ac} \|_2 \neq 0 \} \bigcup \{ E(b,c) \mid \| \hat\gamma^{bc} \|_2 \neq 0 \}.
\end{equation}

\paragraph{\emph{Step 3:}}
Let $\tilde M = E(a,b) \cup \hat M_1 \cup \hat M_2$.
We obtain our estimator as a solution to the following 
program
\begin{equation}
  \label{eq:estimation:step3_general_L}
\begin{aligned}
  \tilde \theta^{ab} 
  & = \arg\min_\theta \ \EE_n\sbr{ S^{ab}(x_i, \theta) }
  \qquad \text{s.t.}\quad {\rm supp}(\theta) \subseteq \tilde M.
\end{aligned}  
\end{equation}
Our estimator of $\theta_{ab}^{[L]}$ is $\tilde \theta_{ab}^{[L]} \in \RR^L$, a block of 
$\tilde \theta^{ab}$.

\paragraph{\emph{Asymptotic Normality.}}
For each $l \in [L]$, define $w_l^* \in \RR^{s'}$ with $w^{*}_{abl} = 1$ and
$w^{*}_{-abl} = -\gamma^{abl,*}$, where $\gamma^{abl,*}$ is the population version of $\hat \gamma^{abl}$. 
Define 
\begin{equation}
\eta_{1il} = \vpaxi[,abl]-\vpaxi[,-abl]^\top \gamma^{abl,*} ~~~\text{and}~~~ \eta_{2il} = \vpbxi[,abl]-\vpbxi[,-abl]^\top \gamma^{abl,*},
\end{equation}
and 
\begin{equation}
\sigma_{n, l} =   \EE_n\sbr{\eta_{1il}\vpaxi[,abl] + \eta_{2il}\vpbxi[,abl]}.
\end{equation}
Let $u^*_l = w_l^* / \sigma_{n, l}$ and $U^* \in \RR^{s' \times L}$ as the stack of $u^*_l$: $U^* = [u_1^*, \ldots, u_L^*]$.
Similar to Theorem \ref{thm:main}, 
we obtain the Bahadur representation for $\tilde \theta_{ab}^{[L]} \in \RR^L$ as: 
  \begin{equation}
    \label{eq:bahadur_general_L}
    \begin{aligned}
\sqrt{n}\cdot \rbr{\tilde \theta_{ab}^{[L]} - \theta_{ab}^{*[L]}}
& = - \sqrt{n} \EE_n \sbr{U^{* \top}\rbr{\Gamma(x_i)\theta^{ab,*} + g(x_i)} } + \Delta,
\end{aligned}
\end{equation}
where $\| \Delta \|_\infty = \Ocal\rbr{\phi_{\max}^2\phi_{\min}^{-4} \cdot \sqrt{n}\lambda_1\lambda_2  m }$.
Furthermore, under similar conditions as in Section \ref{sec:asympt-norm-estim}, we obtain 
\begin{equation}
\label{eq:asymptotic_normal_general_L}
  \sqrt{n} \Big( \tilde \theta_{ab}^{[L]} - \theta_{ab}^{*[L]} \Big)  \longrightarrow_D N(0, V_{ab}) , 
\end{equation}
where $V_{ab} \in \RR^{L \times L}$ is the covariance matrix defined as 
$V_{ab} = \Var\big( U^{* \top}(\Gamma(x_i)\theta^{ab,*} + g(x_i)) \big)$.
From \eqref{eq:asymptotic_normal_general_L} we can construct a multivariate confidence interval with asymptotically nominal coverage as before.

\paragraph{\emph{Simultaneous inference.}}
For simultaneous inference, with a fixed node $a \in V$, we would like to test the null hypothesis
\begin{equation}
\label{eq:H0_simultaneous_general_L}
H_0: \theta_{ab}^{*( l)} = \breve\theta_{ab}^{\,(l)} \quad \text{for all } l \in \{1, \ldots, L\} \text{  and  } b \in V_a =  \{1, \ldots, p\} \backslash \{a\},
\end{equation}
for some fixed $\breve\theta_{ab}$ versus the alternative 
\begin{equation}
\label{eq:H1_simultaneous_general_L}
H_1: \theta_{ab}^{*( l)}  \neq \breve\theta_{ab}^{\,(l)} \quad \text{for some } l \in \{1, \ldots, L\} \text{  and  } b \in V_a = \{1, \ldots, p\} \backslash \{a\}.
\end{equation}
Again, the test involves a large number of parameters, $(p-1)L$.

First, note that we can directly apply the procedure developed in
Section 6.  By ignoring the covariance structure of
$\theta_{ab}^{(1)}, \ldots, \theta_{ab}^{(L)}$, we can directly use
the Gaussian multiplier bootstrap.  Specifically, for each
$b \in V_a$, we obtain the Bahadur representation in
\eqref{eq:bahadur_general_L}.  Next, we stack the resulting $p-1$
vectors into a $(p-1)L$ dimensional vector and perform the Gaussian
multiplier bootstrap method to calculate the test statistic and
critical values.  Since $L$ is an absolute constant, all the analysis
in Section \ref{sec:simultaneous} remains valid.  However, such a procedure disregards the
group structure on parameters and ignores the off-diagonal elements of
the covariance matrix $V_{ab}$ when constructing the test and
computing the critical values.

An alternative approach is based on the moderate deviation result
for the $\chi^2$-test developed in \cite{Liu2013Carmer}.
Here, we outline the procedure and refer to \cite{Liu2013Carmer} for technical details. 
First, for each $b \in V_a$, we define
\begin{equation}
T_{nb}^2 = n \cdot \Big( \tilde \theta_{ab}^{[L]} - \breve\theta_{ab}^{[L]} \Big)^\top \cdot ( V_{ab} )^{-1} \cdot \Big( \tilde \theta_{ab}^{[L]} - \breve\theta_{ab}^{[L]} \Big).
\end{equation}
It follows from \eqref{eq:asymptotic_normal_general_L}
that the limiting distribution of $T_{nb}^2$ is $\chi^2_L$.
Under mild conditions, Theorem 2.2 of \cite{Liu2013Carmer} shows that
\[
  \frac{\PP\rbr{T_{nb}^2 \geq x^2}}{\PP\rbr{\chi^2_L \geq x^2}} \rightarrow 1,\quad \text{as }n \rightarrow \infty
\]
uniformly for $x \in [0, o(n^{1/6}))$. 
This motivates the following test statistic
\begin{equation}
\max_{b \in V_a} \, T_{nb}^2.
\end{equation}
We obtain the critical value $y_\alpha$ that satisfies 
\begin{equation*}
(p-1) \cdot \PP \rbr{ \chi^2_L \geq y_\alpha } = - \log(1-\alpha).
\end{equation*}
The null hypothesis is rejected if
$ \max_{b \in V_a} T_{nb}^2 \geq y $.  We can prove that the
asymptotic Type I error is $\alpha$ under the null only when the
dependency among $T_{nb}^2$ is weak.  We refer to \cite{Liu2013Carmer}
for technical details.  The disadvantage of this approach is that, the
terms $T_{nb}^2$ are correlated across $b \in V_a$, which is ignored
when computing the critical value.  Despite ignoring the group
structure, the approach based on multiplier bootstrap can control the
Type I error better with small sample sizes. See Section \ref{sec:experiments_synthetic} for experimental results.

\section{Simulations}
\label{sec:experiments_synthetic}

In this section, we illustrate the finite sample properties of our
inference procedure on several synthetic data sets.  We generate data
from four different Exponential family distributions that were
introduced in Section~\ref{sec:ExpoGM}. The first and third example
involve Gaussian node-conditional distributions, for which we use
regularized score matching. For the second and fourth setting where
the node-conditional distributions follow Truncated Gaussian and
Exponential distribution, respectively, we use regularized
non-negative score matching procedure. Following the recommendation in
\cite{Yu2018Graphical}, we set $\ell_a(x) = \log(x + 1)$ for the
non-negative settings. In each example, we report the mean coverage
rate of 95\% confidence intervals for several coefficients averaged
over 500 independent simulation runs.

\paragraph{\emph{Gaussian graphical model.}}

For the Gaussian setting, we have $X \sim N(0,\Sigma)$ with precision matrix $\Omega = \Sigma^{-1} = (\theta_{ab})$. 
Without loss of generality, say we are interested in $\theta_{12}$. We have
$$
\theta^* = (\theta^*_{11}, \theta^*_{12},\ldots,\theta^*_{1p},\theta^*_{22},\theta^*_{23},\ldots,\theta^*_{2p})^T,
$$
$$
\varphi(x) = \Big(-\frac 12 x_1^2, -x_1x_2,\ldots,-x_1x_p,-\frac 12x_2^2,-x_2x_3,\ldots,-x_2x_p\Big)^T,
$$
$$
\varphi_1(x) = (-x_1,-x_2,\ldots,-x_p,0,\ldots,0)^T,
$$
$$
\varphi_2(x) = (0,-x_1,0,\ldots,0,-x_2,-x_3,\ldots,-x_p)^T,
$$
$$
g(x) = (-1,0,0,\ldots,0,-1,0,\ldots,0)^T,
$$
where for $g$ the second `$-1$' is at location $p+1$. Now we have
\begin{equation*}
  \begin{aligned}
    \gamma^{ab,*} &= \arg\min \
    \EE[
      (\vpaxi[,ab]-\vpaxi[,-ab]^T\gamma)^2 + 
      (\vpbxi[,ab]-\vpbxi[,-ab]^T\gamma)^2 
    ] \\
    &= \arg\min \
    \EE[
      (x_2-(x_1,x_3,\ldots,x_p,0,\ldots,0)^T\gamma)^2 + 
      (x_1-(0,\ldots,0,x_2,x_3,\ldots,x_p)^T\gamma)^2 
    ] .
  \end{aligned}
\end{equation*}

We can see that $\gamma^{ab,*}$ can be partitioned into first $p-1$ elements and last $p-1$ elements: $\gamma^{ab,*} = [\gamma^{ab,*}_1;\gamma^{ab,*}_2]$. The two parts can be optimized separately.
Moreover, both the population quantity $\varphi_1(x)\varphi_1(x)^\top$ and $\varphi_2(x)\varphi_2(x)^\top$ are the covariance matrix $\Sigma$ after rearranging terms and ignoring zero components. 
Assumption {\bf SE} is satisfied with most of the commonly used covariance matrices with full rank. 
Moreover, we can verify that $\gamma^{ab,*}_1$ and $\gamma^{ab,*}_2$ are proportional to the second and first column of the precision matrix $\Omega$. Therefore, assumption {\bf M} is satisfied when the columns of the precision matrix $\Omega$ are sparse.

For the experiment, we set diagonal entries of $\Omega$ as
$\theta_{jj} = 1$. The sparsity pattern of the precision matrix corresponds to
the the 4-nearest neighbor graph and the non-zero coefficients
are set as $\theta_{j, j-1} = \theta_{j-1, j} = 0.5$ and
$\theta_{j, j-2} = \theta_{j-2, j} = 0.3$. 
% The structure of the independence graph is visualized in Figure \ref{fig:NN_structure} for a specific node.
We set the sample size $n = 300$ and
vary the number of nodes $p$.  Table~\ref{simu1} shows the empirical
coverage rate for different values of $p$ for four chosen
coefficients. As is evident from the table, the coverage probabilities
for the unknown coefficient is remarkably close to nominal. 

%\begin{figure}[!t]
%\begin{center}
%\includegraphics[width=4cm]{near.png}
%\caption{4-nearest neighbor graph structure}
%\label{fig:NN_structure}
%\end{center}
%\end{figure}

\begin{table}[!h]
  \center
    \begin{tabular}{ccccc}
\hline
& $\theta_{1,2}$ & $\theta_{1,3}$ & $\theta_{1,4}$ & $\theta_{1,10}$ \\\hline
$p = 50$ & 95.4\% & 92.4\% & 93.8\% & 93.2\% \\
$p = 200$ & 94.6\% & 92.4\% & 92.6\% & 94.0\%\\
$p = 400$ & 94.6\% & 94.8\% & 92.6\% & 93.8\% \\\hline 
    \end{tabular}
    \caption{Empirical Coverage for Gaussian Graphical Model}
      \label{simu1}
\end{table}%

\paragraph{\emph{Non-negative Gaussian.}}

For simplicity we first consider score matching for non-negative Gaussian model with $\ell(x) = x^2$. 
Following the setting and notation in the previous paragraph, we have
\begin{equation*}
\begin{aligned}
\tilde\varphi_1(x) &= x_1 \cdot \varphi_1(x) = x_1 \cdot (-x_1,-x_2,\ldots,-x_p,0,\ldots,0)^T, \\
\tilde\varphi_2(x) &= x_2 \cdot \varphi_2(x) = x_2 \cdot (0,-x_1,0,\ldots,0,-x_2,-x_3,\ldots,-x_p)^T.
\end{aligned}
\end{equation*}
As before, $\gamma^{ab, *}$ is separable into two parts; we focus on one to obtain
\begin{equation*}
\gamma^{ab, *}_2 = \sbr{ \EE \,\, x_1^2 \cdot 
 \begin{pmatrix}
  x_1^2 & x_1x_3 & \cdots & x_1x_p \\
  x_1x_3 & x_3^2 & \cdots & x_3x_p \\
  \vdots  & \vdots  & \ddots & \vdots  \\
  x_1x_p & x_3x_p & \cdots & x_p^2
 \end{pmatrix}
 }^{-1}
 \cdot
 \sbr{ \EE \,\, x_1^2 x_2 \cdot 
 \begin{pmatrix}
  x_1 \\
  x_3 \\
  \vdots    \\
  x_p
 \end{pmatrix}
 }.
\end{equation*}
We can see that it contains expectations, such as $x_1^2x_3x_4$, which are hard to calculate explicitly, in addition to the matrix inversion. 
To the best of our knowledge, this calculation is intractable. 
If we instead use generalized score matching with $\ell(x) = \log(x+1)$, the calculation would be more complicated.

One exception is when the precision matrix $\Omega = I_p$, which means $x_i$ follows i.i.d. non-negative standard normal distribution.
Using the moments $\EE[x] = \sqrt{2/\pi}$, $\EE[x^2] = 1$, $\EE[x^3] = \sqrt{8/\pi}$, $\EE[x^4] = 3$, we can calculate $\gamma^{ab, *}$ explicitly. 
It turns out that the two parts in $\gamma^{ab, *}$ are the same. 
All their components take the same value at approximately $1/p$, except for one component that takes the value approximately $1.6/p$. Therefore, we can see that the sparsity assumption on $\gamma^{ab, *}$ is violated. 
It instead only satisfies a weaker condition that $\| \gamma^{ab, *} \|_1 \leq 2$ for large $p$. 
%In this case, Step 2 will likely to include many of the components into its support, and therefore we would need more samples to obtain a reasonable result.
Similarly, we can calculate that $\| M^*_{ab} \|_1 \leq 5$ for large $p$. 
We then follow the debias method in Section \ref{sec:relaxation_L1} to construct confidence intervals.

For the simulation, we use the same setting as for the Gaussian graphical model with $\theta_{j, j-1} = \theta_{j-1, j} = 0.3$ and
$\theta_{j, j-2} = \theta_{j-2, j} = 0.1$. 
We set $\ell_a(x) = \log(x + 1)$, and use the minimax tilting method to generate the data \citep{Botev2017normal}. 
We first support the bounded $L_1$ norm condition of $M^*$ through experiments with a small $p = 20, 50$ and large $n$. 
Here we focus on the edge $(a,b)=(1,2)$; results for other edges are similar, and are therefore omitted. 
Since we have enough samples, we estimate $M$ as the exact inverse of the empirical quantity $\EE_n[ \Gamma(x_i') ]$. 
Table \ref{simu_verify_M_NNG} shows the average mean and maximum of the $L_1$ norm of $M$ on column $ab$, based on 500 independent simulation runs with different sample sizes. 
This shows that the $L_1$ norm of the column $ab$ of $M^*$ would be bounded from above.
These experimental results indicate that the bounded $L_1$ norm condition of $M^*$ is reasonable.

Table~\ref{simu_TN} shows the empirical coverage rate for various choices of $p$ and $n$. Note that since we are doing sample splitting, the real sample size is $2n$. 
We observe that by using the debias method, we can obtain nominal coverage rate even for relatively large $p$ with small~$n$. 
%compared to the previous three-step approach in the initial submission.

\begin{table}[!h]
  \center
\begin{tabular}{ccccc}
\hline
& $n = 500$ & $n = 2000$ & $n = 10000$ & $n = 50000$ \\\hline
averaged mean, $p = 20$ & 13.01 & 11.42 & 11.16 & 11.10 \\
averaged max, $p = 20$   & 17.84 & 13.46 & 11.95 & 11.52 \\\hline 
averaged mean, $p = 50$ & 24.71 & 15.19 & 12.90 & 12.72 \\
averaged max, $p = 50$   & 32.65 & 17.86 & 14.13 & 13.17 \\\hline 
\end{tabular}
\caption{Averaged mean and max of the $L_1$ norm of $M$, for Non-negative Gaussian}
  \label{simu_verify_M_NNG}
\end{table}%

\begin{table}[!h]
  \center
\begin{tabular}{ccccc}
\hline
& $\theta_{1,2}$ & $\theta_{1,3}$ & $\theta_{1,4}$ & $\theta_{1,10}$ \\\hline
$p = 100, n = 150$ & 94.2\% & 93.8\% & 95.0\% & 92.4\%\\
$p = 200, n = 300$ & 95.2\% & 96.6\% & 94.8\% & 94.6\%\\
$p = 300, n = 500$ & 94.8\% & 95.8\% & 95.0\% & 94.4\% \\\hline 
\end{tabular}
\caption{Empirical Coverage for Non-negative Gaussian, using debias method}
  \label{simu_TN}
\end{table}%

%For this simulation we use the
%same setting as for the Gaussian graphical model. We use the minimax
%tilting method to generate the data \citep{Botev2017normal}. The
%sample size is set to $n = 1500$. Table~\ref{simu_TN} shows the
%empirical coverage rate for $p=100$ and $p=300$ nodes. We observe that a larger sample size is required for obtaining confidence intervals
%with nominal coverage.

%\begin{table}[!h]
%  \center
%\begin{tabular}{ccccc}
%\hline
%& $w_{1,2}$ & $w_{1,3}$ & $w_{1,4}$ & $w_{1,10}$ \\\hline
%$p = 100$ & 91.6\% & 92.2\% & 92.8\% & 92.2\%\\
%$p = 300$ & 91.0\% & 93.2\% & 90.4\% & 91.8\% \\\hline 
%\end{tabular}
%\caption{Empirical Coverage for Non-negative Gaussian}
%  \label{simu_TN}
%\end{table}%

%\begin{table}[!h]
%  \center
%\begin{tabular}{ccccc}
%\hline
%& $w_{1,2}$ & $w_{1,3}$ & $w_{1,4}$ & $w_{1,10}$ \\\hline
%$p = 100$ & 92.0\% & 90.2\% & 92.4\% & 93.8\%\\
%$p = 300$ & 90.8\% & 91.4\% & 90.8\% & 90.0\% \\\hline 
%\end{tabular}
%\caption{Empirical Coverage for Non-negative Gaussian, with all components selected}
%  \label{simu_TN}
%\end{table}%

\paragraph{\emph{Normal conditionals.}} For the experiment, we 
consider a special case of normal conditionals with $L = 1$ parameter matrix, whose density is 
\begin{equation}
p(x ; B, \beta, \beta^{(2)}) \propto
  \exp \cbr{
    \sum_{a \neq b} \beta_{ab} x_a^2x_b^2 +
    \sum_{a \in V} \beta_a^{(2)} x_a^2 + 
    \sum_{a \in V} \beta_ax_a },\quad x \in \RR^{p}.
\end{equation}

This distribution is also considered in \citet{Lin2015High}.
We set
$\beta_j = 0.4$, $\beta_j^{(2)} = -2$, and we use a 4 nearest neighbor
lattice dependence graph with interaction matrix:
$\beta_{j, j-1} = \beta_{j-1, j} = -0.2$ and
$\beta_{j, j-2} = \beta_{j-2, j} = -0.2$.  Since the univariate
marginal distributions are all Gaussian, we generate the data using a
Gibbs sampler. The first 500 samples were discarded as `burn in' step,
and of the remaining samples, we keep one in three.

We first support the assumption {\bf M} through experiments with a small $p = 20$ and large $n$. 
Here we focus on the edge $(a,b)=(1,2)$; results for other edges are similar, and are therefore omitted. 
We estimate $\hat\gamma^{ab}$ as in Step 2, but without the $L_1$ regularization term since we have enough samples. 
For normal conditionals, we have $\hat\gamma^{ab} \in \RR^{2p} = \RR^{40}$.
There are five components in $\hat\gamma^{ab}$ with relatively large non-zero values (not decreasing with $n$), and we calculate the mean and maximum absolute value of the remaining 35 components. 
Table \ref{simu_verify_M_NC} shows the average mean and maximum absolute values of these 35 components, based on 500 independent simulation runs with different sample sizes. 
This suggests that the population quantity $\gamma^{ab, *}$ would be close to a sparse vector, with an infinite amount of samples.
These experimental results indicate that assumption {\bf M} is reasonable, at least in an approximately sparse version.

We then set the number
of samples $n = 500$, and follow the proposed three-step procedure to calculate the coverage rate. Table~\ref{simu2} shows the empirical coverage
rate for $p=100$ and $p=300$ nodes. Again, we see that our inference
algorithm behaves well on the above Normal Conditionals Model.

\begin{table}[!h]
  \center
\begin{tabular}{ccccc}
\hline
& $n = 500$ & $n = 2000$ & $n = 10000$ & $n = 50000$ \\\hline
average mean & $4.3 \times 10^{-3}$ & $2.7 \times 10^{-3}$ & $1.4 \times 10^{-3}$ & $0.7 \times 10^{-3}$ \\
average max   & $9.7 \times 10^{-3}$ & $8.4 \times 10^{-3}$ & $6.9 \times 10^{-3}$ & $5.5 \times 10^{-3}$ \\\hline 
\end{tabular}
\caption{Average mean and max on the 35 components, for Normal Conditionals}
  \label{simu_verify_M_NC}
\end{table}%

\begin{table}[!h]
\center
\begin{tabular}{ccccc}
\hline
& $\beta_{1,2}$ & $\beta_{1,3}$ & $\beta_{1,4}$ & $\beta_{1,10}$ \\\hline
$p = 100$ & 93.2\% & 93.4\% & 94.6\% & 95.0\%\\
  $p = 300$ & 93.2\% & 93.0\% & 92.6\% &93.0\%\\\hline
\end{tabular}
  \caption{Empirical Coverage for Normal Conditionals}
\label{simu2}
\end{table}%

\paragraph{\emph{Exponential graphical model.}} We choose $\theta_j = 2$, and a 2 nearest neighbor dependence graph with
$\theta_{j, j-1} = \theta_{j-1, j} = 0.3$. 
We again first support the assumption {\bf M} through experiment with a small $p = 20$ and large $n$, where we focus on the edge $(a,b)=(1,2)$ and use a Gibbs sampler to generate data. 
For exponential graphical model, we have $\hat\gamma^{ab} \in \RR^{2p-2} = \RR^{38}$.
There are four components in $\hat\gamma^{ab}$ with relatively large non-zero values (not decreasing with $n$), and we calculate the mean and maximum absolute value of the remaining 34 components. 
Table \ref{simu_verify_M_EGM} shows the average mean and maximum absolute values of these 34 components, based on 500 independent simulation runs with different sample sizes. 
This suggests that the population quantity $\gamma^{ab, *}$ would be close to a sparse vector, with an infinite amount of samples.
Once again, this experiment results indicate that assumption {\bf M} is reasonable, at least in an approximately sparse version.

We then set $n = 1000$ and the empirical coverage rate and
histograms of estimates of four selected coefficients are presented in
Table~\ref{simu3} and Figures~\ref{fig:histogram} for $p=100$ and
$p=300$, respectively.

\begin{table}[!h]
  \center
\begin{tabular}{ccccc}
\hline
& $n = 500$ & $n = 2000$ & $n = 10000$ & $n = 50000$\\\hline
average mean & $3.6 \times 10^{-3}$ & $2.2 \times 10^{-3}$ & $0.9 \times 10^{-3}$ & $0.4 \times 10^{-3}$ \\
average max   & $9.4 \times 10^{-2}$ & $6.8 \times 10^{-3}$ & $3.8 \times 10^{-3}$ & $1.2 \times 10^{-3}$ \\\hline 
\end{tabular}
\caption{Average mean and max on the 34 components, for Exponential Graphical Model}
  \label{simu_verify_M_EGM}
\end{table}%

\begin{table}[!h]
\center
\begin{tabular}{ccccc}
\hline
& $\theta_{1,2}$ & $\theta_{1,3}$ & $\theta_{1,4}$ & $\theta_{1,10}$ \\\hline
$p = 100$ & 94.2\% & 91.6\% & 92.6\% & 92.4\%\\
$p = 300$ & 92.6\% & 92.0\% & 92.2\% & 92.4\%\\\hline
\end{tabular}
\caption{Empirical Coverage for Exponential Graphical Model}
\label{simu3}
\end{table}%

\begin{figure*}[!t]
        \centering
        \begin{subfigure}[b]{0.235\textwidth}
            \centering
            \includegraphics[width=\textwidth]{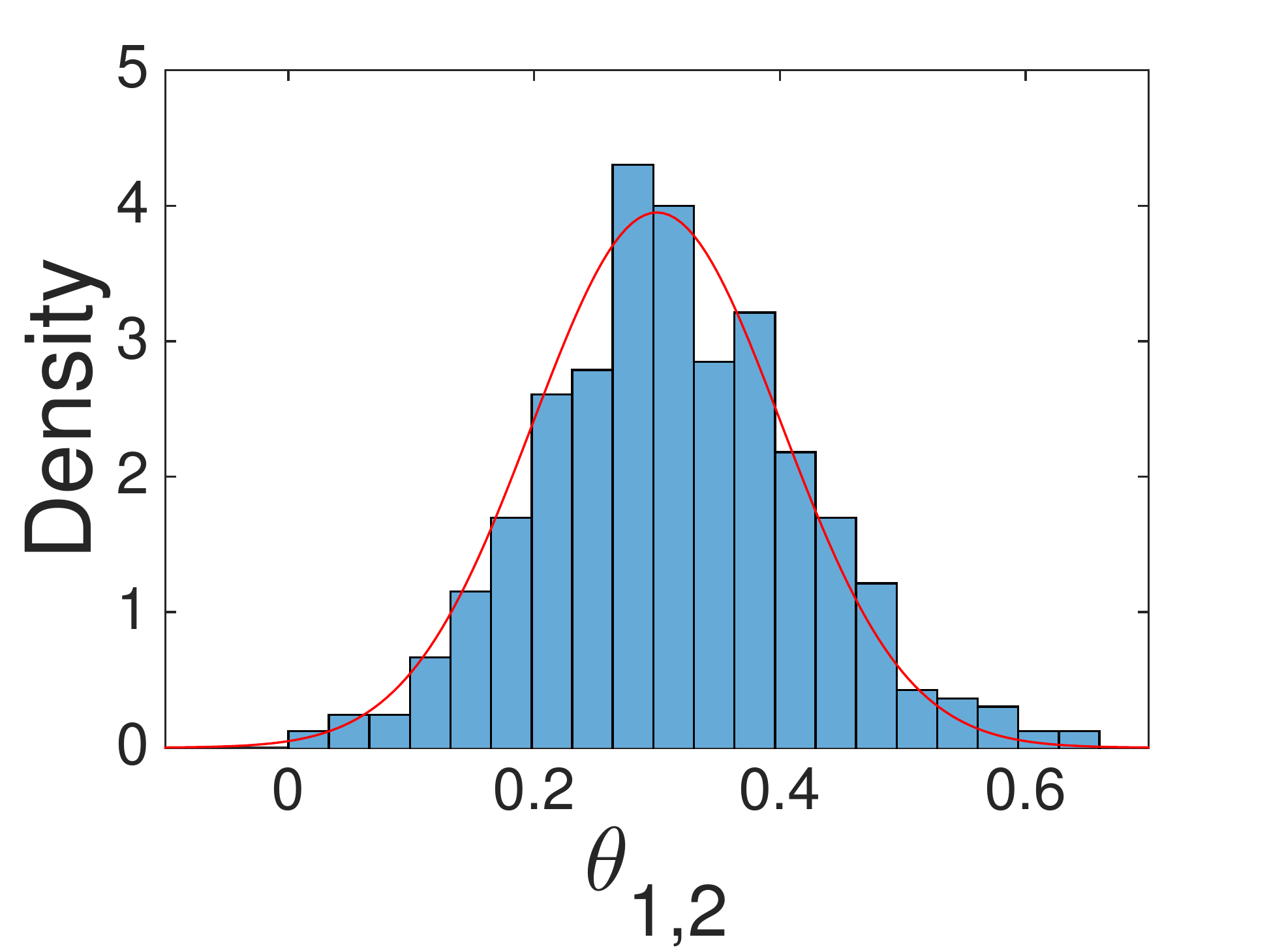} 
        \end{subfigure}
         \begin{subfigure}[b]{0.235\textwidth}
            \centering
            \includegraphics[width=\textwidth]{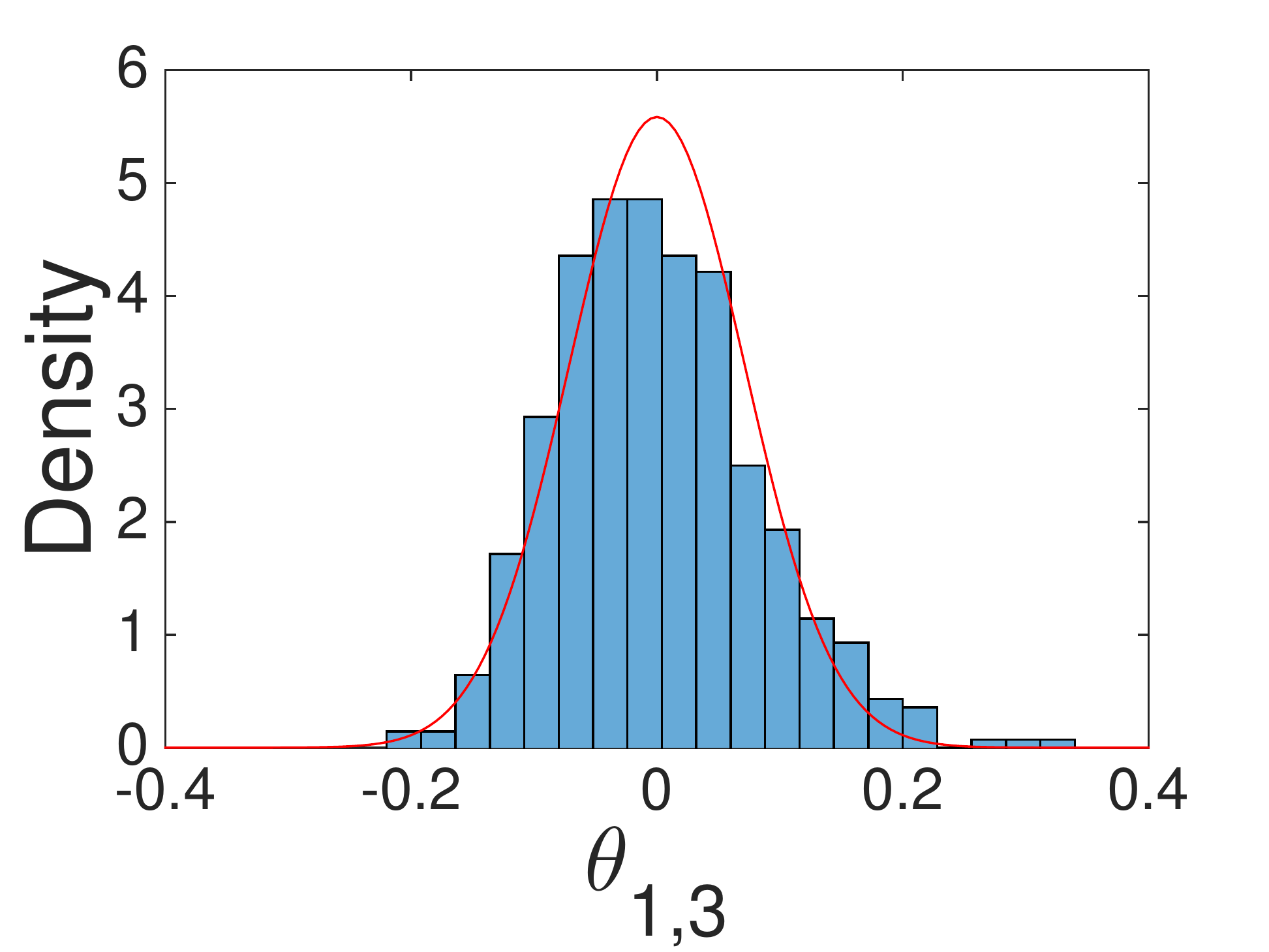}
        \end{subfigure}
         \begin{subfigure}[b]{0.235\textwidth}
            \centering
            \includegraphics[width=\textwidth]{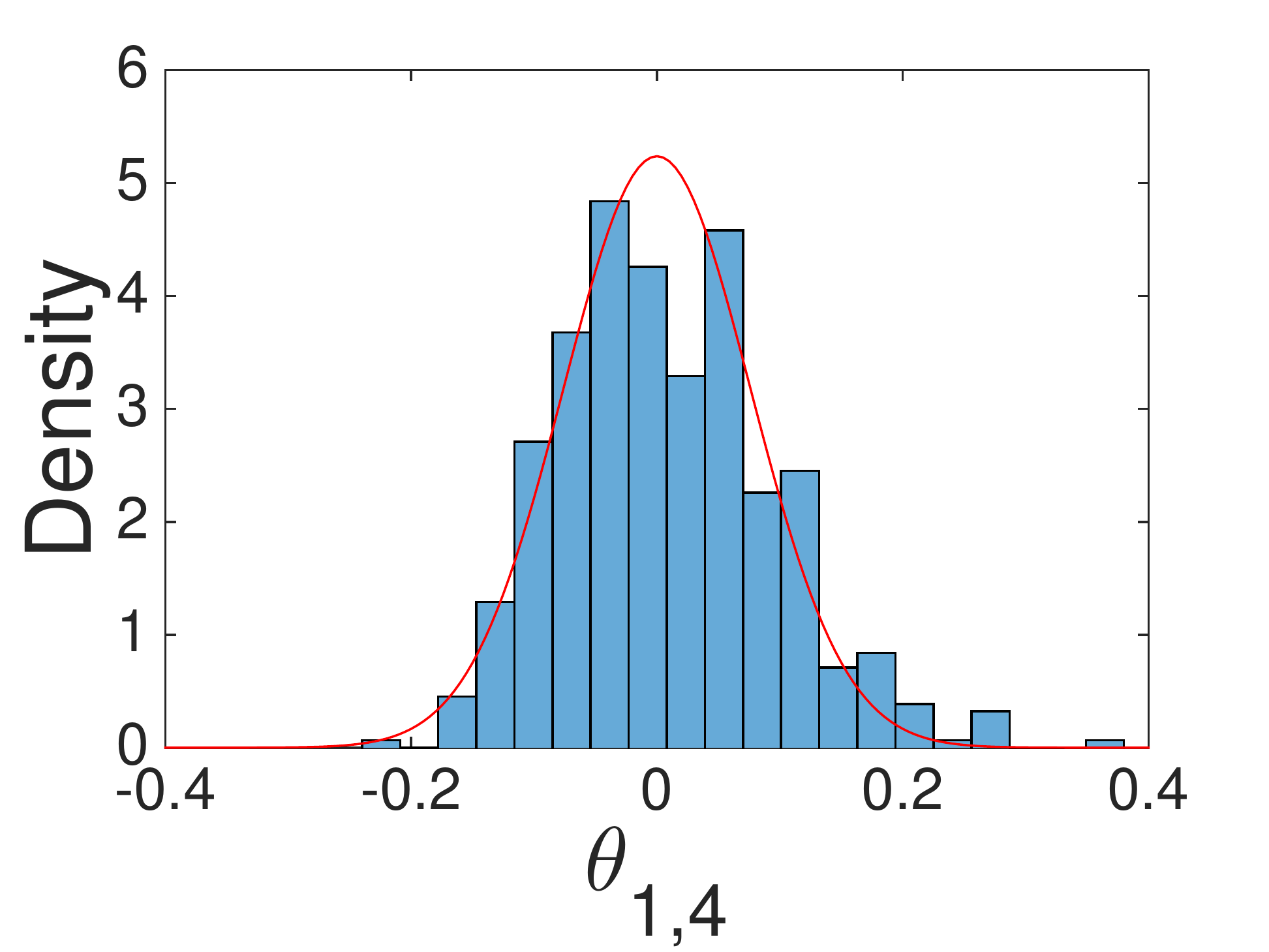}
        \end{subfigure}
       % \hfill
        \begin{subfigure}[b]{0.235\textwidth}  
            \centering 
            \includegraphics[width=\textwidth]{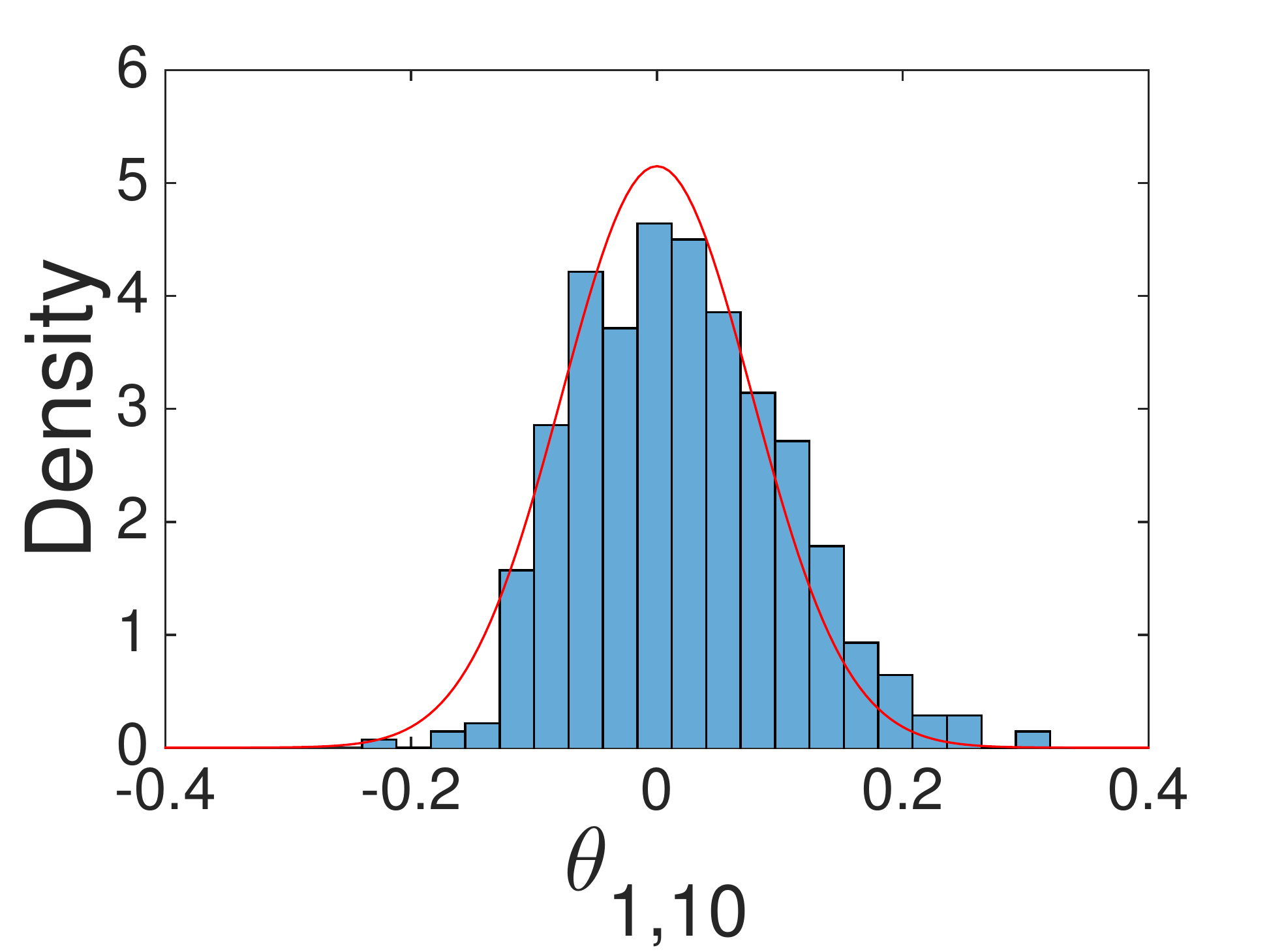}
        \end{subfigure}
        
        \hfill
        
        \vskip\baselineskip
        \begin{subfigure}[b]{0.235\textwidth}   
            \centering 
            \includegraphics[width=\textwidth]{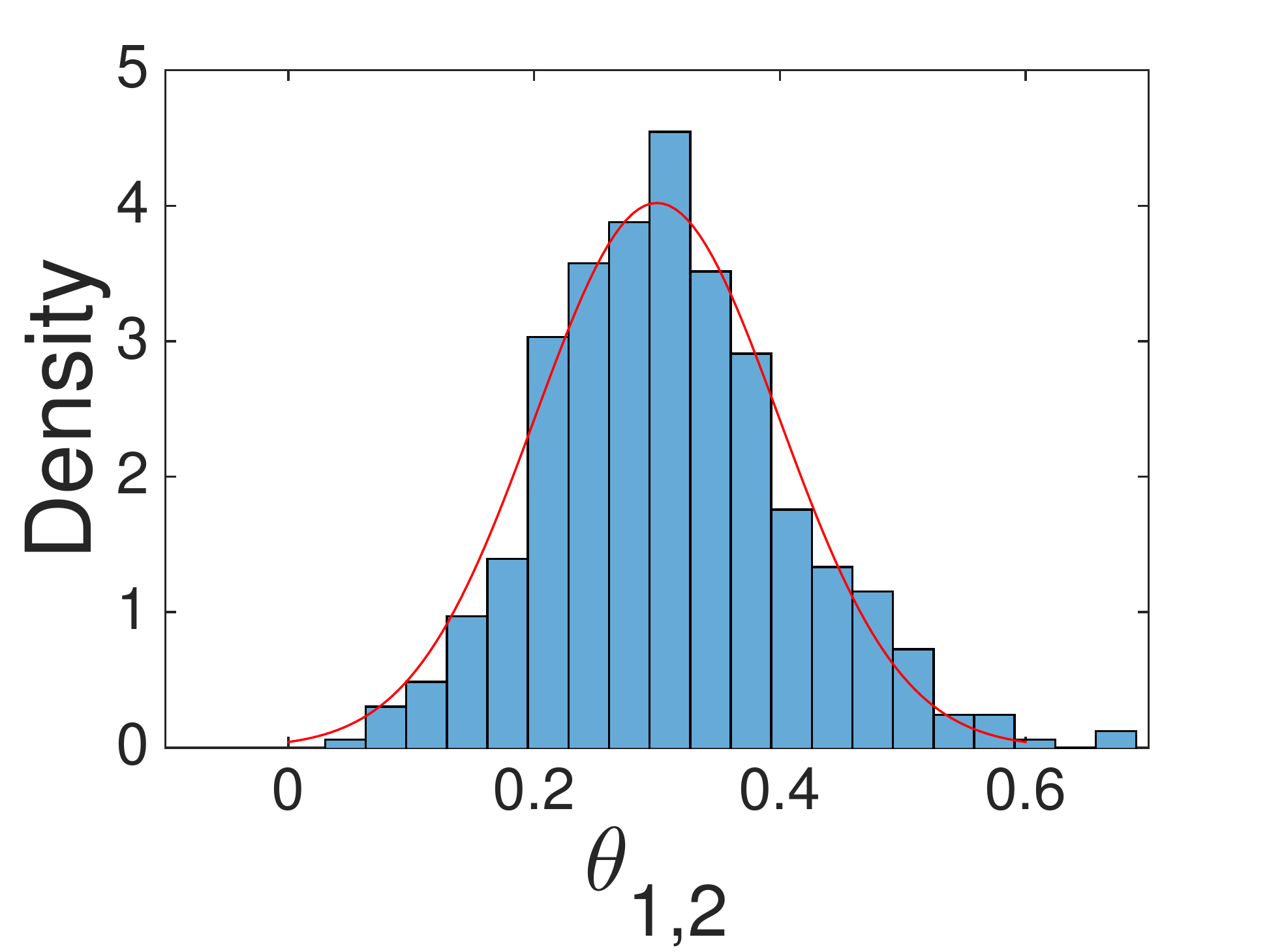}
        \end{subfigure}
        \begin{subfigure}[b]{0.235\textwidth}   
            \centering 
            \includegraphics[width=\textwidth]{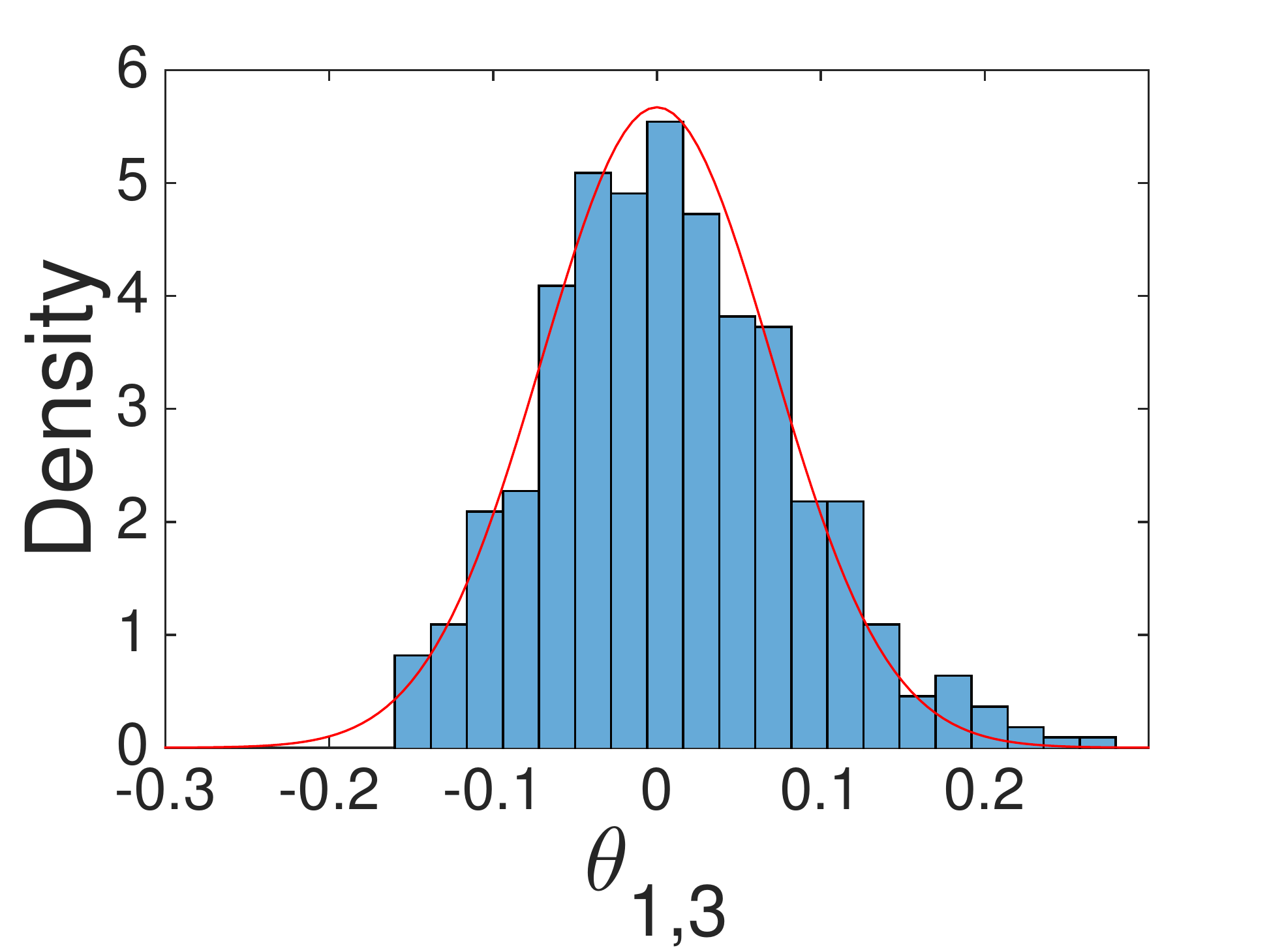}
        \end{subfigure}
        \begin{subfigure}[b]{0.235\textwidth}   
            \centering 
            \includegraphics[width=\textwidth]{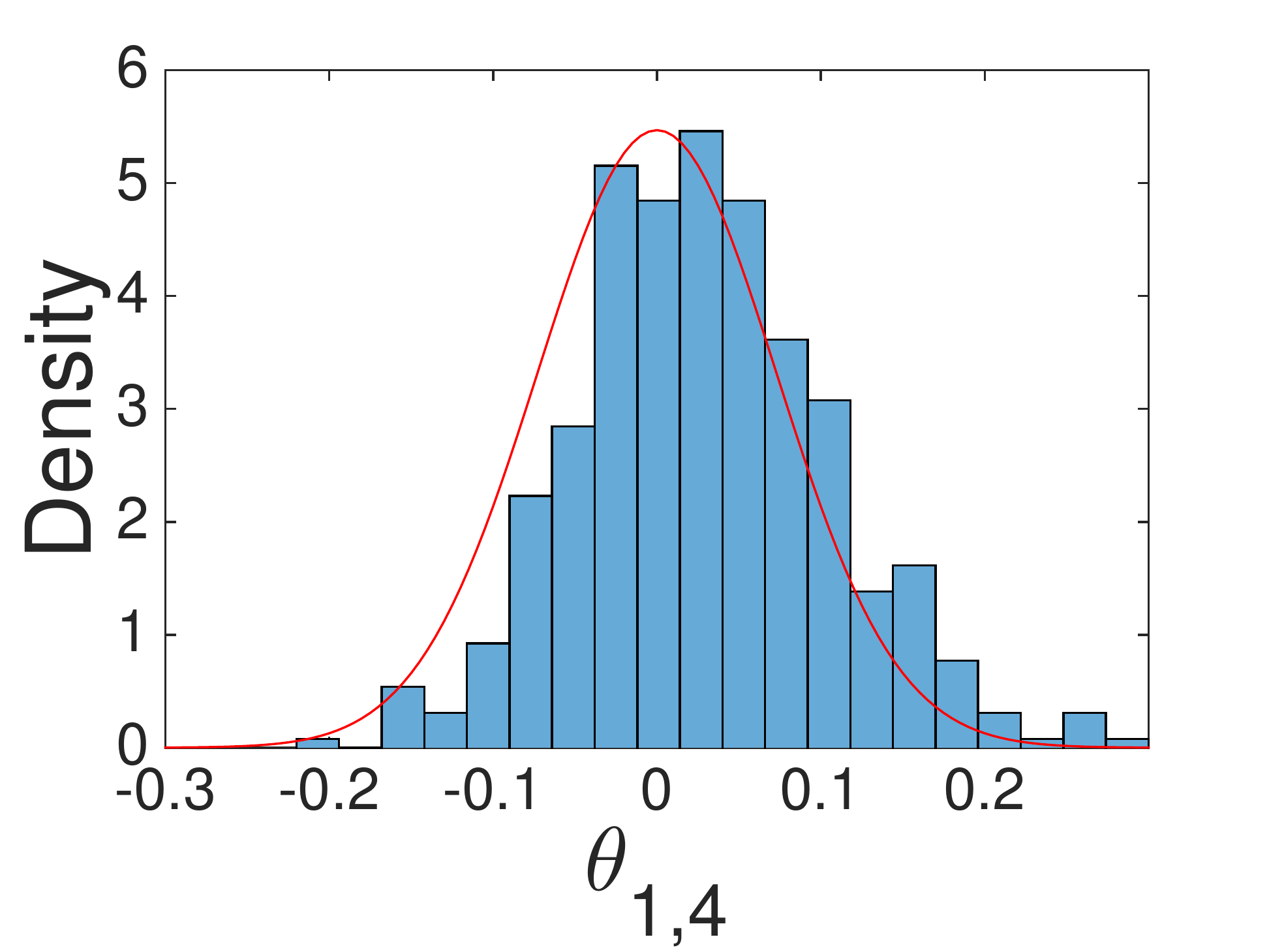}
        \end{subfigure}
        \begin{subfigure}[b]{0.235\textwidth}   
            \centering 
            \includegraphics[width=\textwidth]{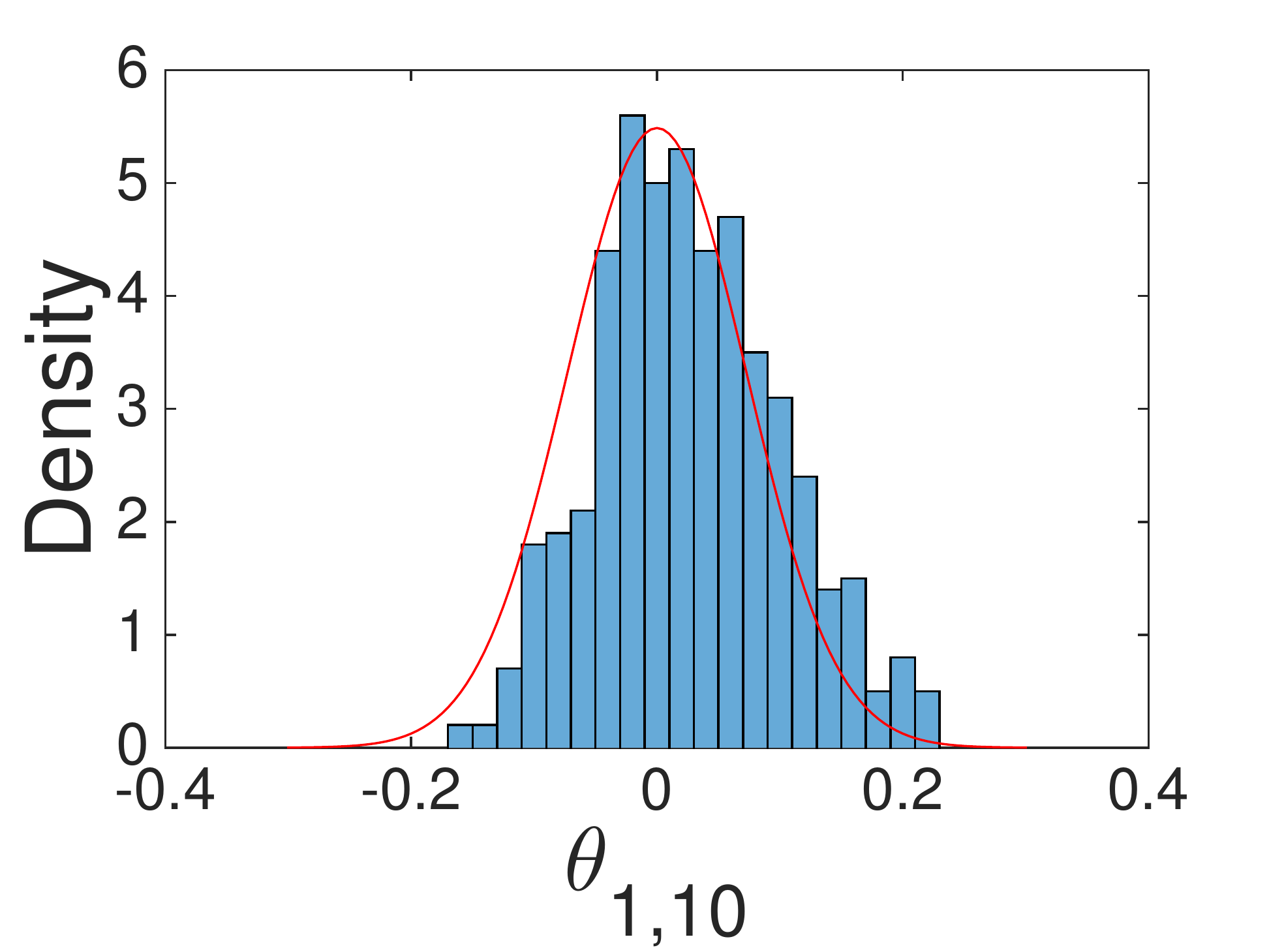}
        \end{subfigure}
        \caption{Histograms for $\theta$ for exponential graphical
          model. The first row corresponds to $p = 100$ and the second row to $p = 300$.}
        \label{fig:histogram}
\end{figure*}

We can see from the simulations here that we
need more samples for inference based on non-negative score matching to be valid, compared to regular score matching.
%See Table \ref{simu_TN} and Table
%\ref{simu3}. 
The results are still impressive as the sample size is
small relative to the total number of parameters in the model.
Moreover, by using the generalized score matching with
$\ell_a(x) = \log(x + 1)$, we get more accurate empirical coverage
compared to the original score matching, which uses $\ell_a(x) = x^2$.
The histograms in Figures~\ref{fig:histogram} show that the fitting is
quite good, but to get a better estimation and hence better coverage,
we would need more samples.

\paragraph{\emph{Simultaneous inference.}}
We then apply the simultaneous inference procedure to test for all
the edges connected to some node $a \in V$. Since the sample
complexity \eqref{eq:asmp_regime_simultaneous} for simultaneous
inference is large, we set $p = 50$.  
For hypothesis testing, we focus on the first node and 
we would like to test the null hypothesis
\begin{equation}
H_0: \theta_{1b}^* = \breve\theta_{1b} \quad \text{for all } b \in V_1 =  \{2, \ldots, p\} ,
\end{equation}
versus the alternative 
\begin{equation}
H_1: \theta_{1b}^* \neq \breve\theta_{1b} \quad \text{for some } b \in V_1 = \{2, \ldots, p\} .
\end{equation}
We set the designed Type I error as $\alpha = 0.05$ and we consider Gaussian and
Non-negative Gaussian settings as before. 
Table \ref{simultaneous_rerun_L1} shows the empirical Type I error under the null $\breve\theta_{1b} = \theta_{1b}^*$ with different choices of sample size. 
We see that our procedure works well as
long as we have enough data.

\begin{table}[!h]
\begin{center}
\begin{tabular}{cccccc}
\hline
& $n = 500$ & $n = 800$ & $n = 1000$ & $n=2000$ & $n = 5000$ \\\hline
Gaussian & 0.082 & 0.074 & 0.042 & 0.052 & 0.048\\
Non-negative Gaussian & 0.072 & 0.062 & 0.054 & 0.040 & 0.046  \\\hline
\end{tabular}
\end{center}
\caption{Empirical Type I error of simultaneous test}
\label{simultaneous_rerun_L1}
\end{table}%

\paragraph{\emph{Simultaneous inference with general $L$.}}
We finally consider the simultaneous inference with general $L$. We consider the normal conditionals model with density
\begin{equation*}
  p(x ; \Theta^{(1)}, \Theta^{(2)}, \eta, \beta) \propto
  \exp \cbr{
    \sum_{a \neq b} \Theta_{ab}^{(2)}x_a^2x_b^2 +
    \sum_{a \neq b} \Theta_{ab}^{(1)}x_ax_b +
    \sum_{a \in V} \eta_a x_a^2 + 
    \sum_{a \in V} \beta_ax_a },\, x \in \RR^{p}.
\end{equation*}

This corresponds to $L = K = 2$. We apply the two methods in Section \ref{sec:general_L} to test for all
the edges connected to some node $a \in V$. 
We set $p = 50$ and the designed Type I error $\alpha = 0.05$. 
For hypothesis testing, we focus on the first node (i.e., $a = 1$). 
Table \ref{simultaneous_rerun_L1_general_L} shows the empirical Type I error under the null with different choices of sample sizes. 
We see that both methods work well as long as we have enough data.

\begin{table}[!h]
\begin{center}
\begin{tabular}{ccccc}
\hline
& $n = 1000$ & $n = 2000$ & $n = 4000$ & $n=6000$ \\\hline
Gaussian multiplier bootstrap & 0.076  & 0.058 & 0.054 & 0.048 \\
 Moderate deviation method   & 0.182 & 0.092 & 0.068  & 0.056 \\\hline
\end{tabular}
\end{center}
\caption{Empirical Type I error of simultaneous test with general $L$}
\label{simultaneous_rerun_L1_general_L}
\end{table}%

\section{Protein Signaling Dataset}
\label{sec:experiments_real}

In this section we apply our algorithm to a protein signaling flow
cytometry data set, which contains the presence of $p = 11$ proteins
in $n = 7466$ cells \citep{Sachs2005Causal}.
% first analyzed the data
% and 
% It was first analyzed using Bayesian Networks in
% who fit a directed acyclic graph to the data,
%Subsequently, \cite{Friedman2008Sparse} applied the
%graphical lasso method to fit an undirected graph to the data. More
%recently, 
%while
\citet{Yang2013Graphical} fit exponential and Gaussian graphical
models to the data set. 

Figure~\ref{Real} shows the network structure after applying our
method to the data using an Exponential Graphical Model.  We learn the
structure directly from the data as well as provide confidence intervals using the Exponential Graphical
Model, rather than log-transforming the data and fitting Gaussian
graphical model as was done in \citet{Yang2013Graphical}.
To infer
the network structure, we calculate the $p$-value for each pair of
nodes, and keep the edges with $p$-values smaller than 0.01.
Estimated negative conditional dependencies are shown via red
edges. Recall that the exponential graphical model restricts the edge
weights to be non-negative, hence only negative dependencies can be
estimated. From the figure we see that PKA is a major protein
inhibitor in cell signaling networks. This result is consistent with
the estimated graph structure in \cite{Yang2013Graphical}, as well as
in the Bayesian network of \cite{Sachs2005Causal}. In addition, we
find significant dependency between PKC and PIP3.

\begin{figure}[!t]
\begin{center}
\includegraphics[width=10cm]{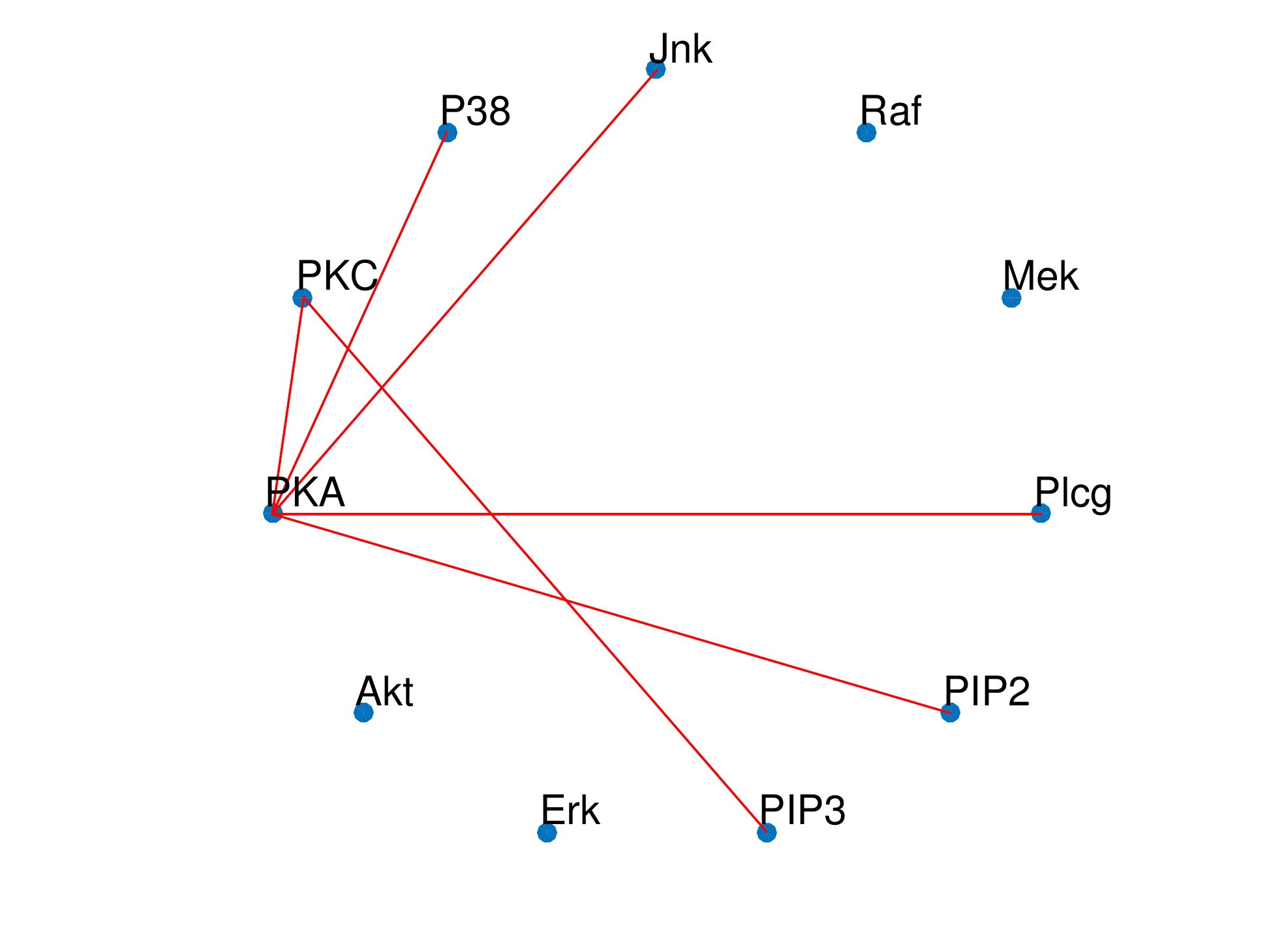}
\caption{Estimated Structure of Protein Signaling Dataset}
\label{Real}
\end{center}
\end{figure}

\section{Conclusion}
\label{sec:conclusion}

Motivated by applications in Biology and Social Networks, 
much progress has been made in statistical learning models and methods for networks with a large number of nodes. 
Graphical models provide a powerful and flexible modeling framework for such networks to uncover the dependency among nodes. 
As a result, there is a vast literature on estimation and inference algorithms for high dimensional Gaussian graphical models, as well as 
more general graphical models in the exponential family. 
As a disadvantage of most of these works, the normalizing constant (partition function) of the conditional densities is usually computationally intractable and without closed-form formula.
Score matching estimators provide a way to address this issue, but so far all the existing works on score matching focus on estimation problem for high-dimensional graphical models without statistical inference.
In this paper, we fill this gap by proposing a novel estimator using the score matching method that is asymptotically normal, which allows us to build statistical inference for a single edge of the graph.
Moreover, we propose the procedure on simultaneous testing on all the edges connected to some specific node in the graph, using the Gaussian multiplier bootstrap method. 
This procedure can be used to test if certain nodes are isolated or not, recover the support of the graph, and test the difference between two graphical models. 
There are a number of interesting and important directions that will be explored in future.
For example, developing inferential techniques based on score matching for 
multi-attribute graphical models 
\citep{kolar13multiatticml, Kolar2014Graph}, 
graphical models with confounders \citep{Geng2019Partially,Geng2018Joint},
time-varying graphical models 
\citep{Zhou08time, Kolar2010Estimating, kolar2011time},
networks with jumps
\citep{kolar10estimating} and conditional graphical models \citep{kolar10nonparametric},
as well as data with missing values \citep{kolar10nonparametric}.
It is also of interest to incorporate constraints in the model and perform constrained inference \citep{Yu2019Constrained}.
Finally, our method is developed for continuous data and developing results for 
discrete valued data is also of interest.

\acks{We are extremely grateful to the associate editor, Jie Peng, and two anonymous reviewers for their insightful comments that helped improve this paper. 
This work is partially supported by an IBM Corporation Faculty
  Research Fund and the William S. Fishman Faculty Research Fund at
  the University of Chicago Booth School of Business. This work was
  completed in part with resources provided by the University of
  Chicago Research Computing Center.}

\appendix

\section{Technical proofs}
\label{sec:technical_proofs}

We first establish a bound on the size of $\hat m_1 = \abr{\hat M_1}$
and $\hat m_2 = \abr{\hat M_2}$ in the following lemma.

\begin{lemma}
  \label{lem:size_m1}
  Assume the conditions of Theorem~\ref{thm:main} are satisfied. Then 
  \[\hat m_1 + \hat m_2 \lesssim  \phi_{\max}\phi_{\min}^{-2} m.\]
\end{lemma}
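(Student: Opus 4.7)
The plan is to carry out a standard support-size analysis for Lasso-type estimators, applied separately to Steps 1 and 2 of the procedure. For Step 1, the starting point is the KKT condition of~\eqref{eq:estimation}: for every coordinate $j \in \hat M_1$, optimality forces $\bigl|\bigl(\EE_n[\Gamma(x_i)]\hat\theta^{ab} + \EE_n[g(x_i)]\bigr)_j\bigr| = \lambda_1$. Writing $u = \hat\theta^{ab} - \theta^{ab,*}$ and $A = \EE_n[\Gamma(x_i)]$, I would decompose
\[
A\hat\theta^{ab} + \EE_n[g(x_i)] = Au + \bigl(A\theta^{ab,*} + \EE_n[g(x_i)]\bigr),
\]
so that on $\Ecal_\theta$ the second summand has sup-norm at most $\lambda_1/2$. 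Consequently, on $T := \hat M_1 \setminus \text{supp}(\theta^{ab,*})$ we get $|(Au)_j| \geq \lambda_1/2$, and summing coordinates yields
\[
(\hat m_1 - m)(\lambda_1/2)^2 \ \leq\ \|(Au)_T\|_2^2.
\]

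The central step is then to upper bound $\|(Au)_T\|_2^2$ using the sparse eigenvalue assumption. I would exploit the $A$-semi-inner product (available since $A\succeq 0$) via Cauchy--Schwarz: for any $\tilde v$ supported on $T$ with $\|\tilde v\|_2 = 1$,
\[
(\tilde v^\top A u)^2 \ \leq\ (\tilde v^\top A \tilde v)(u^\top A u) \ \leq\ \phi_{\max}\cdot u^\top A u,
\]
where the second inequality uses assumption {\bf SE} applied to the $|T|$-sparse unit vector $\tilde v$ (this requires $|T|\leq m\log n$, addressed below). Taking the supremum over such $\tilde v$ gives $\|(Au)_T\|_2^2 \leq \phi_{\max}\,u^\top A u$. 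The quadratic $u^\top A u$ is then bounded by the standard Lasso prediction-error calculation: starting from $\frac{1}{2}u^\top A u + u^\top (A\theta^{ab,*} + \EE_n[g]) \leq \lambda_1(\|\theta^{ab,*}\|_1 - \|\hat\theta^{ab}\|_1)$, then applying $\Ecal_\theta$, the cone condition $\|u\|_1 \leq 4\sqrt{m}\|u\|_2$, and the $\ell_2$ rate $r_{2\theta}\lesssim \sqrt{m}\lambda_1/\phi_{\min}$ stated earlier, one gets $u^\top A u \lesssim m\lambda_1^2/\phi_{\min}$. Combining yields $\hat m_1 - m \lesssim \phi_{\max}\phi_{\min}^{-1}m$, which is stronger than (hence implies) the claimed bound $\hat m_1 \lesssim \phi_{\max}\phi_{\min}^{-2}m$ whenever $\phi_{\min}\leq 1$.

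The bound on $\hat m_2$ is obtained by the same template. Step 2 is itself a Lasso problem whose design Gram matrix $\tilde A = \EE_n[\vpaxi[,-ab]\vpaxi[,-ab]^\top + \vpbxi[,-ab]\vpbxi[,-ab]^\top]$ is a principal submatrix of $A$, and therefore inherits the sparse eigenvalue bounds from assumption {\bf SE}. The gradient decomposition for the Step 2 objective, using the definitions of $\eta_{1i},\eta_{2i}$, produces a noise term $\EE_n[\eta_{1i}\vpaxi[,-ab] + \eta_{2i}\vpbxi[,-ab]]$, which is controlled by $\lambda_2/2$ on $\Ecal_\gamma$. The same Cauchy--Schwarz-plus-sparse-eigenvalue argument, together with the rates $r_{j\gamma}$, then gives $\hat m_2 \lesssim \phi_{\max}\phi_{\min}^{-2}m$. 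Adding the two bounds completes the proof.

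The main technical subtlety I anticipate is that assumption {\bf SE} guarantees $\phi_+(k, A) \leq \phi_{\max}$ only at sparsity $k = m\log n$, yet the argument above invokes this with $k = |T|\leq \hat m_1$. I expect to close this with a routine bootstrap/continuity argument: trivially $\hat m_1 \leq s'$; and whenever one is in the regime $\hat m_1 \leq m\log n$, the derivation above yields $\hat m_1 \lesssim \phi_{\max}\phi_{\min}^{-2}m$, which is strictly smaller than $m\log n$ under the mild regime where $\log n$ dominates $\phi_{\max}\phi_{\min}^{-2}$. Hence the complementary regime cannot persist, and the claimed bound holds unconditionally. The same bootstrap applies verbatim to $\hat m_2$.
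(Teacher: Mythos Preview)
Your proof is correct and follows the same overall template as the paper---KKT conditions on the active set, decomposition into a ``signal'' term $Au$ and a ``noise'' term controlled on $\Ecal_\theta$, then a sparse-eigenvalue bound---but the key technical step is executed differently. The paper bounds $\|(Au)_{\hat M_1}\|_2 \leq \phi_+(\hat m_1 + m,\,A)\,\|u\|_2$ directly via the sparse operator norm and then invokes the $\ell_2$ rate $r_{2\theta}$, whereas you use Cauchy--Schwarz in the $A$-semi-inner product to get $\|(Au)_T\|_2^2 \leq \phi_+(|T|,A)\cdot u^\top A u$ and then bound the prediction error $u^\top A u$. Your route is marginally cleaner: it only needs $\phi_+$ at sparsity $|T|$ rather than $|T|+|{\rm supp}(u)|$, and it delivers the sharper dependence $\phi_{\max}\phi_{\min}^{-1}$ in place of $\phi_{\max}\phi_{\min}^{-2}$ (the caveat ``whenever $\phi_{\min}\leq 1$'' is unnecessary---since $\phi_{\min}\leq\phi_{\max}$ always, your bound $\hat m_1 \lesssim \phi_{\max}\phi_{\min}^{-1}m$ already implies the paper's). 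Your bootstrap for the sparsity-level circularity is a bit informal as written but is easily made rigorous: apply your inequality to an arbitrary subset $T'\subseteq T$ with $|T'|\leq m\log n$, and note that if $|T|>m\log n$ then choosing $|T'|=m\log n$ forces $m\log n \lesssim \phi_{\max}\phi_{\min}^{-1}m$, a contradiction for large $n$. The paper resolves the same circularity by invoking the sublinearity-of-sparse-eigenvalues argument from \citet[Theorem~3]{Belloni2013Least}.
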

\begin{proof}
From the KKT  conditions we have that $\hat \theta^{ab}$ satisfies
\[
\EE_n\sbr{ \Gamma(x_i) \hat \theta^{ab} + g(x_i) } + \lambda_1 \cdot \hat\tau = 0,
\]
where $\hat \tau \in \partial\|\hat\theta^{ab}\|_1$. Restricted to
$\hat M_1$, we have (elementwise)
\[
\abr{\rbr{\EE_n\sbr{ \Gamma(x_i) \hat \theta^{ab} + g(x_i) }}_{\hat M_1}} = \lambda_1.
\]
Computing the $\ell_2$ norm on both sides, 
\begin{equation*}
\begin{aligned}
\sqrt{\hat m_1} \cdot \lambda_1 &= 
\bignorm{\rbr{\EE_n\sbr{ \Gamma(x_i) \hat \theta^{ab} + g(x_i) }}_{\hat M_1}}_2 \\
& \leq \bignorm{\rbr{\EE_n\sbr{ \Gamma(x_i) \rbr{\hat \theta^{ab} - \theta^{ab,*}}}}_{\hat M_1}}_2 +
\bignorm{\rbr{\EE_n\sbr{ \Gamma(x_i) \theta^{ab,*} + g(x_i) }}_{\hat M_1}}_2 \\
& \triangleq L_1 + L_2.
\end{aligned}
\end{equation*}
For the first term we have that 
\[
\begin{aligned}
L_1 & \leq \phi_{+}(\hat m_1 + m, \EE_n\sbr{ \Gamma(x_i) }) \cdot r_{2\theta}\\
& \lesssim
\phi_{+}(\hat m_1 + m, \EE_n\sbr{ \Gamma(x_i) }) \cdot \phi_{\min}^{-1}\cdot \lambda_1\sqrt{m},
\end{aligned}
\]
using \cite{negahban2010unified}. For the second term, we have that
\[
L_2 \leq \sqrt{\hat m_1}\cdot \lambda_1/2.
\]
Combining the two bounds, we obtain
\[
\sqrt{\hat m_1} \lesssim  
\phi_{+}(\hat m_1 + m, \EE_n\sbr{ \Gamma(x_i) }) \cdot \phi_{\min}^{-1}\sqrt{m}.
\]
Now, proceeding as in the proof of Theorem~3 in \cite{Belloni2013Least}, we establish that 
\[
\hat m_1 \lesssim  \phi_{\max}\phi_{\min}^{-2} m.
\]
The proof for $\hat m_2$ is similar.
\end{proof}

Our next result establishes bounds on $\tilde \theta^{ab} - \theta^{ab,*}$.

\begin{lemma}
  \label{lem:refit}
  Assume the conditions of Theorem~\ref{thm:main} are satisfied. Then 
\[
\begin{aligned}
\norm{\tilde \theta^{ab} - \theta^{ab,*}}_2
& \lesssim \phi_{\max}^{1/2}\phi_{\min}^{-2}\cdot\lambda_1\sqrt{ m }, \\
\norm{\tilde \theta^{ab} - \theta^{ab,*}}_1
& \lesssim \phi_{\max}^{1/2}\phi_{\min}^{-2}\cdot\lambda_1 m .
\end{aligned}
\]

\end{lemma}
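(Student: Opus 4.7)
}}

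The plan is to decompose $\tilde\theta^{ab}-\theta^{ab,*}$ through an oracle parameter supported on $\tilde M$, control each piece using the first order optimality of the step 3 estimator together with sparse eigenvalues, and read off bounds on the tail outside $\tilde M$ from the step 1 rates $r_{1\theta}, r_{2\theta}$.

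Concretely, set $\theta^{o}$ to be $\theta^{ab,*}$ restricted to $\tilde M$, and let $\bar v = \tilde\theta^{ab}-\theta^{o}$, which is supported on $\tilde M$. Since $\hat\theta^{ab}$ is zero off $\tilde M \supseteq \hat M_1$, we immediately get the tail control
\[
  \|\theta^{ab,*}_{\tilde M^c}\|_j \;=\; \|\theta^{ab,*}_{\tilde M^c}-\hat\theta^{ab}_{\tilde M^c}\|_j \;\leq\; \|\theta^{ab,*}-\hat\theta^{ab}\|_j \;=\; r_{j\theta}, \qquad j\in\{1,2\}.
\]
So the task reduces to bounding $\|\bar v\|_2$ and $\|\bar v\|_1$ on $\tilde M$, and then using the triangle inequality $\|\tilde\theta^{ab}-\theta^{ab,*}\|_j \leq \|\bar v\|_j + r_{j\theta}$.

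For $\bar v$, the first order optimality of $\tilde\theta^{ab}$ in the restricted program gives $(\EE_n[\Gamma(x_i)]\tilde\theta^{ab}+\EE_n[g(x_i)])_{\tilde M}=0$, which after writing $\theta^{o}=\theta^{ab,*}-\theta^{ab,*}_{\tilde M^c}$ reads
\[
  (\EE_n[\Gamma(x_i)]\bar v)_{\tilde M} \;=\; -\bigl(\EE_n[\Gamma(x_i)\theta^{ab,*}+g(x_i)]\bigr)_{\tilde M} + \bigl(\EE_n[\Gamma(x_i)]\theta^{ab,*}_{\tilde M^c}\bigr)_{\tilde M}.
\]
Taking inner product with $\bar v$ (which lives on $\tilde M$) and using $\Ecal_\theta$ on the first term and Cauchy--Schwarz on the second,
\[
  \bar v^{\top}\EE_n[\Gamma(x_i)]\bar v \;\leq\; \tfrac{\lambda_1}{2}\|\bar v\|_1 + \sqrt{\bar v^{\top}\EE_n[\Gamma(x_i)]\bar v}\cdot\sqrt{(\theta^{ab,*}_{\tilde M^c})^{\top}\EE_n[\Gamma(x_i)]\theta^{ab,*}_{\tilde M^c}}.
\]
Both $\bar v$ and $\theta^{ab,*}_{\tilde M^c}$ are supported on at most $|\tilde M|+m \lesssim \phi_{\max}\phi_{\min}^{-2}m$ coordinates (invoking Lemma~\ref{lem:size_m1}), so on $\Ecal_{\rm SE}$ Lemma~\ref{lemma:sample_sparse_eigenvalue} supplies the two-sided sparse eigenvalue bound: $\bar v^{\top}\EE_n[\Gamma(x_i)]\bar v \geq (\phi_{\min}/2)\|\bar v\|_2^2$ and the analogous upper bound with $2\phi_{\max}$ for $\theta^{ab,*}_{\tilde M^c}$. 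Using $\|\bar v\|_1 \leq \sqrt{|\tilde M|}\,\|\bar v\|_2$ and the AM--GM trick to absorb the cross term, the resulting quadratic inequality in $\|\bar v\|_2$ yields $\|\bar v\|_2 \lesssim \sqrt{|\tilde M|}\,\lambda_1/\phi_{\min} + \sqrt{\phi_{\max}/\phi_{\min}}\,r_{2\theta}$, and after plugging in the bounds on $|\tilde M|$ and $r_{2\theta}$ both terms are of order $\phi_{\max}^{1/2}\phi_{\min}^{-2}\lambda_1\sqrt{m}$, giving the $\ell_2$ claim. The $\ell_1$ claim follows from $\|\bar v\|_1\leq \sqrt{|\tilde M|}\,\|\bar v\|_2$ combined with $\|\theta^{ab,*}_{\tilde M^c}\|_1\leq r_{1\theta}$.

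The main obstacle I anticipate is making sure the sparse eigenvalue lower bound really applies to $\bar v$ and to $\theta^{ab,*}_{\tilde M^c}$ simultaneously, which requires confirming the admissible sparsity level $|\tilde M|+m \leq m\log n$ appearing in assumption {\bf SE} and Lemma~\ref{lemma:sample_sparse_eigenvalue}; this is where Lemma~\ref{lem:size_m1} is used in an essential way. Once this is in place, the rest is bookkeeping: the quadratic-in-$\|\bar v\|_2$ inequality decouples cleanly, and combining it with the $\ell_1/\ell_2$ tail bounds from step 1 produces the stated rates.
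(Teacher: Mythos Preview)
Your argument is correct but follows a different route from the paper's. The paper compares $\tilde\theta^{ab}$ directly to the step-1 estimator $\hat\theta^{ab}$: subtracting the penalized KKT condition on $\hat M_1$ from the unpenalized one on $\tilde M$ gives an identity of the form $\EE_n[\Gamma(x_i)]_{\tilde M}(\tilde\theta^{ab}_{\tilde M}-\hat\theta^{ab}_{\hat M_1})=\lambda_1\,\mathrm{sign}(\hat\theta^{ab}_{\hat M_1})$, from which the sparse-eigenvalue lower bound yields $\|\tilde\theta^{ab}-\hat\theta^{ab}\|_2\lesssim\phi_{\min}^{-1}\lambda_1\sqrt{\hat m_1}$, and the claim follows by the triangle inequality through $\hat\theta^{ab}$ together with Lemma~\ref{lem:size_m1}. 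Your oracle decomposition through $\theta^{ab,*}|_{\tilde M}$ is the standard least-squares-after-selection analysis; it uses only the step-3 optimality, the event $\Ecal_\theta$, and the step-1 rates $r_{j\theta}$, so it would transfer unchanged to any first-stage estimator achieving those rates (no explicit use of the step-1 KKT is needed). The paper's route is shorter because it never encounters the cross term $(\EE_n[\Gamma(x_i)]\theta^{ab,*}_{\tilde M^c})_{\tilde M}$ that you have to absorb via Cauchy--Schwarz and AM--GM; on the other hand, your argument makes the role of $\Ecal_\theta$ and of the tail $\theta^{ab,*}_{\tilde M^c}$ fully explicit.
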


\begin{proof}
From the KKT  conditions we have that $\hat \theta^{ab}$ satisfies
\[
\EE_n\sbr{ \Gamma(x_i)_{\hat M_1} } \hat \theta^{ab}_{\hat M_1} + \EE_n\sbr{ g(x_i)_{\hat M_1} } 
+ \lambda_1 \cdot {\rm sign}(\hat\theta^{ab}_{\hat M_1}) = 0,
\]
while $\tilde \theta^{ab}$ satisfies
\[
\EE_n\sbr{ \Gamma(x_i)_{\tilde M} } \tilde \theta^{ab}_{\tilde M} + \EE_n\sbr{ g(x_i)_{\tilde M} } = 0.
\]
Combining these two equations we have 
\[
\EE_n\sbr{ \Gamma(x_i)_{\tilde M} } \rbr{\tilde \theta^{ab}_{\tilde M} - \hat \theta^{ab}_{\hat M_1}}
 = \lambda_1 \cdot {\rm sign}(\hat\theta^{ab}_{\hat M_1})
\]
and
\[
\phi_{\min}\cdot \norm{\tilde \theta^{ab}_{\tilde M} - \hat \theta^{ab}_{\hat M_1}}_2 \leq 
\bignorm{
\EE_n\sbr{ \Gamma(x_i)_{\tilde M} } \rbr{\tilde \theta^{ab}_{\tilde M} - \hat \theta^{ab}_{\hat M_1}}
}_2
 = \lambda_1 \sqrt{\hat m_1} .
\]
Therefore, using \cite{negahban2010unified},
\[
\norm{\tilde \theta^{ab} - \theta^{ab,*}}_2
\leq \norm{\tilde \theta^{ab} - \hat \theta^{ab,*}}_2 + \norm{\hat \theta^{ab} - \theta^{ab,*}}_2
\lesssim \phi_{\min}^{-1}\cdot\lambda_1\sqrt{\hat m_1}.
\]
Combining with Lemma~\ref{lem:size_m1}, we obtain
\[
\begin{aligned}
\norm{\tilde \theta^{ab} - \theta^{ab,*}}_2
\lesssim \phi_{\max}^{1/2}\phi_{\min}^{-2}\cdot\lambda_1\sqrt{ m }
\quad\text{and}\quad
\norm{\tilde \theta^{ab} - \theta^{ab,*}}_1
\lesssim \phi_{\max}^{1/2}\phi_{\min}^{-2}\cdot\lambda_1 m .
\end{aligned}
\]
\end{proof}

A similar result can be established for
$\tilde \gamma^{ab} - \gamma^{ab,*}$, which we state without proof, as
it is analogous to the proof of Lemma~\ref{lem:refit}.
\begin{lemma}
  \label{lem:refit:gamma}
  Assume the conditions of Theorem~\ref{thm:main} are satisfied. Then 
\[
\begin{aligned}
\norm{\tilde \gamma^{ab} - \gamma^{ab,*}}_2
& \lesssim  \phi_{\max}^{1/2}\phi_{\min}^{-2}\cdot\lambda_2\sqrt{ m }, \\
\norm{\tilde \gamma^{ab} - \gamma^{ab,*}}_1
& \lesssim  \phi_{\max}^{1/2}\phi_{\min}^{-2}\cdot\lambda_2 m .
\end{aligned}
\]

\end{lemma}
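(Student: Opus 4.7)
The plan is to mirror the proof of Lemma~\ref{lem:refit} almost verbatim, replacing the Step 1 / Step 3 optimality conditions for $\theta$ by those from Step 2 and from its refit on $\tilde M \setminus (a,b)$. First I would write the KKT condition for $\hat\gamma^{ab}$ coming from \eqref{eq:estimation:step2}: on its active set $\hat M_2$,
\[
  \EE_n\bigl[\varphi_{1,-ab}(x_i)\varphi_{1,-ab}(x_i)^\top + \varphi_{2,-ab}(x_i)\varphi_{2,-ab}(x_i)^\top\bigr]_{\hat M_2}\hat\gamma^{ab}_{\hat M_2} - \EE_n\bigl[\varphi_{1,-ab}(x_i)\varphi_{1,ab}(x_i) + \varphi_{2,-ab}(x_i)\varphi_{2,ab}(x_i)\bigr]_{\hat M_2} + \lambda_2 \cdot \operatorname{sign}(\hat\gamma^{ab}_{\hat M_2}) = 0,
\]
and the first-order condition for the refitted estimator $\tilde\gamma^{ab}$ defined in \eqref{eq:estimation:step2:S} restricted to $\tilde M\setminus(a,b)$, which is the same equation but without the $\lambda_2\operatorname{sign}(\cdot)$ term and over the larger support $\tilde M\setminus(a,b)\supseteq \hat M_2$.

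Next I would subtract the two conditions on $\tilde M \setminus (a,b)$, so that the Hessian block $H = \EE_n[\varphi_{1,-ab}\varphi_{1,-ab}^\top+\varphi_{2,-ab}\varphi_{2,-ab}^\top]_{\tilde M\setminus(a,b)}$ times $(\tilde\gamma^{ab}-\hat\gamma^{ab})_{\tilde M\setminus(a,b)}$ equals the vector with entries $\lambda_2 \operatorname{sign}(\hat\gamma^{ab}_{\hat M_2})$ on $\hat M_2$ and zeros elsewhere. Since $H$ is a principal submatrix of $\EE_n[\Gamma(x_i)]$ (with some sign conventions), on the event $\Ecal_{\rm SE}$ of Lemma~\ref{lemma:sample_sparse_eigenvalue} its minimum eigenvalue on $(\hat m_1+\hat m_2+1)$-sparse vectors is at least $\phi_{\min}/2$, which yields
\[
  \phi_{\min}\cdot \bigl\|\tilde\gamma^{ab}-\hat\gamma^{ab}\bigr\|_2 \;\lesssim\; \lambda_2\sqrt{\hat m_2}.
\]

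Then I would combine with the initial-estimator bound $\norm{\hat\gamma^{ab}-\gamma^{ab,*}}_2\lesssim \phi_{\min}^{-1}\lambda_2\sqrt{m}$ from the general framework of \citet{negahban2010unified} (the Step 2 problem is a plain $\ell_1$-penalized least squares with a sparse population minimizer by assumption {\bf M}), plug in $\hat m_2\lesssim \phi_{\max}\phi_{\min}^{-2}m$ from Lemma~\ref{lem:size_m1}, and apply the triangle inequality to conclude the $\ell_2$ bound. The $\ell_1$ bound then follows by Cauchy--Schwarz on the support $\tilde M$, whose cardinality is $\Ocal(\phi_{\max}\phi_{\min}^{-2}m)$.

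The only non-routine point is checking that Lemma~\ref{lem:size_m1} really does deliver the bound on $\hat m_2$ in exactly the same form as for $\hat m_1$; this requires observing that the sparse eigenvalues of the Hessian of \eqref{eq:estimation:step2} are controlled by the same $\phi_{\min},\phi_{\max}$ as $\EE_n[\Gamma(x_i)]$, because it is again a principal submatrix (indexed by the nuisance coordinates) of $\EE_n[\Gamma(x_i)]$. Once this is in place, every remaining step is a straight copy of the proof of Lemma~\ref{lem:refit}, with $(\theta,\lambda_1,\hat M_1)$ swapped for $(\gamma,\lambda_2,\hat M_2)$.
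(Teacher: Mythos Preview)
Your proposal is correct and matches the paper's own treatment: the paper does not write out a separate argument for Lemma~\ref{lem:refit:gamma} but simply declares it ``analogous to the proof of Lemma~\ref{lem:refit},'' and the substitution you describe---replacing $(\theta,\lambda_1,\hat M_1,g)$ by $(\gamma,\lambda_2,\hat M_2,\text{the Step 2 linear term})$ and using that the Step~2 Hessian is the $(-ab)$ principal submatrix of $\EE_n[\Gamma(x_i)]$---is exactly that analogy. Your observation that Lemma~\ref{lem:size_m1} already bounds $\hat m_2$ in the same form as $\hat m_1$ (the paper states ``The proof for $\hat m_2$ is similar'') closes the only point requiring comment.
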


To simplify notation later, let
$\tilde{r}_{j\theta} = \norm{\tilde \theta^{ab} - \theta^{ab,*}}_j$
and
$\tilde r_{j\gamma} = \norm{\tilde \gamma^{ab} - \gamma^{ab,*}}_j$,
for $j\in\{1,2\}$.
\begin{lemma}
\label{lem:L1}
Under the conditions of Theorem~\ref{thm:main}, we have
\[
  \abr{ \rbr{\tilde w - w^*}^\top \EE_n\sbr{\Gamma(x_i)} \rbr{\tilde \theta^{ab} - \theta^{ab,*}} }
  \lesssim  \phi_{\max}^2 \phi_{\min}^{-4} \cdot \lambda_1\lambda_2 {m}.
\]
\end{lemma}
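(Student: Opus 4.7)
The plan is to view the target quantity as a bilinear form $u^\top A v$ with $u = \tilde w - w^*$, $v = \tilde \theta^{ab} - \theta^{ab,*}$, and $A = \EE_n[\Gamma(x_i)]$, and then control it through a sparse operator-norm bound combined with the $\ell_2$ estimation rates already supplied by the refitting lemmas. The first order of business is to pin down the support of both $u$ and $v$. Since $\tilde w$ vanishes off $\{(a,b)\}\cup (\tilde M \setminus (a,b))$, and $w^*$ is supported on $\{(a,b)\}\cup \mathrm{supp}(\gamma^{ab,*})$, the difference $\tilde w - w^*$ vanishes at position $ab$ and has support contained in $T_w := \tilde M \cup \mathrm{supp}(\gamma^{ab,*})$. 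Analogously, $\tilde \theta^{ab}-\theta^{ab,*}$ has support in $T_\theta := \tilde M \cup \mathrm{supp}(\theta^{ab,*})$. Combining Lemma~\ref{lem:size_m1} with Assumption {\bf M} gives $|T_w \cup T_\theta| \lesssim \phi_{\max}\phi_{\min}^{-2}\, m$, which is well within the $m\log n$ regime governed by Lemma~\ref{lemma:sample_sparse_eigenvalue} once $n$ is sufficiently large.

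Next, I would restrict attention to this common support $T := T_w \cup T_\theta$ and use the elementary sparse-eigenvalue bound on bilinear forms, namely
\[
|u^\top A v| \;=\; |u_T^\top A_{T,T} v_T| \;\leq\; \phi_{+}(|T|, A)\cdot \|u\|_2\cdot \|v\|_2.
\]
On the event $\Ecal_{\rm SE}$, Lemma~\ref{lemma:sample_sparse_eigenvalue} yields $\phi_{+}(|T|,\EE_n[\Gamma(x_i)])\leq 2\phi_{\max}$.

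For the $\ell_2$ norms, I would invoke Lemmas~\ref{lem:refit} and \ref{lem:refit:gamma}. Observing that $(\tilde w - w^*)_{ab}=0$ and that the remaining coordinates of $\tilde w - w^*$ are exactly $-(\tilde \gamma^{ab}-\gamma^{ab,*})$, we get $\|\tilde w - w^*\|_2 \leq \|\tilde \gamma^{ab}-\gamma^{ab,*}\|_2 \lesssim \phi_{\max}^{1/2}\phi_{\min}^{-2}\,\lambda_2 \sqrt{m}$, while $\|\tilde \theta^{ab}-\theta^{ab,*}\|_2 \lesssim \phi_{\max}^{1/2}\phi_{\min}^{-2}\,\lambda_1 \sqrt{m}$. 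Multiplying the three pieces gives
\[
|u^\top A v| \;\lesssim\; \phi_{\max}\cdot \phi_{\max}^{1/2}\phi_{\min}^{-2}\lambda_1\sqrt{m}\cdot \phi_{\max}^{1/2}\phi_{\min}^{-2}\lambda_2\sqrt{m} \;=\; \phi_{\max}^2\phi_{\min}^{-4}\lambda_1\lambda_2\, m,
\]
which is exactly the desired bound.

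The one subtlety I expect to track carefully is that the $\tilde \gamma^{ab}$ appearing in the definition of $\tilde w$ is the \emph{refitted} minimizer of problem \eqref{eq:estimation:step2:S}, constrained to have support in $\tilde M \setminus (a,b)$, rather than the original Lasso estimator $\hat \gamma^{ab}$ from Step~2; this is precisely why Lemma~\ref{lem:refit:gamma} (and not the rougher Lasso rate) is the tool of choice. Beyond this bookkeeping, the argument is essentially a Cauchy--Schwarz-type bound wrapped around the sparse eigenvalue condition, so I do not anticipate further obstacles.
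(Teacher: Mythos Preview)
Your proposal is correct and follows essentially the same strategy as the paper: control the bilinear form via the sparse maximal eigenvalue of $\EE_n[\Gamma(x_i)]$ together with the $\ell_2$ refitting rates from Lemmas~\ref{lem:refit} and~\ref{lem:refit:gamma}. The only technical difference is in how the bilinear form is bounded. The paper invokes Lemma~4.9 of \citet{Barber2015ROCKET}, which gives for any vectors $u,v$ and any $k\geq 1$ the inequality
\[
|u^\top M v| \leq \big(\|u\|_2 + \|u\|_1/\sqrt{k}\big)\big(\|v\|_2 + \|v\|_1/\sqrt{k}\big)\cdot \|M\|_{\Scal_k},
\]
and then plugs in both the $\ell_1$ and $\ell_2$ rates with $k=\tilde m$. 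You instead argue directly that $\tilde w-w^*$ and $\tilde\theta^{ab}-\theta^{ab,*}$ are jointly supported on a set $T$ of size $\lesssim \phi_{\max}\phi_{\min}^{-2}m$, restrict to the principal submatrix $A_{T,T}$, and apply Cauchy--Schwarz. Your route is slightly more elementary (it avoids the external lemma and uses only the $\ell_2$ rates), while the paper's route is more agnostic about the exact support of the vectors; both require the same implicit check that the relevant sparsity level stays within the $m\log n$ window of Assumption~\textbf{SE}, and both yield the identical bound $\phi_{\max}^2\phi_{\min}^{-4}\lambda_1\lambda_2 m$.
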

\begin{proof}
 Let $\Scal_k$ be the set of $k$-sparse vectors in the unit ball,
\[
\Scal_k = \cbr{u \in \RR^p : \norm{u}_2\leq 1, \norm{u}_0 \leq k}.
\]
Abusing the notation, let $\norm{\cdot}_{\Scal_k}$ denote the sparse
spectral norm for matrices, that is,
\[
\norm{M}_{\Scal_k} = \max_{u,v\in\Scal_k}u^\top Mv.
\]
Using Lemma 4.9 of \cite{Barber2015ROCKET},
\[|u^\top M v| \leq 
\left(\norm{u}_2 + \norm{u}_1/\sqrt{k}\right)\cdot
\left(\norm{v}_2 + \norm{v}_1/\sqrt{k}\right)\cdot 
\sup_{u',v'\in\Scal_k}|u'^\top M v'|
\]
for any fixed matrix $M\in\RR^{p\times p}$ and vectors $u,v\in\RR^p$,
and any $k\geq 1$. With this, we have
\begin{equation*}
\begin{aligned}
    \rbr{\tilde w - w^*}^\top  \EE_n\sbr{\Gamma(x_i)} \rbr{\tilde \theta^{ab} - \theta^{ab,*}} 
    &\leq \norm{\EE_n\sbr{\Gamma(x_i)}}_{\Scal_{\tilde m}} \cdot
          \rbr{\tilde r_{2\gamma} + \tilde r_{1\gamma}/\sqrt{\tilde m}} 
          \cdot
          \rbr{\tilde r_{2\theta} + \tilde r_{1\theta}/\sqrt{\tilde m}} \\
    &\lesssim  \phi_{\max}^2\phi_{\min}^{-4} \cdot \lambda_1\lambda_2  m ,
  \end{aligned}
\end{equation*}
where the second line follows from the assumption {\bf SE}, and
Lemma~\ref{lem:refit} and  Lemma~\ref{lem:refit:gamma}.
\end{proof}

\begin{lemma}
\label{lem:L2}
Under the conditions of Theorem~\ref{thm:main}, we have
\[
    \abr{
      \rbr{\tilde w - w^*}^\top  \rbr{ \EE_n\sbr{\Gamma(x_i)}\theta^{ab,*} + \EE_n[g(x_i)] }
    } \lesssim \phi_{\max}^{1/2}\phi_{\min}^{-2}\cdot\lambda_1\lambda_2 m.
\]
\end{lemma}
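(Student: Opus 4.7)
The plan is to bound the inner product by Hölder's inequality, using the $\ell_1$-estimation error for $\tilde w - w^*$ together with the $\ell_\infty$ control on the score vector that is built into the event $\Ecal_\theta$.

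First, I would apply Hölder's inequality to split the product:
\[
\abr{\rbr{\tilde w - w^*}^\top  \rbr{ \EE_n\sbr{\Gamma(x_i)}\theta^{ab,*} + \EE_n[g(x_i)] }}
\leq \norm{\tilde w - w^*}_1 \cdot \norm{\EE_n\sbr{\Gamma(x_i)\theta^{ab,*} + g(x_i)}}_\infty.
\]
The second factor is immediately controlled: on the event $\Ecal_\theta$ (which is assumed to hold in Theorem~\ref{thm:main}), by definition
\[
\norm{\EE_n\sbr{\Gamma(x_i)\theta^{ab,*} + g(x_i)}}_\infty \leq \lambda_1/2.
\]

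For the first factor, I would use the explicit construction of $\tilde w$ and $w^*$. Both have a $1$ in the $ab$-coordinate, so $(\tilde w - w^*)_{ab}=0$ and
\[
\norm{\tilde w - w^*}_1 = \norm{\tilde \gamma - \gamma^{ab,*}}_1,
\]
where $\tilde \gamma$ denotes the vector in $\RR^{s'-1}$ obtained by padding $\tilde \gamma^{ab}$ from \eqref{eq:estimation:step2:S} with zeros on $(\tilde M\setminus(a,b))^c$. Since $\tilde \gamma^{ab}$ is the refitted minimizer of the same quadratic objective whose penalized minimizer is $\hat \gamma^{ab}$, the argument in Lemma~\ref{lem:refit:gamma} (applied to $\gamma$ rather than $\theta$) gives
\[
\norm{\tilde \gamma - \gamma^{ab,*}}_1 \lesssim \phi_{\max}^{1/2}\phi_{\min}^{-2}\cdot \lambda_2 m.
\]

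Multiplying the two bounds yields the claimed
\[
\abr{\rbr{\tilde w - w^*}^\top  \rbr{ \EE_n\sbr{\Gamma(x_i)}\theta^{ab,*} + \EE_n[g(x_i)] }} \lesssim \phi_{\max}^{1/2}\phi_{\min}^{-2}\cdot \lambda_1\lambda_2 m.
\]

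The only mildly delicate step is justifying that Lemma~\ref{lem:refit:gamma} applies verbatim to the refit $\tilde \gamma$ with support inside $\tilde M\setminus(a,b)$ rather than $\hat M_2$: this is where I expect the main (modest) obstacle. The point is that the support $\tilde M$ contains $\hat M_2$ by construction, so $\tilde \gamma$ is a less constrained refit than the version built on $\hat M_2$; its estimation error is dominated by $\phi_{\max}^{1/2}\phi_{\min}^{-2}\lambda_2\sqrt{|\tilde M|}$ in $\ell_2$ and by $\phi_{\max}^{1/2}\phi_{\min}^{-2}\lambda_2\,|\tilde M|$ in $\ell_1$, and Lemma~\ref{lem:size_m1} together with the sparsity assumption in $\bf M$ gives $|\tilde M|\lesssim \phi_{\max}\phi_{\min}^{-2}m$, which is absorbed into the stated rate up to constants (tracked in the $\phi_{\max}/\phi_{\min}$ factors). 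Everything else is direct.
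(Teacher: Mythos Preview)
Your proof is correct and follows the same route as the paper: H\"older's inequality, the $\ell_\infty$ bound from the event $\Ecal_\theta$, and the $\ell_1$ bound on $\tilde w - w^*$ via Lemma~\ref{lem:refit:gamma}. The concern in your final paragraph is unnecessary: in the paper, $\tilde\gamma^{ab}$ is \emph{defined} (see \eqref{eq:estimation:step2:S}) as the refit on $\tilde M\setminus(a,b)$, and Lemma~\ref{lem:refit:gamma} is stated directly for that object, so no additional justification is needed.
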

\begin{proof}
Using H\"older's inequality, we have  
\[
\abr{
  \rbr{\tilde w - w^*}^\top  
  \rbr{ \EE_n\sbr{\Gamma(x_i)}\theta^{ab,*} + \EE_n[g(x_i)] }
}
\leq \tilde r_{1\gamma} \cdot 
\norm{\EE_n\sbr{\Gamma(x_i)}\theta^{ab,*} + \EE_n[g(x_i)]}_\infty .
\]
On the event $\Ecal_\theta$, we have
$\norm{\EE_n\sbr{\Gamma(x_i)}\theta^{ab,*} + \EE_n[g(x_i)]}_\infty
\leq \lambda_1/2$.
Finally, using Lemma~\ref{lem:refit:gamma}, we
conclude that
\[
\abr{ \rbr{\tilde w - w^*}^\top  \rbr{ \EE_n\sbr{\Gamma(x_i)}\theta^{ab,*}
    + \EE_n[g(x_i)] } } \lesssim
    \phi_{\max}^{1/2}\phi_{\min}^{-2}\cdot\lambda_1\lambda_2 m.
\]
\end{proof}

\begin{lemma}
\label{lem:L3}
Under the conditions of Theorem~\ref{thm:main}, we have
\begin{equation*}
\begin{aligned}
w^{* \top} \EE_n\sbr{\Gamma(x_i)} \rbr{\tilde \theta^{ab} - \theta^{ab,*}} =
\EE_n\sbr{\eta_{1i}\vpaxi[,ab] + \eta_{2i}\vpbxi[,ab]} & \rbr{\tilde \theta_{ab} - \theta_{ab}^{ab,*}} \\
& + \Ocal\rbr{\phi_{\max}^{1/2}\phi_{\min}^{-2}\cdot\lambda_1\lambda_2 m}.
\end{aligned}
\end{equation*}

\end{lemma}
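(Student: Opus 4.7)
The plan is to exploit the quadratic structure $\Gamma(x) = \varphi_1(x)\varphi_1(x)^\top + \varphi_2(x)\varphi_2(x)^\top$ so that multiplying by $w^*$ on the left produces exactly the residuals $\eta_{1i}, \eta_{2i}$ that appear in the target expression. Concretely, since $w^*_{ab}=1$ and $w^*_{-ab}=-\gamma^{ab,*}$, the definitions in Assumption \textbf{M} give $w^{*\top}\varphi_1(x_i)=\eta_{1i}$ and $w^{*\top}\varphi_2(x_i)=\eta_{2i}$, whence
\[
w^{*\top}\EE_n[\Gamma(x_i)] = \EE_n[\eta_{1i}\varphi_1(x_i)^\top + \eta_{2i}\varphi_2(x_i)^\top].
\]

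Next, I would split the right-hand side of the identity above according to the coordinate $(a,b)$ versus its complement $-ab$. Writing $\tilde\theta^{ab}-\theta^{ab,*} = (\tilde\theta_{ab}-\theta^{ab,*}_{ab})\,e_{ab} + (\tilde\theta^{ab}_{-ab}-\theta^{ab,*}_{-ab})$ and applying this decomposition yields
\[
w^{*\top}\EE_n[\Gamma(x_i)](\tilde\theta^{ab}-\theta^{ab,*}) = \EE_n[\eta_{1i}\varphi_{1,ab}(x_i)+\eta_{2i}\varphi_{2,ab}(x_i)]\,(\tilde\theta_{ab}-\theta^{ab,*}_{ab}) + R,
\]
where $R := \EE_n[\eta_{1i}\varphi_{1,-ab}(x_i)+\eta_{2i}\varphi_{2,-ab}(x_i)]^\top(\tilde\theta^{ab}_{-ab}-\theta^{ab,*}_{-ab})$ is the remainder to be controlled.

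The remainder $R$ is then bounded by H\"older's inequality:
\[
|R| \leq \bignorm{\EE_n[\eta_{1i}\varphi_{1,-ab}(x_i)+\eta_{2i}\varphi_{2,-ab}(x_i)]}_\infty \cdot \norm{\tilde\theta^{ab}-\theta^{ab,*}}_1.
\]
On the event $\Ecal_\gamma$, the first factor is at most $\lambda_2/2$ by definition. For the second factor, Lemma~\ref{lem:refit} gives $\norm{\tilde\theta^{ab}-\theta^{ab,*}}_1 \lesssim \phi_{\max}^{1/2}\phi_{\min}^{-2}\lambda_1 m$. Combining these bounds yields $|R|\lesssim \phi_{\max}^{1/2}\phi_{\min}^{-2}\lambda_1\lambda_2 m$, which is exactly the claimed order.

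I do not anticipate any genuine obstacle: the algebraic identity $w^{*\top}\Gamma(x) = \eta_{1}\varphi_1(x)^\top+\eta_{2}\varphi_2(x)^\top$ is what makes $w^*$ a \emph{decorrelating} direction in the first place, and the remainder estimate is a textbook $\ell_1$-$\ell_\infty$ H\"older argument combining the KKT event $\Ecal_\gamma$ (which was set up precisely to dominate this quantity) with the already-established refitted-estimator rate in Lemma~\ref{lem:refit}. The only mild care needed is to ensure that the $(a,b)$ coordinate is properly separated from $-ab$ in the decomposition so that H\"older is applied only to the nuisance part.
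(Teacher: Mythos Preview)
Your proposal is correct and follows essentially the same approach as the paper: exploit $w^{*\top}\Gamma(x_i)=\eta_{1i}\varphi_1(x_i)^\top+\eta_{2i}\varphi_2(x_i)^\top$, split off the $(a,b)$ coordinate, and bound the $-ab$ remainder via H\"older using the event $\Ecal_\gamma$ together with the $\ell_1$ rate from Lemma~\ref{lem:refit}. If anything, you spell out the decorrelating identity $w^{*\top}\varphi_j(x_i)=\eta_{ji}$ more explicitly than the paper does.
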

\begin{proof}
  We have that
\begin{equation*}
\begin{aligned}
w^{* \top} \EE_n\sbr{\Gamma(x_i)} \rbr{\tilde \theta^{ab} - \theta^{ab,*}}  
& = \EE_n\sbr{\rbr{\eta_{1i}\vpaxi+\eta_{2i}\vpbxi}^\top } \rbr{\tilde \theta^{ab} - \theta^{ab,*}}     \\
& = \EE_n\sbr{\eta_{1i}\vpaxi[,ab]+\eta_{2i}\vpbxi[,ab]} \rbr{\tilde \theta_{ab}^{ab} - \theta_{ab}^{ab,*}}     \\
&\quad  + \EE_n\sbr{\rbr{\eta_{1i}\vpaxi[,-ab]+\eta_{2i}\vpbxi[,-ab]}^\top } \rbr{\tilde \theta_{-ab}^{ab} - \theta_{-ab}^{ab,*}}.     
\end{aligned}
\end{equation*}
For the second term, we have
\begin{equation*}
\begin{aligned}
&\abr{\EE_n\sbr{\rbr{\eta_{1i}\vpaxi[,-ab]+\eta_{2i}\vpbxi[,-ab]}^\top } \rbr{\tilde \theta_{-ab}^{ab} - \theta_{-ab}^{ab}}} \\
& \qquad \leq \tilde r_{1\theta} \cdot \norm{\EE_n\sbr{\eta_{1i}\vpaxi[,-ab]+\eta_{2i}\vpbxi[,-ab]}}_{\infty} \\
&\qquad \leq \tilde r_{1\theta} \cdot \lambda_2/2,
\end{aligned}
\end{equation*}
since we are working on the event $\Ecal_\gamma$.  Since
$\tilde{r}_{1\theta} \leq
\phi_{\max}^{1/2}\phi_{\min}^{-2}\cdot\lambda_1 m$, combining with the
display above, the proof is complete.
\end{proof}

\begin{lemma}
\label{lem:L4}
Under the assumptions {\bf M} and {\bf R}, we have that 
\[
\sqrt{n} \cdot w^{* \top}\rbr{ \EE_n\sbr{\Gamma(x_i)\theta^{ab,*} +
  g(x_i)} } \longrightarrow_D N\rbr{0, H(\theta^*)},
\]
where
$H(\theta^*) = \Var\rbr{w^{* \top}\rbr{ {\Gamma(x_i)\theta^{ab,*} + g(x_i)} }}$.
\end{lemma}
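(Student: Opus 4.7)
The plan is to recognize this statement as a direct application of the classical central limit theorem to the i.i.d. sequence
\[
Z_i = w^{*\top}\bigl(\Gamma(x_i)\theta^{ab,*} + g(x_i)\bigr), \qquad i=1,\ldots,n,
\]
so that the main work is just (a) verifying $\EE[Z_i]=0$, and (b) controlling $\Var(Z_i)$.

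First, I would establish that the score matching rule $S^{ab}(x,\theta)$ is proper for the conditional family, so that its expectation under $P_{\theta^*}$ is minimized at $\theta^{ab,*}$. Since $S^{ab}(x,\theta) = \tfrac12 \theta^\top \Gamma(x)\theta + \theta^\top g(x) + c(x)$ is smooth and convex in $\theta$, the first-order condition at the minimizer yields
\[
\EE\bigl[\Gamma(x_i)\bigr]\theta^{ab,*} + \EE[g(x_i)] = 0,
\]
and hence $\EE[Z_i]=w^{*\top}\EE[\Gamma(x_i)\theta^{ab,*}+g(x_i)]=0$. This is the standard observation that the score has mean zero at the true parameter; for score matching it is proved in \citet{hyvarinen2005estimation}.

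Next, I would verify that the scalar random variables $Z_i$ have finite variance. By Cauchy--Schwarz,
\[
\Var(Z_i) \leq \|w^*\|_2^{2}\cdot \EE\bigl\|\Gamma(x_i)\theta^{ab,*}+g(x_i)\bigr\|_2^{2}.
\]
Assumption \textbf{R} bounds the two terms $\EE\|\Gamma(x_i)\theta^{ab,*}\|^2$ and $\EE\|g(x_i)\|^2$; for $\|w^*\|_2$, recall that $w^*=(1,-\gamma^{ab,*})$ with $\gamma^{ab,*}$ defined in assumption \textbf{M} as the population solution to the least-squares problem \eqref{eq:estimation:step2:true}. Combined with the sparse-eigenvalue bound of assumption \textbf{SE}, $\|\gamma^{ab,*}\|_2$ (and therefore $\|w^*\|_2$) is finite, so $H(\theta^*)=\Var(Z_1)<\infty$.

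Finally, since $\{Z_i\}_{i=1}^n$ are i.i.d.\ with mean zero and finite variance $H(\theta^*)$, the Lindeberg--Lévy central limit theorem gives
\[
\sqrt{n}\cdot w^{*\top}\EE_n\bigl[\Gamma(x_i)\theta^{ab,*}+g(x_i)\bigr] = \frac{1}{\sqrt{n}}\sum_{i=1}^{n} Z_i \longrightarrow_D N\bigl(0, H(\theta^*)\bigr),
\]
which is exactly the claim. The only nontrivial step conceptually is the properness argument that produces $\EE[Z_i]=0$; once that is in place, the rest is a textbook CLT and a moment bound that follows immediately from assumptions \textbf{M}, \textbf{SE}, and \textbf{R}. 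No high-dimensional concentration is needed here since $w^*$ is a deterministic (population) object and the dimension $p$ enters only through the validity of those assumptions.
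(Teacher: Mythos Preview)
Your proposal is correct and follows essentially the same route as the paper: define $Z_i = w^{*\top}(\Gamma(x_i)\theta^{ab,*}+g(x_i))$, argue that $\EE[Z_i]=0$ via properness of the score rule and that $\Var(Z_i)<\infty$ via the moment assumption, then apply the classical CLT; the paper's proof is even terser, citing \citet{Forbes2013Linear} for both facts. One small clean-up: you invoke assumption \textbf{SE} to control $\|\gamma^{ab,*}\|_2$, but the lemma is stated only under \textbf{M} and \textbf{R}---since $\gamma^{ab,*}$ is a fixed population vector its norm is automatically finite, so that appeal can simply be dropped.
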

\begin{proof}
Let $Z_i = w^{* \top}\rbr{\Gamma(x_i)\theta^{ab,*} + g(x_i)}$. 
Then
\begin{align}
  \sqrt{n} \cdot w^{* \top}\rbr{ \EE_n\sbr{\Gamma(x_i)\theta^{ab,*} + g(x_i)} }
  = \frac{1}{\sqrt{n}} \sum_i Z_i.
\end{align}
From \cite{Forbes2013Linear}, we have that $\EE[Z_i] = 0$ and
$\Var(Z_i)$ is finite.  An application of the central limit theorem
completes the proof.
\end{proof}

\begin{lemma}
  \label{lem:variance_consistent}  
  The variance estimator $\hat V_{ab}$ is consistent, $\hat V_{ab} \rightarrow_P V_{ab}$.
\end{lemma}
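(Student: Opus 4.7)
The plan is to rewrite $\hat V_{ab}$ in a form that makes its convergence transparent, then control each factor separately. My starting point is an observation used implicitly in the proof of Theorem~\ref{thm:main}: the first-order conditions for the auxiliary problem~\eqref{eq:estimation:step2:S} imply
\begin{equation*}
\EE_n\sbr{\Gamma(x_i)}_{\tilde M}\,\tilde w_{\tilde M} \;=\; \tilde\sigma_n\,e_{ab},
\end{equation*}
where $\tilde w$ has $\tilde w_{ab}=1$, $\tilde w_{\tilde M \setminus(a,b)} = -\tilde\gamma^{ab}_{\tilde M\setminus(a,b)}$, and zero elsewhere, and $\tilde\sigma_n$ is the empirical-moment analog of $\sigma_n$ with $\tilde\gamma^{ab}$ in place of $\gamma^{ab,*}$. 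Inverting, $(\EE_n\sbr{\Gamma(x_i)}_{\tilde M})^{-1}e_{ab} = \tilde\sigma_n^{-1}\tilde w_{\tilde M}$, so
\begin{equation*}
\hat V_{ab} \;=\; \tilde\sigma_n^{-2}\,\tilde w_{\tilde M}^\top\, Z\,\tilde w_{\tilde M}.
\end{equation*}
Using the score-matching identity $\EE\sbr{\Gamma(x_i)\theta^{ab,*}+g(x_i)}=0$, the target reads $V_{ab}=\sigma_{ab}^{-2}\,w^{*\top}\Sigma^* w^*$ with $\Sigma^*=\EE\sbr{(\Gamma(x_i)\theta^{ab,*}+g(x_i))(\Gamma(x_i)\theta^{ab,*}+g(x_i))^\top}$, so the lemma reduces to three convergences.

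I would verify these in turn. First, $\tilde\sigma_n\longrightarrow_P \sigma_{ab}$: decompose $\tilde\sigma_n - \sigma_{ab} = (\sigma_n - \sigma_{ab}) + (\tilde\sigma_n - \sigma_n)$; the first summand is $o_\PP(1)$ by the law of large numbers using assumption~{\bf R}, while the second is bounded by $\|\tilde\gamma^{ab}-\gamma^{ab,*}\|_1$ times an $O_P(1)$ max-norm of an empirical mean, and is $o_\PP(1)$ by Lemma~\ref{lem:refit:gamma}. Moreover, $\sigma_{ab}$ is bounded away from zero by $\phi_{\min}$, since the orthogonality of $\eta_{ji}$ to $\varphi_{j,-ab}$ identifies $\EE\sbr{\sigma_n}=\EE\sbr{\eta_{1i}^2+\eta_{2i}^2}$ and assumption~{\bf SE} gives a lower bound. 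Second, $\|\tilde w-w^*\|_1\leq\|\tilde\gamma^{ab}-\gamma^{ab,*}\|_1=o_\PP(1)$ by Lemma~\ref{lem:refit:gamma}. Third, $\|Z-\Sigma^*\|_{\max}=o_\PP(1)$ on the $\tilde M\times\tilde M$ block: writing $\Gamma(x_i)\tilde\theta^{ab}+g(x_i) = (\Gamma(x_i)\theta^{ab,*}+g(x_i))+\Gamma(x_i)(\tilde\theta^{ab}-\theta^{ab,*})$, the leading empirical covariance concentrates around $\Sigma^*$ in max-norm at rate $\sqrt{\log p/n}$ under assumption~{\bf R} and standard sub-exponential-type concentration, while the residual cross-terms are controlled via the $\ell_2$-bound of Lemma~\ref{lem:refit} combined with the restricted sparse-eigenvalue bound from Lemma~\ref{lemma:sample_sparse_eigenvalue}.

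Finally I would assemble the pieces through the $\ell_1$--$\ell_\infty$ inequality
\begin{equation*}
\bigl|\tilde w^\top Z\tilde w - w^{*\top}\Sigma^* w^*\bigr| \leq \|Z\|_{\max}\bigl(\|\tilde w\|_1+\|w^*\|_1\bigr)\|\tilde w-w^*\|_1 + \|w^*\|_1^2\,\|Z-\Sigma^*\|_{\max}.
\end{equation*}
Both $\|\tilde w\|_1$ and $\|w^*\|_1$ are of order $\sqrt m/\phi_{\min}$, since each is supported on a set of size $\Ocal(m)$ (Lemma~\ref{lem:size_m1} for $\tilde w$, assumption~{\bf M} for $w^*$) with $\ell_2$ norm controlled by the sparse-eigenvalue lower bound. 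The resulting error is $\lesssim_P m\sqrt{\log p/n}=o(1)$ in the regime of Corollary~\ref{corollary:normality}; combining with the continuous mapping theorem applied to $\tilde\sigma_n^{-2}$ yields $\hat V_{ab}\longrightarrow_P V_{ab}$. The main obstacle I anticipate is the cross-term in step three: a crude bound via $\|\tilde\theta^{ab}-\theta^{ab,*}\|_1$ together with an elementwise bound on $\Gamma(x_i)$ is too loose, and one must instead proceed as in the proof sketch of Theorem~\ref{thm:main} by combining the $\ell_2$-error bound of Lemma~\ref{lem:refit} with the restricted sparse-eigenvalue control on the $\tilde M\times\tilde M$ block to keep the error within the tolerated order.
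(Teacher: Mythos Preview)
Your proposal is correct and follows essentially the same approach as the paper: both arguments reduce $\hat V_{ab}$ to the product of $\tilde\sigma_n^{-2}$ and a plug-in empirical second moment, then establish consistency of each factor via H\"older's inequality combined with the $\ell_1$ rates from Lemmas~\ref{lem:refit} and~\ref{lem:refit:gamma}. Your explicit use of the KKT identity for~\eqref{eq:estimation:step2:S} to rewrite $(\EE_n[\Gamma(x_i)]_{\tilde M})^{-1}e_{ab}=\tilde\sigma_n^{-1}\tilde w_{\tilde M}$ makes the structure clearer than in the paper, which simply asserts the plug-in form; and whereas the paper bounds the scalar difference directly via $a^2-b^2=(a-b)(a+b)$, you pass through the matrix max-norm $\|Z-\Sigma^*\|_{\max}$ first---a slightly longer but equally valid route.
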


\begin{proof}
  The variance estimator is obtained by using the second sample
  moment, and replacing true $\theta^{ab, *}, \gamma^{ab,*}$ with
  $\tilde \theta^{ab}, \tilde \gamma^{ab}$. We show the consistency of
  $\hat V_{ab}$ by showing the consistency of the estimator for
  $\sigma_n$ and
  $\text{Var}\big( w^{*,T}(\Gamma(x_i)\theta^{ab,*} + g(x_i)) \big)$,
  respectively.

  \paragraph{Step 1.} We can write
\begin{equation*}
\begin{aligned}
\sigma_n &= \mathbb{E}_n \big[\eta_{1i}\varphi_{1,ab}(x_i) + \eta_{2i}\varphi_{2,ab}(x_i)\big] \\
&= \mathbb{E}_n \big[w^{*,\T}\varphi_1(x_i) \cdot \varphi_{1,ab}(x_i) + w^{*,\T}\varphi_2(x_i) \cdot \varphi_{2,ab}(x_i)] \\
&= w^{*\T} \cdot \mathbb{E}_n [\Gamma(x_i)] \cdot e_{ab}.
\end{aligned}
\end{equation*}
Let
$\sigma = \EE[\sigma_n] = w^{*\T} \cdot \mathbb{E} [\Gamma(x_i)]
\cdot e_{ab}$ denote the population version of $\sigma_n$ and
$\tilde \sigma_n = \tilde w^{\top} \cdot \mathbb{E}_n [\Gamma(x_i)] \cdot
e_{ab}$ the sample version. With high probability we have that 
\begin{equation*}
\begin{aligned}
| \tilde \sigma_n - \sigma |  &\leq | \tilde \sigma_n - \sigma_n | + | \sigma_n - \sigma | \\
&\leq \Big|(\tilde w - w^*)^{\T} \cdot \mathbb{E}_n [\Gamma(x_i)] \cdot e_{ab} \Big| 
	+ \Big| {w^*}^{\T} \cdot \big[ \mathbb{E}_n [\Gamma(x_i)] - \mathbb{E} [\Gamma(x_i)] \big] \cdot e_{ab} \Big| \\
& \leq \|\tilde w - w^*\|_1 \cdot \big\|\mathbb{E}_n [\Gamma(x_i)] \cdot e_{ab} \big\|_\infty
	+ \| w^* \|_1 \cdot \big\| \big[ \mathbb{E}_n [\Gamma(x_i)] - \mathbb{E} [\Gamma(x_i)] \big] \cdot e_{ab} \big\|_\infty\\
& \lesssim  \lambda_2 m \cdot (C + \sqrt{\log p/n}) + m \cdot \sqrt{\log p/n} = o_P(1).
\end{aligned}
\end{equation*}

\paragraph{Step 2.} We estimate the variance of $w^{*\top}\big(\Gamma(x_i)\theta^{ab,*} + g(x_i)\big)$.
Since
\[
  \EE\sbr{w^{*\top}\big(\Gamma(x_i)\theta^{ab,*} + g(x_i)\big)} = 0,
\]
we can use the second sample moment to estimate the variance. As above, we plug in $\tilde \theta^{ab}$ and
$\tilde \gamma^{ab}$, to obtain that 
\begin{equation*}
\begin{aligned}
& \Bigg|  \mathbb{E}_n\bigg\{ \tilde w^\T\Big(\Gamma(x_i)\tilde\theta^{ab} + g(x_i)\Big) \bigg\}^2 - \mathbb{E}_n\bigg\{ w^{*\T}\Big(\Gamma(x_i)\theta^{ab,*} + g(x_i)\Big) \bigg\}^2\Bigg|  \\
& \qquad =  \Bigg| \EE_n \bigg\{ \tilde w^\T\big(\Gamma(x_i)\tilde\theta^{ab} + g(x_i)\big) - w^{*\T}\big(\Gamma(x_i)\theta^{ab,*} + g(x_i)\big) \bigg\}  \\
& \qquad  \quad \quad \quad  \quad \quad \quad \cdot \bigg\{\tilde w^\T\big(\Gamma(x_i)\tilde\theta^{ab} + g(x_i)\big) + w^{*\T}\big(\Gamma(x_i)\theta^{ab,*} + g(x_i)\big) \bigg\}  \Bigg| \\
& \qquad  \lesssim   \EE_n \bigg| \tilde w^{\T} \rbr{\Gamma(x_i) \tilde \theta^{ab}  +  g(x_i) }     - w^{*\T} \rbr{\Gamma(x_i)\theta^{ab,*} + g(x_i) }  \bigg| \\
& \qquad \lesssim  \EE_n \bigg| (\tilde w - w^*)^\top \rbr{\Gamma(x_i)\theta^{ab,*} + g(x_i) } + \tilde w^\top \Gamma(x_i)(\tilde \theta^{ab} - \theta^{ab,*}) \bigg| \\
& \qquad \lesssim { \|\tilde w - w^*\|_1 \cdot \EE_n\Big\|\Gamma(x_i)\theta^{ab,*} + g(x_i)\Big\|_\infty +  
	\|\tilde \theta^{ab} - \theta^{ab,*}\|_1 \cdot \EE_n\Big\| \tilde w^\top \Gamma(x_i) \Big\|_\infty }\\
& \qquad = o_P(1).
\end{aligned}
\end{equation*}
Combining the results of the two steps, completes the proof.
\end{proof}

\paragraph{Proof of Theorem \ref{thm:simultaneous}}
Denote  
\begin{equation}
\label{eq:def_W0}
W_0 = \max_{b \in V_a} \frac{1}{\sqrt n} \sum_{i=1}^n z_{iab} e_{i}
\end{equation}
as the counterpart to $\tilde W$.  Let
\begin{equation}
\label{eq:def_T0}
T_0 = \max_{b \in V_a} \frac{1}{\sqrt n} \sum_{i=1}^n z_{iab} \quad \text{and} \quad \tilde T = \max_{b \in V_a} \frac{1}{\sqrt n} \sum_{i=1}^n \tilde z_{iab}. 
\end{equation}
Denote 
\begin{equation}
\Delta = \max_{b,c \in V_a} \bigg| \frac 1n \sum_{i=1}^n \gamma_{abc}(x_i) \bigg|,
\end{equation}
where $\gamma_{abc}(x_i)$ is defined in assumption {\bf RR}. 
In order to apply Theorem 3.2 in \cite{Chernozhukov2013Gaussian},
we check the following conditions:
\begin{enumerate}
\item $\PP(\Delta \geq n^{-c}) \leq n^{-c}$.
\item $\PP( |T_0 - \tilde T| \geq n^{-c} ) \leq p^{-c}$.
\item With probability at least $1-p^{-c}$, $\PP_e( |W_0 - \tilde W| \geq n^{-c} ) \leq n^{-c}$.
Here $\PP_e$ denotes the probability with respect to $\{e_i\}_{i=1}^n$, conditionally on the observed data.
\end{enumerate}

We verify the first condition by applying Lemma A.1 in
\citet{Geer2008High}. By the definition of $\gamma_{abc}(x_i)$, clearly
we have $\EE\sbr{\gamma_{abc}(x_i)} = 0$. Together with assumption {\bf RR},
we apply Lemma A.1 in \cite{Geer2008High} and obtain
\begin{equation*}
\EE[\Delta] \leq \sqrt{\frac{4\tau_n^2\log{(2p)}}{n}} + \frac{2\eta_n\log{(2p)}}{n}.
\end{equation*}
According to \eqref{eq:asmp_regime_simultaneous}, for sufficiently
large $n$, we have $\EE[\Delta] \leq n^{-2c}$, for some $c > 0$. By
Markov inequality,
\[
  \PP(\Delta \geq n^{-c}) \leq n^c \cdot \EE[\Delta] \leq n^{-c},
\]
which verifies the first condition.

Next, we verify the second condition. For a fixed $b \in V_a$, under
the null, we have
\begin{equation*}
\begin{aligned}
\bigg| \frac{1}{\sqrt n} \sum_{i=1}^n z_{iab} - \frac{1}{\sqrt n} \sum_{i=1}^n \tilde z_{iab} \bigg| &\leq \sqrt{n} \bigg| (\sigma_{ab}^{-1} - \sigma_{n,ab}^{-1}) \cdot  w_{ab}^{*\T} \Big( \EE_n\big[\Gamma_{ab}(x_i)  \theta^{ab,*} + g_{ab}(x_i)\big] \Big) \bigg| \\
& \qquad + \sqrt{n} \bigg| \sigma_{n,ab}^{-1} \cdot (w_{ab}^* - \tilde w_{ab})^\T \Big( \EE_n\big[\Gamma_{ab}(x_i)  \theta^{ab,*} + g_{ab}(x_i)\big] \Big) \bigg| \\
&\leq \sqrt{n} C\cdot\lambda_1\lambda_2 m \\
&\leq n^{-c},
\end{aligned}
\end{equation*}
with probability at least $1-p^{-c-1}$, where the second inequality
comes from the consistency of $\sigma_n$, Lemma \ref{lem:L2},
and Lemma \ref{lem:L4}.  We then have
\begin{equation*}
\begin{aligned}
\PP( |T_0 - \tilde T| \geq n^{-c} ) & \leq \PP\Big(\bigcup_{b \in V_a} \Big\{\frac{1}{\sqrt n} \big|\sum_{i=1}^n z_{iab} - \sum_{i=1}^n \tilde z_{iab} \big| \geq n^{-c} \Big\} \Big) \\
&\leq \sum_{b \in V_a} \PP\Big( \frac{1}{\sqrt n} \big | \sum_{i=1}^n z_{iab} - \sum_{i=1}^n \tilde z_{iab} \big| \geq n^{-c} \Big) \\
&\leq p \cdot p^{-c-1} = p^{-c},
\end{aligned}
\end{equation*}
which verifies the second condition.

Finally, we verify the third condition. We have
\begin{equation}
\begin{aligned}
\label{eq:diff_W}
\PP_e( |W_0 - \tilde W| \geq n^{-c} ) & \leq \PP_e\Big(\max_{b \in V_a} \Big\{\frac{1}{\sqrt n} \big|\sum_{i=1}^n (z_{iab} - \tilde z_{iab}) e_i \big| \Big\} \geq n^{-c}  \Big). \\
%&\leq \sum_{b \in V_a} \PP_e\Big( \frac{1}{\sqrt n} \big | \sum_{i=1}^n (x_{ib} - \tilde x_{ib})e_i \big| \geq n^{-c} \Big).
\end{aligned}
\end{equation}
Denote
$Z_b = \frac{1}{\sqrt n} \sum_{i=1}^n (z_{iab} - \tilde z_{iab})e_i$.
Under the null we have
\begin{equation*}
\begin{aligned}
z_{iab} - \tilde z_{iab} &= \Big[ (\sigma_{ab}^{-1} - \sigma_{n,ab}^{-1}) \cdot  w_{ab}^{* \top} \big( \Gamma_{ab}(x_i)  \theta^{ab,*} + g_{ab}(x_i)\big) \Big]  \\
& \qquad\qquad\qquad + \Big[ \sigma_{n,ab}^{-1} \cdot (w_{ab}^* - \tilde w_{ab})^\top \big( \Gamma_{ab}(x_i)  \theta^{ab,*} + g_{ab}(x_i) \big) \Big].
\end{aligned}
\end{equation*}
According to Lemma A.1 in \cite{Chernozhukov2013Gaussian}, we have
\begin{equation*}
\EE\bigg[ \frac 1n \Big\|\sum_{i=1}^n \Big(\Gamma_{ab}(x_i)  \theta^{ab,*} + g_{ab}(x_i)\Big)e_i \Big\|_\infty \bigg] \lesssim \sigma_0\sqrt{\frac{\log p}{n}} + \frac{M\log p}{n},
\end{equation*}
uniformly for each $b \in V_a$, where
\begin{equation}
\sigma_0^2 = \max_{j} \frac 1n \sum_{i=1}^n \Big[ \Big(\Gamma_{ab}(x_i)  \theta^{ab,*} + g_{ab}(x_i)\Big)e_i \Big]_j^2,
\end{equation}
and 
\begin{equation}
M^2 = \EE \bigg[ \max_{i}  \Big\| \Big(\Gamma_{ab}(x_i)  \theta^{ab,*} + g_{ab}(x_i)\Big)e_i \Big\|_\infty \bigg]^2. 
\end{equation}
We then have
\begin{equation*}
\begin{aligned}
\EE |Z_b| &\leq \frac{1}{\sqrt n} \Big((\sigma_{ab}^{-1} - \sigma_{n,ab}^{-1}) \cdot \|w_{ab}^*\|_1 + \sigma_{n,ab}^{-1} \cdot \|w_{ab}^* - \tilde w_{ab}\|_1\Big) \\
& \qquad\qquad \times \EE \bigg[ \Big\|\sum_{i=1}^n \Big(\Gamma_{ab}(x_i)  \theta^{ab,*} + g_{ab}(x_i)\Big)e_i \Big\|_\infty \bigg] \\
&\leq \frac{C}{\sqrt n} \cdot \lambda m \cdot \bigg(\sigma_0\sqrt{\frac{\log p}{n}} + \frac{M\log p}{n}\bigg) \cdot n \\
&\leq n^{-2c},
\end{aligned}
\end{equation*}
uniformly for each $b \in V_a$ with probability at least $1-p^{-c}$, where the second inequality
comes from the consistency of $\sigma_n$ and Lemma
\ref{lem:refit:gamma}. Applying Markov inequality again, we obtain
\[
  \PP_e(|Z_b| \geq n^{-c}) \leq n^c \cdot \EE|Z_b| \leq n^{-c}.
\]
uniformly for each $b \in V_a$ with probability at least $1-p^{-c}$. 
Plugging back to \eqref{eq:diff_W}, we obtain
\begin{equation}
\begin{aligned}
\PP_e( |W_0 - \tilde W| \geq n^{-c} ) &\leq  \PP_e\Big( \max_{b \in V_a} |Z_b| \geq n^{-c} \Big) \leq n^{-c}
\end{aligned}
\end{equation}
with probability at least $1-p^{-c}$, which verifies the third condition.

With the three conditions verified and assumption {\bf RR}, we apply
Theorem 3.2 in \cite{Chernozhukov2013Gaussian} to obtain
\begin{equation*}
\sup_{\alpha \in (0,1)} \bigg| \PP\Big(\max_{b \in V_a} \sqrt{n} ( \tilde \theta_{ab} - \breve \theta_{ab} ) \geq c_{\tilde W}(\alpha) \Big) - \alpha  \bigg| = o(1),
\end{equation*}
which completes the proof.

\bibliography{19-383}

\end{document}